\crefname{section}{\textsection}{\textsection}
\crefname{subsection}{\textsection}{\textsection}
\DeclareMathAlphabet\mathbfcal{OMS}{cmsy}{b}{n}
\newtheorem{theorem}{Theorem}[section]
\newtheorem{lemma}[theorem]{Lemma}
\newtheorem{proposition}[theorem]{Proposition}
\theoremstyle{definition}
\newtheorem{remark}[theorem]{Remark} 
\numberwithin{equation}{section}
\DeclareMathOperator*{\esssup}{ess\,sup}
\newcommand{\one}{\mathds{1}}
\newcommand{\xv}{\mathbf{x}}
\newcommand{\dom}{\mathscr{D}}
\newcommand{\diff}{\mathrm{d}}
\newcommand{\eps}{\varepsilon}
\newcommand{\supp}{\mathrm{supp}\,}
\newcommand{\avg}[1]{\lf\langle #1 \ri\rangle}
\newcommand{\beq}{\begin{equation}}
\newcommand{\eeq}{\end{equation}}
\newcommand{\dist}{\mathrm{dist}}
\newcommand{\R}{\mathbb{R}}
\newcommand{\N}{\mathbb{N}}
\newcommand{\bdm}{\begin{displaymath}}
\newcommand{\edm}{\end{displaymath}}
\newcommand{\bdn}{\begin{eqnarray}}
\newcommand{\edn}{\end{eqnarray}}
\newcommand{\bay}{\begin{array}{c}}
\newcommand{\eay}{\end{array}}
\newcommand{\ben}{\begin{enumerate}}
\newcommand{\een}{\end{enumerate}}
\newcommand{\beqn}{\begin{eqnarray}}
\newcommand{\eeqn}{\end{eqnarray}}
\newcommand{\bml}[1]{\begin{multline} #1 \end{multline}}
\newcommand{\bmln}[1]{\begin{multline*} #1 \end{multline*}}
\newcommand{\lf}{\left}
\newcommand{\ri}{\right}
\newcommand{\disp}{\displaystyle}
\newcommand{\tx}{\textstyle}
\newcommand{\braketr}[2]{\lf\langle #1\lf|#2\ri. \ri\rangle}
\newcommand{\braketl}[2]{\lf.\lf\langle #1\ri|#2 \ri\rangle}
\newcommand{\meanlrlr}[3]{\lf\langle #1\lf|#2\ri|#3\ri\rangle}
\renewcommand{\leq}{\leqslant}
\renewcommand{\geq}{\geqslant}
\DeclareMathOperator*{\slim}{s\,-\,lim}
\title[Schr\"odinger operators with multiple Aharonov-Bohm fluxes]{Schr\"odinger operators with multiple Aharonov-Bohm fluxes}
\author{Michele Correggi}
\address{Dipartimento di Matematica, Politecnico di Milano, P.zza Leonardo da Vinci, 32, 20133, Milano, Italy}
\email{michele.correggi@gmail.com}
\urladdr{https://sites.google.com/view/michele-correggi}
\author{Davide Fermi}
\address{Dipartimento di Matematica, Politecnico di Milano, P.zza Leonardo da Vinci, 32, 20133, Milano, Italy\\
and Istituto Nazionale di Fisica Nucleare, Sezione di Milano, Italy}
\email{davide.fermi@polimi.it}
\urladdr{https://fermidavide.com}
\begin{document}

\begin{abstract} 
	 We study the Schr\"{o}dinger operator describing a two-dimensional quantum particle moving in presence of $ N \geqslant 1$  Aharonov-Bohm magnetic fluxes. We classify all the self-adjont realizations of such an operator, providing an explicit characterization of their domains and actions. Moreover, we examine their spectral and scattering properties, proving in particular the existence and completeness of wave operators in relation with the free dynamics.
\end{abstract}

\keywords{Schr\"{o}dinger operators with singular magnetic fields, Aharonov-Bohm potentials, anyons, scattering theory.}
\subjclass[2020]{47A07, 47A40, 81Q10, 81Q70, 81U99}

\maketitle

\section{Introduction and Main Results}

Back in 1959, Aharonov and Bohm \cite{AB59} predicted that the presence of a magnetic field would induce a phase shift in the wave function of a charged quantum particle, even if the particle is confined for all times to a space region where the magnetic field vanishes identically. Analogous theoretical considerations had previously been advanced by Ehrenberg and Siday \cite{ES49}. 
Undisputed experimental evidence of the Aharonov-Bohm (AB) effect was provided in 1986 by Tonomura {\it et al.} \cite{TO86}, who used micro-sized toroidal magnets coated in superconducting layers to minimize leakages of the magnetic field and thus realize perfect shielding for the electron wave function. Though a controversy about the fundamental interpretation of the AB effect somehow persists even nowadays \cite{K22,MV20}, the reality of the physical phenomenon is by now unquestionable. In fact, it appears to play a prominent role in many areas of condensed matter physics.
For example, sharply localized fluxes of AB-type occur in {\it type-II superconductors}, when strong magnetic fields pierce through an almost 2D layer \cite{Ab57}. Another prominent instance regards {\it anyons}, quasi-particles excitations of 2D electron gases in the fractional Quantum Hall regime which carry fractional charge and obey fractional statistics \cite{BK20,LM77,NL20,Wi82}. Notably, with a suitable gauge fixing, anyons can be described in terms of bosonic wave functions with 2-body AB interactions. Yet another research line, which is recently attracting attention, is that of {\it AB cages}, where the destructive interference produced by singular magnetic fluxes provides a localization mechanism for non-interacting particles \cite{BPC21,MVV22}.

With the above applications in mind, we consider a prototype model including a charged scalar particle and $N$ parallel ideal solenoids, each of infinite length and zero diameter like in the original AB setting. After factorization of the axial direction, the dynamics of the particle is determined by a Schr\"odinger operator in $\mathbb{R}^2$ of the form
	\begin{equation}\label{eq: HNintro}
		\left(- \,i \nabla + \sum_{n \,=\, 1}^{N}\alpha_n\; \frac{\lf( \xv - \xv_n \ri)^{\perp}}{\mathbf{x} - \mathbf{x}_{n}|^2} \right)^{\!2} .
	\end{equation}
Here, $ \xv = (x,y) \in \mathbb{R}^2 $ and $\mathbf{x}^{\perp} = (- y, x) $, while $ \xv_n $ and $\alpha_n \in \mathbb{R}$ identify the positions of the solenoids and the associated magnetic fluxes, respectively. It is worth noting that the magnetic field matching the vector potential in \eqref{eq: HNintro} is formally given by a sum of Dirac delta distributions, namely,
	\begin{equation*}
		B(\mathbf{x}) =  \sum_{n \,=\, 1}^{N}\, 2\pi \,\alpha_n\,\delta(\mathbf{x} - \mathbf{x}_{n})\,.
	\end{equation*}
Hence, $B$ has compact support and we can reasonably expect that scattering w.r.t. the free Laplacian is well-defined. 
In light of the above physical interpretation, we may also think of the formal operator introduced above as describing in suitable regimes a tracer particle moving in a gas of $ N $ anyons with heavy masses (or two-dimensional quasi-particles carrying localized magnetic fluxes), so that the latters may be modeled by fixed AB fluxes at the particles' positions $ \xv_n $, $ n = 1, \dots, N $ \cite{CDLR19,CLR17,CO18,Gi20,LR15,LR16,LS14,Ma96}.

The configuration with just one solenoid has an evident rotation symmetry and, as a consequence, it can be explicitly analyzed by decomposition in angular harmonics. Building on this, an exhaustive classification of all self-adjoint realizations in $L^2(\mathbb{R}^2)$ of the Schr\"odinger operator \eqref{eq: HNintro} with $N = 1$ was first derived using Krein-von Neumann theory by Adami and Teta \cite{AT98} (see also \cite{DS98} for a similar result, \cite{BDG11,DF21,DFNR20,DR17,DG20,DG21} for more details on the radial operators, and \cite{CO18,Ou94,MMO04} for connections with 2-anyons systems). These single-flux Hamiltonians include the Friedrichs realization and a 4-parameter family of singular perturbations thereof, describing \textsl{s-wave} and \textsl{p-wave} zero-range interactions. 
The spectral and scattering properties of these operators were further investigated in \cite{PR11,Ru83,Ta99,Y21}. General results on the scattering matrix and total cross section for Schr\"odinger operators related to the single flux Friedrichs Hamiltonian were derived in \cite{RY02,Ya03,Ya06} by means of stationary representation formulas and pseudo-differential methods.
	
On the other hand, the approximation in resolvent sense of the above mentioned Hamiltonians with zero-range interactions by means of non-singular electromagnetic potentials was discussed in \cite{MVG95,OP08,Ta01}. These works shed some light on the idealizations understood in the original AB set up, indicating in particular that Schr\"{o}dinger operators comprising zero-range potentials naturally emerge when considering resonant shielding potentials. On the other hand, the effects produced by regular magnetic perturbations have also been examined. The case of a uniform magnetic field, on top of the AB singularity, was studied in \cite{ESV02}, again by Krein-von Neumann methods. A different approach based on quadratic form techniques was first proposed in \cite{CO18}, and later extended in \cite{CF21,F22} to encompass generic, regular magnetic perturbations.
Let us also mention that, in the single flux setting, there are classical results on Hardy-type inequalities \cite{LW99,BDELL20} and dispersive estimates \cite{GK14}. Another research line regards studying the AB Hamiltonian in compact domains, examining the behavior of simple eigenvalues under variations of the flux position \cite{AFNN18,AN18} (see also \cite{FNOS23} for similar results in configurations with many coalescing poles).

Investigating cases with more than one flux ($N \geqslant 2$) is in general a harder task, due to the lack of exact solutions. Remarkably, a series expansion for the Green function associated to a 1D array of fluxes was derived in \cite{St89,St91}, by universal covering space techniques. Similar methods were employed in \cite{Na00,FG08} to produce some explicit formulas for the solutions of the eigenvalue problem. These results apparently support a previous conjecture of Aharonov, suggesting that an array of flux tubes would act as a repulsive barrier for low-energy particles.
Again by complex analysis techniques and direct computations, an analytic expression for the scattering amplitude for two opposite AB fluxes was obtained in \cite{BMO10}, focusing separately on the limit situations of small fluxes and small distance between the fluxes.

	The (semiclassical) scattering amplitude and the presence of resonances in the regime of large separation between two arbitrary singular fluxes were instead examined in \cite{AT11,AT14,IT01,Ta07,Ta08} by means of stationary methods.
	Regarding Schr\"odinger operators of the form \eqref{eq: HNintro} with more than two fluxes, let us mention that the leading order singularity of the wave trace and resonances for the wave propagator were studied in \cite{Y22}. We also highlight that a rigorous diamagnetic inequality for Schr\"odinger operators of the form \eqref{eq: HNintro} was derived in \cite{MOR04}, and used contextually to deduce Lieb-Thirring and CLR-type eigenvalue estimates.
Another valuable theoretical approach to the analysis of configurations with several magnetic fluxes is the mean field approximation. Making reference to this regime, by perturbative and numerical computations, the scattering from a periodic array of fluxes was examined in \cite{KW94} and the density of states in the thermodynamic limit was studied in \cite{DFO95,DFO97}.
	Finally, let us remark that the AB effect was also investigated for Schr\"odinger and Klein-Gordon dynamics with electromagnetic potentials in the exterior of compact obstacles. In particular, high-velocity estimates for the scattering operator and inverse scattering results were deduced by means of time dependent techniques in \cite{BW09,BW16}.

All the works on multiple fluxes mentioned above refer to the Friedrichs realization of the operator \eqref{eq: HNintro}. This means that the attention is always restricted to wave functions vanishing at the points $\mathbf{x}_{n}$, which amounts to consider only perfectly shielded solenoids. Nonetheless, the single flux studies cited before give a strong indication of the fact that other self-adjoint realizations of \eqref{eq: HNintro} in $L^2(\mathbb{R}^2)$ should exist and that they cannot be neglected in some specific physical contexts.
It goes without saying that a more faithful physical representation can be obtained taking also into account relativistic effects. In this regard, to describe the motion of a spin $1/2$ charged particle confined to a 2D slab, punctured by ideal solenoids, some authors have considered Pauli and Dirac operators analogous to the Schr\"odinger one introduced in \eqref{eq: HNintro}. The self-adjointness and spectral features of these operators were investigated in \cite{EV02,GS04,KA14,Pe05,Pe06}, focusing especially on the degeneracy of the zero-energy modes and on the related Aharonov-Casher formula.

In this work, we first characterize all the admissible self-adjoint realizations in $L^2(\mathbb{R}^2)$ of the Schr\"odinger operator \eqref{eq: HNintro} using quadratic form techniques and the Kre{\u\i}n theory of self-adjoint extensions. Next, we use classical resolvent arguments to investigate the spectral and scattering properties of the Hamiltonians thus obtained.

\bigskip

\noindent
{\footnotesize
\textbf{Acknowledgments}
This work has been supported by MUR grant Dipartimento di Eccellenza 2023--2027. MC acknowledges the support of PNRR Italia Domani and Next Generation Eu through the ICSC National Research Centre for High Performance Computing, Big Data and Quantum Computing. DF acknowledges the support of the European Research Council (ERC) under the European Union's Horizon 2020 research and innovation programme (ERC CoG {\it UniCoSM}, grant agreement n. 724939) and of INdAM-GNFM Progetto Giovani 2020 ``\textsl{Emergent Features in Quantum Bosonic Theories and Semiclassical Analysis}''.}

\subsection{The model: Schr\"{o}dinger operators with multiple Aharonov-Bohm fluxes}

As anticipated our main goal is to study the well-posedness as a self-adjoint operator of the Hamiltonian describing a quantum two-dimensional spinless particle moving in presence of $ N \geq 1 $ AB fluxes, {\it i.e.}, the formal Schr\"{o}dinger operator
	\begin{equation}\label{eq: HalN}
		H_{N} = \left(- \,i \nabla + \tx\sum_{n \,=\, 1}^{N}\mathbf{A}_{n} \right)^{\!2} , 
		\qquad 
		\mathbf{A}_{n}(\mathbf{x}) = \alpha_n\;\frac{(\mathbf{x} - \mathbf{x}_{n})^{\perp}}{|\mathbf{x} - \mathbf{x}_{n}|^2}\,,
	\end{equation}
where, for each $n \in \{1,\dots,N\}$, $\mathbf{x}_{n} \!\in\! \mathbb{R}^2$ identifies the position of the $n$\textsuperscript{th} singular flux and $\alpha_n \in (0,1)$ measures its intensity (see also the next \cref{rem: an1}).
A natural dense domain where the operator above makes sense is the set $ C^{\infty}_{\mathrm{c}}(\mathbb{R}^2\!\setminus\! \lf\{ \xv_1, \dots, \xv_N \ri\}) $, {\it i.e.}, smooth functions with support away from the fluxes. Notice that the vector fields $\mathbf{A}_{n}$ separately fulfill the Coulomb gauge, in the sense of Schwartz distributions:
	\begin{equation*}
		\nabla \cdot \mathbf{A}_{n} = 0\,, \qquad \forall\;  n \in \{1,\dots,N\}\,.
	\end{equation*}
This implies that \eqref{eq: HalN} can be equivalently expressed on $ C^{\infty}_{\mathrm{c}}(\mathbb{R}^2\!\setminus \!\lf\{ \xv_1, \dots, \xv_N \ri\}) $ as
	\begin{equation*}
		H_{N} = - \,\Delta + 2 \left(\mbox{$\sum_{n \,=\, 1}^{N}$}\, \mathbf{A}_{n}\right) \cdot (-i\nabla) + \left(\mbox{$\sum_{n \,=\, 1}^{N}$}\, \mathbf{A}_{n} \right)^{\!2},
	\end{equation*}
where the order of the factors in the second term on the r.h.s. is immaterial.

	\begin{remark}[Fluxes' intensities]
	\label{rem: an1}
	\mbox{}	\\
	The condition $\alpha_n \in (0,1)$ for any $n \in \{1,\dots,N\}$, rather than $\alpha_n \in \mathbb{R}$, actually entails no loss of generality, since it can always be realized via a unitary transformation. To prove this claim, assume $\alpha_n = a_n + \tilde{\alpha}_{n}$ with $a_n \in \mathbb{Z}$ and $\tilde{\alpha}_{n} \in (0,1)$, and consider the unitary maps
		\begin{equation*}
			U_n : L^2(\mathbb{R}^2) \to L^2(\mathbb{R}^2)\,, \qquad
			(U_n \psi)(\xv) =  e^{i a_n \arg(\xv - \xv_n)} \, \psi(\xv)\,,
		\end{equation*}
	for $ n \in \{1, \ldots, N\} $. Notice that $U_n \psi$ is indeed a {\it single-valued} function in $L^2(\mathbb{R}^2)$, given that  $e^{2\pi i a_n } = 1$. By composition, we proceed to define the unitary operator $  U := U_1 \cdots\, U_N $. Using the basic identity $\nabla \arg(\xv - \xv_n) = \frac{(\mathbf{x} - \mathbf{x}_{n})^{\perp}}{|\mathbf{x} - \mathbf{x}_{n}|^2} $, one easily gets
		\begin{equation*}
			U \,H_{N}\, U^{-1} = \left(- \,i \nabla + \sum_{n \,=\, 1}^{N}\, (\alpha_n - a_n)\,{(\mathbf{x} - \mathbf{x}_{n})^{\perp} \over |\mathbf{x} - \mathbf{x}_{n}|^2} \right)^{\!2},
		\end{equation*}
	which accounts for the above statement. 
	Let us also point out that the configuration with flux intensities $(\alpha_n)_{n \,\in\, \{1, \ldots, N\}}$ can be mapped to that with opposite parameters $(-\alpha_n)_{n \,\in\, \{1, \ldots, N\}}$ exploiting an obvious conjugation symmetry.
	\end{remark}
	
	Concerning the fluxes' distribution we only assume that their positions are distinct, {\it i.e.}, that there exists an $r_{*} $ fulfilling
	\begin{equation*}
		0 < r_{*} <\tfrac{1}{2} \disp\min_{m,n \,\in\, \{1,\,\dots\,,N\}}  \lf|\mathbf{x}_{m} - \mathbf{x}_{n}\ri| .
	\end{equation*}
	Accordingly, we can consider a partition of unity given by a family of $C^{2}$ functions $(\xi_{n})_{n \,\in\, \{0,1, \ldots, N\}} : \mathbb{R}^{2} \to [0,1]$ such that:
	\begin{gather}
		 \supp \xi_{n} \subset B_{r_{*}}(\mathbf{x}_{n}) \quad \mbox{and} \quad \xi_{n}\big|_{B_{r_{*}/2}(\mathbf{x}_{n})} \equiv 1\,, \qquad \forall\; n \in \{1,\dots,N\}\,, \label{eq: xinsupp}\\
		 \mbox{$\sum_{n \,=\, 0}^{N}$}\, \xi_{n}^2 = 1\,, \label{eq: sumxin1}
	\end{gather}
	where we denoted by $B_{\varrho}(\mathbf{x}) $  the open disc of radius $ \varrho > 0$ and center $\mathbf{x} $.
The above assumptions ensure, in particular, that
	\begin{gather}
		\supp \xi_{m} \cap\, \supp \xi_{n} = \varnothing\,, \qquad \forall\; m \neq n \,; \label{eq: ximxin}\\
		\mbox{$\exists\, R > 0$\quad s.t.\quad $\xi_0\big|_{\mathbb{R}^2 \setminus B_{R}(\mathbf{0})} = 1$}\,. \label{eq: xi01}
	\end{gather}
	In view of the above positions, we further introduce the notations, for $ n \in \lf\{1, \dots, N \ri\} $,
	\begin{equation*}
		\mathbf{S}_{n} := \mbox{$\sum_{m \,\neq\, n}$}\,\mathbf{A}_{m}\,, \qquad \mathbf{S}_{0} := \mbox{$\sum_{m \,=\, 1}^{N}$}\,\mathbf{A}_{m}\,, 	
	\end{equation*}
		to denote the regular part of the magnetic potential in a neighborhood of the $n$-th flux. Furthermore, we set $ \check{\mathbf{S}}_{n}(\mathbf{x}) := \mathbf{S}_{n}(\mathbf{x}) - \mathbf{S}_{n}(\mathbf{x}_{n}) $, so that
	\begin{equation}\label{eq: checkSnLip}
		\lf|\check{\mathbf{S}}_{n}(\mathbf{x})\ri| \leqslant c\,|\mathbf{x}-\mathbf{x}_n|\,, \qquad
		 \forall\; n \!\in \!\{1, \dots, N\},\;\, \forall\; \mathbf{x} \in \supp \xi_{n}\,.
	\end{equation}

\subsection{The Friedrichs extension}
A distinguished self-adjoint realization of the operator \eqref{eq: HalN} is the Friedrichs one. In this connection, let us consider the quadratic form associated to the positive operator $ H_N $, {\it i.e.},
	\begin{equation}\label{eq: QalN}
		Q_{N}[\psi] 
		:= \left\| \left(- i \nabla + \mbox{$\sum_{n \,=\, 1}^{N}$} \,\mathbf{A}_{n} \right) \psi\right\|_{2}^{\,2} ,
	\end{equation}
which is well defined at least on $C^{\infty}_{\mathrm{c}}(\mathbb{R}^2 \!\setminus\! \{\mathbf{x}_{1},\dots,\mathbf{x}_{N}\})$. Using the natural norm $\|\psi\|_{N}^2 := \|\psi\|_{2}^{2} + Q_{N}[\psi]$, we can identify the quadratic form associated to the Friedrichs extension of ${H}_{N}$, namely,
	\begin{equation}\label{eq: QFr}
		Q_{N}^{(\mathrm{F})}[\psi] = Q_{N}[\psi]\,,	\qquad		\dom\big[Q_{N}^{(\mathrm{F})}\big] = \overline{C^{\infty}_{\mathrm{c}}(\mathbb{R}^2 \!\setminus \!\{\mathbf{x}_{1},\dots,\mathbf{x}_{N}\})}^{\;\|\,\cdot\,\|_{N}} .
	\end{equation}
	
	For later purposes, let us denoted by $ \avg{f}_n: (0,+\infty) \to \mathbb{C} $ the angular average around $\mathbf{x}_{n}$ of any function $ f: \mathbb{R}^2 \to \mathbb{C} $, {\it i.e.}, 
	\begin{equation*}
		\avg{f}_{n}\!(r_n) : = \frac{1}{2\pi r_n} \int_{\partial B_{r_n}(\mathbf{x}_n)}\hspace{-0.3cm} \diff \Sigma_n \: f = \frac{1}{2 \pi} \int_0^{2\pi} \diff \vartheta_n \: f(r_n,\theta_n)\,,
	\end{equation*}
	where $ r_n : = \lf| \xv - \xv_n \ri| $ and $ \vartheta_n := \mathrm{arg}(\xv - \xv_n) $ are local polar coordinates around $ \xv_n $ and we have committed a little abuse of notation setting $ f(x_n,\vartheta_n) := f(\xv(x_n,\vartheta_n)) $. As we are going to see, exactly as in the single-flux case (see \cite[Proposition 1.1]{CF21}), one of the key properties of the Friedrichs extension is the vanishing of the angular average around each flux of any function $ \Psi $ in its domain. Namely, no singular behavior is present in $ \Psi $ and the terms $ \nabla \Psi $ and $ \mathbf{A}_n  \Psi $ are separately in $ L^2(\mathbb{R}^2) $.
	
\begin{proposition}[Friedrichs extension]	
		\label{prop:QF}
		\mbox{}	\\ 
		Let $ N \in \N $ and, for all $n = 1, \ldots, N$, let $\alpha_n \in (0,1)$. Then,
		\begin{enumerate}[i)]
			\item The quadratic form $Q_{N}^{(\mathrm{F})}$ defined in \eqref{eq: QFr} is closed and non-negative. Its domain is given by
				\begin{equation}\label{eq: domQF}
					\dom \big[Q_{N}^{(\mathrm{F})}\big] = \lf\{\psi \in H^1(\mathbb{R}^2)\; \big|\;\mathbf{A}_{n} \psi \in L^2(\mathbb{R}^2)\,,\;\, \forall\; n \!\in\! \lf\{  1,\dots,N \ri\} \ri\}.
				\end{equation}
				Moreover, for any $\psi \in \dom\big[Q_{N}^{(\mathrm{F})}\big]$ and for all $n \in \{1,\dots,N\}$,
				\begin{equation}\label{eq: limdomQF}
				\lim_{r_n \to\, 0^+} \avg{|\psi|^2}_{\!n} = 0 \,, 	\qquad
				\lim_{r_n \to\, 0^+} r_n^2 \avg{|\partial_{r_n}\psi|^2}_{\!n} = 0\,.
				\end{equation}
			\item The self-adjoint operator $H_{N}^{(\mathrm{F})}$ associated to $Q_{N}^{(\mathrm{F})}$ acts as $H_{N}$ on the domain
				\begin{equation}\label{eq: domHF}
					\dom\big(H_{N}^{(\mathrm{F})}\big) = \lf\{\psi \in \dom \big[ Q_{N}^{(\mathrm{F})} \big] \;\Big|\; H_{N}\psi \in L^2(\mathbb{R}^2) \ri\} .
				\end{equation}
		\end{enumerate}
\end{proposition}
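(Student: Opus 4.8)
The plan is to prove part (i) by showing that the form norm $\|\cdot\|_N$ is equivalent, on the space $\mathcal{D}$ appearing on the right-hand side of \eqref{eq: domQF}, to the manifestly complete norm $\|\psi\|_2^2 + \|\nabla\psi\|_2^2 + \sum_{n} \|\mathbf{A}_n\psi\|_2^2$, and then that $C^\infty_{\mathrm{c}}(\mathbb{R}^2\setminus\{\xv_1,\dots,\xv_N\})$ is dense in $\mathcal{D}$ for this norm. Non-negativity of $Q_N^{(\mathrm{F})}$ is immediate from its definition \eqref{eq: QalN} as a squared $L^2$-norm, and closedness will follow once the domain is identified. The first step is to localize via the partition of unity $(\xi_n)$ through the IMS formula $Q_N[\psi] = \sum_{n=0}^N Q_N[\xi_n\psi] - \sum_{n} \big\||\nabla\xi_n|\,\psi\big\|_2^2$, where the last sum is bounded by $C\|\psi\|_2^2$ since the $\xi_n$ are $C^2$ with bounded gradients. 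On $\supp\xi_n$ with $n\geq 1$ one has $\sum_m \mathbf{A}_m = \mathbf{A}_n + \mathbf{S}_n$ with $\mathbf{S}_n$ bounded by \eqref{eq: checkSnLip}, while on $\supp\xi_0$ the total potential $\mathbf{S}_0$ is smooth and bounded, because \eqref{eq: xinsupp}--\eqref{eq: xi01} force $\supp\xi_0$ to stay at distance $\geq r_*/2$ from every flux. Treating $\mathbf{S}_n$ (resp. $\mathbf{S}_0$) as a bounded perturbation and using Young's inequality on the cross term, each $Q_N[\xi_n\psi]$ is bounded below by $(1-\eps)$ times the pure single-flux form $\big\|(-i\nabla+\mathbf{A}_n)(\xi_n\psi)\big\|_2^2$ (the free form for $n=0$) minus $C_\eps\|\xi_n\psi\|_2^2$.

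The core estimate is then the single-flux magnetic Hardy inequality: for $\alpha_n\in(0,1)$, with $d_n := \min(\alpha_n,1-\alpha_n)>0$, one has $\big\|(-i\nabla+\mathbf{A}_n)\phi\big\|_2^2 \geq d_n^2 \int |\phi|^2/|\xv-\xv_n|^2$ (see \cite{LW99}), which combined with $|\mathbf{A}_n\phi| = \alpha_n|\phi|/|\xv-\xv_n|$ controls $\|\mathbf{A}_n(\xi_n\psi)\|_2$ and hence, through $-i\nabla(\xi_n\psi) = (-i\nabla+\mathbf{A}_n)(\xi_n\psi) - \mathbf{A}_n(\xi_n\psi)$, also $\|\nabla(\xi_n\psi)\|_2$. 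Reassembling via $\|\nabla\psi\|_2^2 \leq 2\sum_n\|\nabla(\xi_n\psi)\|_2^2 + 2\sum_n\big\||\nabla\xi_n|\,\psi\big\|_2^2$ and controlling each $\|\mathbf{A}_n\psi\|_2$ by writing $\psi = \sum_m \xi_m^2\,\psi$ (noting $\mathbf{A}_n$ is bounded on $\supp\xi_m$ for $m\neq n$), one obtains the lower bound $\|\nabla\psi\|_2^2 + \sum_n\|\mathbf{A}_n\psi\|_2^2 \leq C\big(Q_N[\psi] + \|\psi\|_2^2\big)$; the reverse (upper) bound is elementary from the triangle inequality. This norm equivalence makes $\mathcal{D}$ complete, so $\overline{C^\infty_{\mathrm{c}}(\mathbb{R}^2\setminus\{\xv_1,\dots,\xv_N\})}^{\,\|\cdot\|_N}\subseteq\mathcal{D}$. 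The \emph{main obstacle} is precisely this coercivity step: one must ensure that the bounded perturbations $\mathbf{S}_n$ and the cross terms generated by the cutoffs do not spoil the positive Hardy constant $d_n^2$.

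For the reverse inclusion I would approximate $\psi\in\mathcal{D}$ in $\|\cdot\|_N$: first truncate at infinity, then mollify away from the fluxes, and finally excise small discs around each $\xv_n$ using logarithmic cutoffs $\chi_\delta$ supported outside $B_\delta(\xv_n)$ with $\int|\nabla\chi_\delta|^2 \to 0$. Since points have vanishing $H^1$-capacity in dimension two, $\big\||\nabla\chi_\delta|\,\psi\big\|_2\to 0$, while $\chi_\delta\to 1$ pointwise yields $\mathbf{A}_n(\chi_\delta\psi)\to\mathbf{A}_n\psi$ and $\chi_\delta\nabla\psi\to\nabla\psi$ in $L^2$ by dominated convergence; this gives density and hence the identification \eqref{eq: domQF}. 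The vanishing angular averages \eqref{eq: limdomQF} then follow exactly as in the single-flux analysis of \cite[Proposition 1.1]{CF21}: the finiteness of $\int_0 \avg{|\psi|^2}_n\, r_n^{-1}\,\diff r_n$ (from $\mathbf{A}_n\psi\in L^2$) and of $\int_0 \avg{|\partial_{r_n}\psi|^2}_n\, r_n\,\diff r_n$ (from $\psi\in H^1$), together with the absolute continuity of the radial profiles, force the two stated limits.

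Finally, part (ii) is an application of the first representation theorem for closed non-negative forms. If $\psi\in\dom\big[Q_N^{(\mathrm{F})}\big]$ with $H_N\psi\in L^2(\mathbb{R}^2)$ in the distributional sense, then integrating by parts against $\phi\in C^\infty_{\mathrm{c}}(\mathbb{R}^2\setminus\{\xv_1,\dots,\xv_N\})$ gives $Q_N^{(\mathrm{F})}(\phi,\psi) = \langle\phi, H_N\psi\rangle$, which extends to all $\phi\in\dom\big[Q_N^{(\mathrm{F})}\big]$ by density; hence $\psi\in\dom\big(H_N^{(\mathrm{F})}\big)$ with $H_N^{(\mathrm{F})}\psi = H_N\psi$. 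Conversely, for $\psi\in\dom\big(H_N^{(\mathrm{F})}\big)$ the identity $Q_N^{(\mathrm{F})}(\phi,\psi)=\big\langle\phi, H_N^{(\mathrm{F})}\psi\big\rangle$ tested against such $\phi$ identifies the distribution $H_N\psi$ with $H_N^{(\mathrm{F})}\psi\in L^2(\mathbb{R}^2)$, giving \eqref{eq: domHF}.
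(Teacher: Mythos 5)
Your proof is correct, and for part (i) it follows essentially the paper's own route: IMS localization with the partition $(\xi_n)_{n=0}^N$, absorption of the bounded remainders $\mathbf{S}_n$ via Young's inequality, and a single-flux Hardy-type bound to recover $\|\nabla(\xi_n\psi)\|_2$ and $\|\mathbf{A}_n(\xi_n\psi)\|_2$. Where you invoke the Laptev--Weidl inequality \cite{LW99} with constant $\min(\alpha_n,1-\alpha_n)^2$, the paper quotes the equivalent bounds $\|(-i\nabla+\mathbf{A}_n)\phi\|_2^2\geq(1-\alpha_n)^2\|\nabla\phi\|_2^2$ and $\|(-i\nabla+\mathbf{A}_n)\phi\|_2^2\geq\min\{1,(1-\alpha_n)^2/\alpha_n^2\}\|\mathbf{A}_n\phi\|_2^2$ from \cite{CF21}; both then conclude coercivity and identify the domain the same way, and the treatment of \eqref{eq: limdomQF} by reduction to \cite{CF21} is identical. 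Two differences are worth recording. First, for the inclusion of the right-hand side of \eqref{eq: domQF} into $\dom\big[Q_N^{(\mathrm{F})}\big]$ the paper only notes the elementary upper bound $Q_N[\psi]\leq 2\|\nabla\psi\|_2^2+2N\sum_n\|\mathbf{A}_n\psi\|_2^2$ and declares it sufficient; this tacitly uses density of $C^\infty_{\mathrm{c}}(\mathbb{R}^2\setminus\{\mathbf{x}_1,\dots,\mathbf{x}_N\})$ in that space for its natural norm, which is exactly the truncation/excision argument you make explicit (your dominated-convergence step is legitimate because $\mathbf{A}_n\psi\in L^2$ controls $\psi/r_n$ near each flux; just perform the excision before mollifying, so that all weights are bounded on the supports involved). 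Second, your part (ii) takes a genuinely different and cleaner route: testing $Q_N^{(\mathrm{F})}$ only against the core $C^\infty_{\mathrm{c}}(\mathbb{R}^2\setminus\{\mathbf{x}_1,\dots,\mathbf{x}_N\})$ produces no boundary terms at all, and the extension to all of $\dom\big[Q_N^{(\mathrm{F})}\big]$ is free, since the form domain is by definition the closure of that core. The paper instead integrates by parts on $\mathbb{R}^2\setminus\cup_n B_r(\mathbf{x}_n)$ for arbitrary form-domain elements and kills the resulting boundary terms using \eqref{eq: limdomQF}, so in the paper \eqref{eq: limdomQF} is an essential input to part (ii), whereas in your argument it is not needed there; the paper's boundary-term estimates, on the other hand, are recycled several times later (in the proofs of \cref{thm: Qbeta} and \cref{cor:domHbeta}), which is what that extra work buys.
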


\subsection{Heuristic derivation of the perturbed quadratic forms}

In order to derive the explicit expressions of the quadratic forms extending \eqref{eq: QFr}, one can perform the heuristic computation described hereafter (see \cite[Eqs. (1.24)-(1.27)]{CF21} for a comparison with the single-flux case). The key idea is that, as for a single flux, the regular wave functions in the Friedrichs realization's domain can be perturbed by adding defect functions with prescribed singularities at the fluxes. 

For $n \in \{1,\dots,N\}$ and $\lambda > 0$, let us then consider the two defect functions $G^{(\ell_n)}_{\lambda,n} \in L^2(\mathbb{R}^2)$, $\ell_n \in \{0,-1\}$, associated to the configuration with a single AB flux of intensity $\alpha_n \in (0,1)$, placed at the point $\mathbf{x}_n \in \mathbb{R}^2$. Here $ \ell_n $ singles out the angular momentum subspace ($s$ and $p${\it -waves} only). The functions $G^{(\ell_n)}_{\lambda,n}$ are identified as the unique solutions of the deficiency equation
	\begin{equation}\label{eq: greeneq}
		\lf((- i \nabla + \mathbf{A}_{n})^{2} + \lambda^2\ri) G^{(\ell_n)}_{\lambda,n} = 0\,, \qquad \mbox{in\; $\mathbb{R}^2 \setminus \{\mathbf{x}_{n}\}$}\,,
	\end{equation}
	{\it i.e.}, explicitly,
	\begin{equation}\label{eq: G2exp}
		G^{(\ell_n)}_{\lambda,n}(r_n,\theta_n) = \lambda^{|\ell_n + \alpha_n|}\, K_{|\ell_n + \alpha_n|}(\lambda\,r_n)\, \frac{e^{i\, \ell_n \theta_n}}{\sqrt{2\pi}}\,,\qquad \mbox{for\, $\ell_n \in \{0,-1\}$}\,, 
	\end{equation}
where $K_{\nu}$ is the modified Bessel function of second kind, {\it a.k.a.} Macdonald function. Let us stress that $G^{(\ell_n)}_{\lambda,n}$ is square-integrable \cite[Eq.\,6.521.3]{GR07} with $L^2(\mathbb{R}^2)$ norm
	\begin{equation*}
	\lf\|G^{(\ell_n)}_{\lambda,n}\ri\|_{2}^{2} = {\pi\, |\ell_n + \alpha_n| \over 2\sin(\pi\,\alpha_n)}\; \lambda^{2\,|\ell_n + \alpha_n|-2} \,,
	\end{equation*}
	and has the following asymptotics in a neighborhood of $ \xv_n $:
	\bml{\label{eq: G2asy0}
		G^{(\ell_n)}_{\lambda,n}(r_n,\theta_n) = \left[{\Gamma\big(|\ell_n + \alpha_n|\big) \over 2^{1 - |\ell_n + \alpha_n|}}\, {1 \over r_n^{|\ell_n + \alpha_n|}} + {\Gamma\left(-|\ell_n + \alpha_n|\right) \over 2^{1 + |\ell_n + \alpha_n|}}\,\lambda^{2 |\ell_n + \alpha_n|}\,r_n^{|\ell_n + \alpha_n|} \ri.	\\
		\lf. +\; \mathcal{O}\left(r_n^{2-|\ell_n + \alpha_n|}\right)\right]\! {e^{i\, \ell_n \theta_n} \over \sqrt{2\pi}} \,.
	}
Notice the singular term $\sim\, r_n^{-|\ell_n + \alpha_n|} $ with coefficient independent of $\lambda$, which ensures that $ G^{(\ell_n)}_{\lambda,n} \notin \dom \big[Q_{N}^{(\mathrm{F})}\big]$  since it cannot fulfill the asymptotic conditions in \eqref{eq: limdomQF}.

In order to set the singular behavior around $ \xv_n $ of the wave functions, we perturb functions $ \phi_{\lambda} \in \dom \big[Q_{N}^{(\mathrm{F})}\big] $ by adding a term of the form
\begin{equation*}
		\lf( \chi_{\lambda} \mathbf{q} \ri)(\xv) := \sum_{n \,=\, 1}^{N} \, e^{-i \mathbf{S}_{n}(\mathbf{x}_n) \cdot (\mathbf{x} - \mathbf{x}_n)}\, \xi_{n}(\xv)\, \sum_{\ell_n \,\in\, \{0,-1\}} \, q^{(\ell_n)}_{n}\, G^{(\ell_n)}_{\lambda,n}(\xv-\xv_n)\,,	
\end{equation*}
where $ \mathbf{q} = \big(q_1^{(0)}, q_1^{(-1)}, q_2^{(0)},\, \ldots, q_N^{(0)}, q_N^{(-1)}\big) \in \mathbb{C}^{2N} $ are free coefficients, the functions $ \xi_n $, $ n = 1, \ldots, N $, belong to a partition of unity as introduced in \eqref{eq: xinsupp} and \eqref{eq: sumxin1}, and the phase factor $ e^{-i \mathbf{S}_{n}(\mathbf{x}_n) \cdot (\mathbf{x} - \mathbf{x}_n)} $ has been inserted for later convenience. Assuming for the sake of simplicity that $ \phi_{\lambda} \in C^{\infty}_{\mathrm{c}}(\mathbb{R}^2\setminus\{\mathbf{x}_1,\dots,\mathbf{x}_{N}\} $ and formally evaluating the expectation of $ H_N $ on $ \psi = \phi_{\lambda} + \chi_{\lambda} \mathbf{q} $, we get
\bml{
	\label{eq: expectation H_N}
	\meanlrlr{\psi}{H_{N}^{(\mathrm{F})}}{\psi} = \meanlrlr{\phi_{\lambda}}{H_{N}^{(\mathrm{F})}}{\phi_{\lambda}}
		+ 2 \tx\sum_{n,\ell_n} \Re \lf[ q^{(\ell_n)}_{n} \meanlrlr{\phi_{\lambda}}{H_{N}^{(\mathrm{F})}}{e^{-i \mathbf{S}_{n}(\mathbf{x}_n) \cdot (\mathbf{x} - \mathbf{x}_n)}\, \xi_{n} G^{(\ell_n)}_{\lambda,n}} \ri] \\
		+ \tx\sum_{n,\ell_n,\ell_n^{\prime}} \big({q^{(\ell_n)}_{n}}\big)^* q^{(\ell_n^{\prime})}_{n} \meanlrlr{ e^{-i \mathbf{S}_{n}(\mathbf{x}_n) \cdot (\mathbf{x} - \mathbf{x}_n)}\, \xi_{n} G^{(\ell_n)}_{\lambda,n}}{H_{N}^{(\mathrm{F})}}{e^{-i \mathbf{S}_{n}(\mathbf{x}_n) \cdot (\mathbf{x} - \mathbf{x}_n)}\,\xi_{n} G^{(\ell_n^{\prime})}_{\lambda,n}}\,.
}
Note the absence of off-diagonal terms in the last sum, thanks to the disjoint supports of the functions $ \xi_n $. For $ \mathbf{x} \neq \mathbf{x}_{n}$ we have
	\begin{align*}
		& H_{N}^{(\mathrm{F})}\! \lf( e^{-i \mathbf{S}_{n}(\mathbf{x}_n) \cdot (\mathbf{x} - \mathbf{x}_n)}\, \xi_{n} G^{(\ell_n)}_{\lambda,n} \ri) = e^{-i \mathbf{S}_{n}(\mathbf{x}_n) \cdot (\mathbf{x} - \mathbf{x}_n)}\, \lf(-i \nabla + \mathbf{A}_n + \mathbf{S}_n- \mathbf{S}_{n}(\mathbf{x}_n)\ri)^2\lf( \xi_{n} G^{(\ell_n)}_{\lambda,n} \ri) \\
		& = e^{-i \mathbf{S}_{n}(\mathbf{x}_n) \cdot (\mathbf{x} - \mathbf{x}_n)} \lf[  
			2\!\left(\check{\mathbf{S}}_n \xi_{n}\! -\! i \nabla \xi_{n}\right) \!\cdot\! \lf(-i \nabla \!+\! \mathbf{A}_n\ri) G^{(\ell_n)}_{\lambda,n}
			+ \Big( \big(\check{\mathbf{S}}_n^2\! - \lambda^2\big) \xi_{n} + 2 \check{\mathbf{S}}_n \!\cdot\! (-i \nabla \xi_{n}) - \Delta \xi_{n} \Big) G^{(\ell_n)}_{\lambda,n}
			\ri],
\end{align*}
and, since the support of $ \phi_{\lambda} $ does not comprises $\mathbf{x}_{n}$, we can integrate by parts without getting any boundary term (we shall return on this point in the proof of \cref{cor:domHbeta}), so obtaining
\bmln{
	\meanlrlr{\phi_{\lambda}}{H_{N}^{(\mathrm{F})}}{e^{-i \mathbf{S}_{n}(\mathbf{x}_n) \cdot (\mathbf{x} - \mathbf{x}_n)}\,\xi_{n} G^{(\ell_n)}_{\lambda,n}} 
	= 2 \braketr{ \lf(-i \nabla \!+\! \mathbf{A}_{n}\ri)\phi_{\lambda}}{\,e^{-i \mathbf{S}_{n}(\mathbf{x}_n) \cdot (\mathbf{x} - \mathbf{x}_n)} \big(\check{\mathbf{S}}_{n}\,\xi_{n} - i \nabla \xi_{n} \big) G^{(\ell_n)}_{\lambda,n}}\\
	+ \braketr{ \phi_{\lambda}}{e^{-i \mathbf{S}_{n}(\mathbf{x}_n) \cdot (\mathbf{x} - \mathbf{x}_n)}\lf( \big(\check{\mathbf{S}}_{n}^2 - \lambda^2\big)\, \xi_{n} + 2 \mathbf{S}_n(\mathbf{x}_n) \cdot (\check{\mathbf{S}}_n \xi_{n} - i \nabla \xi_{n}) + \Delta \xi_{n} \ri)\! G^{(\ell_n)}_{\lambda,n}}.
}
Similarly, we deduce
\bmln{
		\meanlrlr{ e^{-i \mathbf{S}_{n}(\mathbf{x}_n) \cdot (\mathbf{x} - \mathbf{x}_n)}\,\xi_{n} G^{(\ell_n)}_{\lambda,n}}{H_{N}^{(\mathrm{F})}}{e^{-i \mathbf{S}_{n}(\mathbf{x}_n) \cdot (\mathbf{x} - \mathbf{x}_n)}\,\xi_{n} G^{(\ell_n^{\prime})}_{\lambda,n}}
		= 2 \braketr{\xi_{n} G^{(\ell_n)}_{\lambda,n}}{ \left(- i \nabla \xi_{n}\right) \!\cdot\! \lf(-i \nabla \!+\! \mathbf{A}_n\ri) G^{(\ell_n^{\prime})}_{\lambda,n} }
		\\
		+ \braketr{\xi_{n} G^{(\ell_n)}_{\lambda,n}}{ \Big(\big(\check{\mathbf{S}}_n^2\! - \lambda^2\big) \xi_{n} + 2 \big(\check{\mathbf{S}}_n \!\cdot\! \mathbf{A}_n\big)\xi_{n} - \Delta \xi_{n} \Big) G^{(\ell_n^{\prime})}_{\lambda,n} } + 2  \braketr{\xi_{n} G^{(\ell_n)}_{\lambda,n}}{ \,\check{\mathbf{S}}_n \!\cdot\! (-i \nabla) \big(\xi_{n}  G^{(\ell_n^{\prime})}_{\lambda,n}\big)} , 
}
so that, symmetrizing the last term of \eqref{eq: expectation H_N} and integrating by parts, we get
\bmln{
		\tx\sum_{n,\ell_n,\ell_n^{\prime}} \big({q^{(\ell_n)}_{n}}\big)^* q^{(\ell_n^{\prime})}_{n}\meanlrlr{ e^{-i \mathbf{S}_{n}(\mathbf{x}_n) \cdot (\mathbf{x} - \mathbf{x}_n)}\,\xi_{n} G^{(\ell_n)}_{\lambda,n}}{H_{N}^{(\mathrm{F})}}{e^{-i \mathbf{S}_{n}(\mathbf{x}_n) \cdot (\mathbf{x} - \mathbf{x}_n)}\,\xi_{n} G^{(\ell_n^{\prime})}_{\lambda,n}} \\
		= \tx\sum_{n,\ell_n,\ell_n^{\prime}} \big({q^{(\ell_n)}_{n}}\big)^* q^{(\ell_n^{\prime})}_{n} \bigg[ \braketr{G^{(\ell_n)}_{\lambda,n} }{(\nabla \xi_{n})^2 G^{(\ell_n^{\prime})}_{\lambda,n}} + \braketr{\xi_{n} G^{(\ell_n)}_{\lambda,n}}{ \Big(\big(\check{\mathbf{S}}_n^2\! - \lambda^2\big) \xi_{n} + 2 \big(\check{\mathbf{S}}_n \!\cdot\! \mathbf{A}_n\big)\xi_{n} \Big) G^{(\ell_n^{\prime})}_{\lambda,n} }
		\\
			+ 2 \braketr{\xi_{n} G^{(\ell_n)}_{\lambda,n}}{ \,\check{\mathbf{S}}_n \!\cdot\! (-i \nabla) \big(\xi_{n}  G^{(\ell_n^{\prime})}_{\lambda,n}\big)} \bigg] \,.
}
Exploiting the identity
\bmln{
	\lf\| \psi \ri\|_{2}^2 = \lf\| \phi_{\lambda} \ri\|_{2}^2 + 2 \tx\sum_{n,\ell_n} \Re \Big[ q^{(\ell_n)}_{n} \braketr{\phi_{\lambda}}{ \,e^{-i \mathbf{S}_{n}(\mathbf{x}_n) \cdot (\mathbf{x} - \mathbf{x}_n)}\,\xi_{n} G^{(\ell_n)}_{\lambda,n}} \Big] \\
	+ \tx\sum_{n,\ell_n,\ell_n^{\prime}} \big({q^{(\ell_n)}_{n}}\big)^* q^{(\ell_n^{\prime})}_{n} \braketr{\xi_{n} G^{(\ell_n)}_{\lambda,n}}{\xi_{n} G^{(\ell_n^{\prime})}_{\lambda,n}},
}
we are finally led to consider the expression
\bml{\label{eq: Qbeta}
		Q^{(B)}_{N}[\psi] := Q^{(\mathrm{F})}_{N}[\phi_{\lambda}] - \lambda^2 \lf\| \psi \ri\|_{2}^2 + \lambda^2 \lf\| \phi_{\lambda} \ri\|_{2}^2 \\
		+ 2 \sum_{n = 1}^N \sum_{\ell_n \in \{ 0, -1 \}} \Re \lf[ q^{(\ell_n)}_{n} \lf( 2 \braketr{ \lf(-i \nabla \!+\! \mathbf{A}_{n}\ri) \phi_{\lambda}}{\zeta_{1,n}\, G^{(\ell_n)}_{\lambda,n}} + \braketr{ \phi_{\lambda}}{\zeta_{2,n}\, G^{(\ell_n)}_{\lambda,n}} \ri) \ri] \\
		+  \sum_{m, n = 1}^N \sum_{\ell_m, \ell_n^{\prime} \in \{ 0, -1 \}}  \big({q^{(\ell_m)}_{m}}\big)^* q^{(\ell_n^{\prime})}_{n} \left[B^{(\ell_m \ell_n^{\prime})}_{m\,n} + \delta_{mn} \lf( \tfrac{\pi\,\lambda^{2|\ell_n + \alpha_n|}}{2 \sin(\pi \alpha_n)}\,\delta_{\ell_n \ell_n^{\prime}} + \Xi^{(\ell_n \ell_n^{\prime})}_{n}(\lambda) \ri) \right] ,
}
where
\begin{gather}
	\zeta_{1,n}(\xv)  : =  e^{-i \mathbf{S}_{n}(\mathbf{x}_n) \cdot (\mathbf{x} - \mathbf{x}_n)} \big(\check{\mathbf{S}}_{n}(\xv)\,\xi_{n}(\xv) - i \lf(\nabla \xi_{n}\ri)(\xv) \big)\,, 
	\\
	\zeta_{2,n}(\xv)  : =  e^{-i \mathbf{S}_{n}(\mathbf{x}_n) \cdot (\mathbf{x} - \mathbf{x}_n)}\lf[ \check{\mathbf{S}}_{n}^2(\xv)\, \xi_{n}(\xv) + 2\, \mathbf{S}_n(\mathbf{x}_n) \cdot \lf(\check{\mathbf{S}}_n(\xv) \xi_{n}(\xv) - i \lf(\nabla \xi_{n}\ri) (\xv) \ri) + \lf(\Delta \xi_{n}\ri)(\xv) \ri],	
	\\
	\Xi^{(\ell_n \ell_n^{\prime})}_{n}(\lambda)  :=   \braketr{G^{(\ell_n)}_{\lambda,n}}{ \lf[\big(\check{\mathbf{S}}_n^2 + 2 \check{\mathbf{S}}_n \!\cdot\! \mathbf{A}_n \big)\xi_{n}^2 + (\nabla \xi_{n})^2 \ri] G^{(\ell_n^{\prime})}_{\lambda,n} }
		+ 2 \braketr{\xi_{n} G^{(\ell_n)}_{\lambda,n}}{ \,\check{\mathbf{S}}_n \!\cdot\! (-i \nabla) \big(\xi_{n}  G^{(\ell_n^{\prime})}_{\lambda,n}\big)\!} ,
	\label{eq: defXi}
\end{gather} 
and we have introduced the Hermitian matrix  
	\begin{equation*}
		B : =  \lf(B_{m\,n}^{(\ell_m \ell'_n)}\ri)_{m,n \in \{1, \ldots, N\}; \, \ell_m, \ell'_n \in \{0,-1\}} \in M_{2N,\,\mathrm{Herm}}(\mathbb{C})\,,
	\end{equation*}
	labeling the form. Notice that, thanks to the properties of the cut-off functions $ \xi_n $ and \eqref{eq: checkSnLip}, we have $ \zeta_{1,n}, \zeta_{2,n},  \check{\mathbf{S}}_n \cdot \mathbf{A}_n \xi_n \in L^{\infty}(\R^2) $. Therefore, all the scalar products appearing in the quadratic form are well-posed: in particular, for the last term in $\Xi^{(\ell_n \ell_n^{\prime})}_{n}(\lambda) $, one has to use that $ \check{\mathbf{S}}_n $ linearly vanishes around $ \xv_n$ to compensate the extra singularity due to the gradient. Since it can be checked by direct inspection that $ \Xi^{(\ell_n \ell_n^{\prime})}_{n}(\lambda) = \big({\Xi^{(\ell_n^{\prime} \ell_n)}_{n}(\lambda)}\big)^{*} $ and $ B $ is Hermitian, we also infer that the form is real.

\subsection{Main results}

Our first result is about the quadratic forms $ Q^{(B)}_{N} $ defined in \eqref{eq: Qbeta} on their natural domain of definition, {\it i.e.},
	\bml{
 		\dom\big[Q^{(B)}_{N}\big] := \bigg\{\psi \!\in\! L^2(\mathbb{R}^2)\,\bigg|\, \psi = \phi_{\lambda} + \sum_{n \,=\, 1}^{N}\, e^{-i \mathbf{S}_{n}(\mathbf{x}_n) \cdot (\mathbf{x} - \mathbf{x}_n)}\, \xi_{n}\, \sum_{\ell_n \,\in\, \{0,-1\}}\, q^{(\ell_n)}_{n}\, G^{(\ell_n)}_{\lambda,n} \,,\,  \\
 			 \phi_{\lambda} \!\in\! \dom\big[Q_{N}^{(\mathrm{F})}\big],\;\, q^{(\ell_n)}_{n} \!\in\! \mathbb{C},\;\, n = 1,\dots, N,\;\, \ell_n = 0,-1 \bigg\} \,. \label{eq: Qbetadom}
	}

	\begin{theorem}[Quadratic forms $ Q^{(B)}_{N} $]
		\label{thm: Qbeta}
		\mbox{}		\\
		Let $ N \in \N $ and, for all $n = 1,\dots,N$, let $\alpha_n \in (0,1)$ and $\xi_{n} : \mathbb{R}^2 \to [0,1]$ fulfil \eqref{eq: xinsupp}~-~\eqref{eq: xi01}. Then, for any Hermitian matrix $B \in  M_{2N,\,\mathrm{Herm}}(\mathbb{C})$, 
		 \ben[i)]
		 	\item the quadratic form $Q^{(B)}_{N}$ defined by \eqref{eq: Qbeta} is well-posed on the domain \eqref{eq: Qbetadom}; moreover, it is independent of $ \lambda > 0 $ and of the choice of $ (\xi_{n})_{n \,=\, 1,\,\dots\,,N}$;
		 	\item $Q^{(B)}_{N}$ is also closed and bounded from below on the same domain.
		 \een
	\end{theorem}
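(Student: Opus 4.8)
The plan is to treat $Q^{(B)}_N$ as a finite-rank boundary perturbation of the closed, non-negative Friedrichs form $Q_N^{(\mathrm{F})}$, parametrised by the pair $(\phi_\lambda,\mathbf{q})\in\dom\big[Q_N^{(\mathrm{F})}\big]\times\mathbb{C}^{2N}$. The first point to settle, which is what makes \eqref{eq: Qbeta} unambiguous, is that this pair is \emph{uniquely} determined by $\psi$. If $\phi_\lambda+\chi_\lambda\mathbf{q}=\tilde\phi_\lambda+\chi_\lambda\tilde{\mathbf{q}}$ with $\phi_\lambda,\tilde\phi_\lambda\in\dom\big[Q_N^{(\mathrm{F})}\big]$, then $\phi_\lambda-\tilde\phi_\lambda=\chi_\lambda(\tilde{\mathbf{q}}-\mathbf{q})\in\dom\big[Q_N^{(\mathrm{F})}\big]$; but on $B_{r_*/2}(\mathbf{x}_n)$, where $\xi_n\equiv1$ and the phase tends to $1$ at $\mathbf{x}_n$, the right-hand side reduces to $\sum_{\ell_n}(\tilde q^{(\ell_n)}_n-q^{(\ell_n)}_n)\,G^{(\ell_n)}_{\lambda,n}$, whose leading behaviour is the singular term $\sim r_n^{-|\ell_n+\alpha_n|}e^{i\ell_n\theta_n}$ of \eqref{eq: G2asy0}. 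Since the two angular harmonics $\ell_n\in\{0,-1\}$ are orthogonal, the first condition in \eqref{eq: limdomQF} characterising $\dom\big[Q_N^{(\mathrm{F})}\big]$ forces each coefficient $\tilde q^{(\ell_n)}_n-q^{(\ell_n)}_n$ to vanish, whence $\tilde{\mathbf{q}}=\mathbf{q}$ and $\tilde\phi_\lambda=\phi_\lambda$.

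Next I would prove independence of $\lambda$ and of the cut-offs. The key observation is that the coefficient of the singular term in \eqref{eq: G2asy0} does not depend on $\lambda$, so $G^{(\ell_n)}_{\lambda,n}-G^{(\ell_n)}_{\lambda',n}\in\dom\big[Q_N^{(\mathrm{F})}\big]$; likewise, replacing each $\xi_n$ by another admissible cut-off changes $\chi_\lambda\mathbf{q}$ only by an element of $\dom\big[Q_N^{(\mathrm{F})}\big]$, since the two cut-offs coincide near $\mathbf{x}_n$ where the singularity is located. By the uniqueness just established the coefficients $\mathbf{q}$ are untouched and only the regular part is reshuffled, $\phi_{\lambda'}=\phi_\lambda+\chi_\lambda\mathbf{q}-\chi_{\lambda'}\mathbf{q}$. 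I would then check that the numerical value \eqref{eq: Qbeta} is insensitive to this reshuffling by differentiating in $\lambda$ at fixed $\psi$ and $\mathbf{q}$ (so that $\partial_\lambda\phi_\lambda=-\partial_\lambda(\chi_\lambda\mathbf{q})$) and verifying that all contributions cancel: using the deficiency equation \eqref{eq: greeneq} the derivatives of the $\phi_\lambda$-dependent terms are matched by those of the explicit scalars $\tfrac{\pi\lambda^{2|\ell_n+\alpha_n|}}{2\sin(\pi\alpha_n)}$ and $\Xi^{(\ell_n\ell_n^{\prime})}_n(\lambda)$, the only residual being boundary terms at the punctures that are annihilated by \eqref{eq: limdomQF}. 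I expect this bookkeeping — keeping track of the boundary contributions produced by the integrations by parts, precisely the subtlety flagged after \eqref{eq: expectation H_N} — to be the main obstacle in part (i).

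For part (ii) I would fix one conveniently large $\lambda$, which is legitimate by part (i), and bound $F_\lambda(\psi):=Q^{(B)}_N[\psi]+\lambda^2\|\psi\|_2^2$ from below in terms of $a:=\|\phi_\lambda\|_N$ and $b:=|\mathbf{q}|$. The regular block gives $Q_N^{(\mathrm{F})}[\phi_\lambda]+\lambda^2\|\phi_\lambda\|_2^2\geq a^2$ for $\lambda\geq1$. Since $\zeta_{1,n},\zeta_{2,n}\in L^\infty(\mathbb{R}^2)$ are supported in $B_{r_*}(\mathbf{x}_n)$, where $\mathbf{S}_n$ is bounded, writing $(-i\nabla+\mathbf{A}_n)\phi_\lambda=\big(-i\nabla+\sum_m\mathbf{A}_m\big)\phi_\lambda-\mathbf{S}_n\phi_\lambda$ bounds the cross terms by $C_1(\lambda)\,a\,b$, where $C_1(\lambda)=\max_{n,\ell_n}\big(\|\zeta_{1,n}G^{(\ell_n)}_{\lambda,n}\|_2+\|\zeta_{2,n}G^{(\ell_n)}_{\lambda,n}\|_2\big)\to0$ as $\lambda\to\infty$ — a scaling estimate on $K_{|\ell_n+\alpha_n|}(\lambda r_n)$ that uses the linear vanishing $|\check{\mathbf{S}}_n|\leq c\,r_n$ from \eqref{eq: checkSnLip}. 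In the $\mathbf{q}$-block the Hermitian matrix $B$ is fixed and $\Xi^{(\ell_n\ell_n^{\prime})}_n(\lambda)$ stays bounded (in fact it vanishes as $\lambda\to\infty$, all its entries scaling like $\lambda^{|\ell_n+\alpha_n|+|\ell_n^{\prime}+\alpha_n|-2}$ with negative exponent), whereas the diagonal term $\tfrac{\pi\lambda^{2|\ell_n+\alpha_n|}}{2\sin(\pi\alpha_n)}\geq c\,\lambda^{2\mu}$ diverges, with $\mu:=\min_{n,\ell_n}|\ell_n+\alpha_n|\in(0,1)$; hence that block is $\geq\tfrac{c}{2}\lambda^{2\mu}b^2$ for $\lambda$ large.

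Collecting, $F_\lambda(\psi)\geq a^2-C_1(\lambda)\,a\,b+\tfrac{c}{2}\lambda^{2\mu}b^2$, a quadratic form in $(a,b)$ that is positive definite as soon as $\tfrac{c}{2}\lambda^{2\mu}>\tfrac14 C_1(\lambda)^2$; this yields $F_\lambda(\psi)\geq c'\big(\|\phi_\lambda\|_N^2+|\mathbf{q}|^2\big)\geq0$, i.e.\ $Q^{(B)}_N[\psi]\geq-\lambda^2\|\psi\|_2^2$, so the form is bounded below. The reverse estimate $F_\lambda(\psi)\leq C\big(\|\phi_\lambda\|_N^2+|\mathbf{q}|^2\big)$ is immediate from the same bounds, so the form norm $\sqrt{F_\lambda(\cdot)}$ is equivalent to $\big(\|\phi_\lambda\|_N^2+|\mathbf{q}|^2\big)^{1/2}$. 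Since $Q_N^{(\mathrm{F})}$ is closed by \cref{prop:QF}, the space $\dom\big[Q_N^{(\mathrm{F})}\big]\times\mathbb{C}^{2N}$ is complete, and the decomposition map is a linear isomorphism; hence any $\sqrt{F_\lambda(\cdot)}$-Cauchy sequence $\psi_k$ corresponds to a Cauchy sequence $(\phi^{(k)}_\lambda,\mathbf{q}^{(k)})$ converging in $\dom\big[Q_N^{(\mathrm{F})}\big]\times\mathbb{C}^{2N}$, which reassembles into the $L^2$- and form-limit of $\psi_k$. This proves closedness and completes part (ii).
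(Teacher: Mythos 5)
Your part~(ii) is correct and follows essentially the paper's own route: coercivity of $Q^{(B)}_{N}+\lambda^{2}\|\cdot\|_{2}^{2}$ with respect to $\|\phi_{\lambda}\|_{N}^{2}+|\mathbf{q}|^{2}$ for one fixed large $\lambda$, then closedness via completeness of $\dom\big[Q_{N}^{(\mathrm{F})}\big]\times\mathbb{C}^{2N}$; your explicit uniqueness-of-decomposition argument (forcing $\tilde{\mathbf{q}}=\mathbf{q}$ through the first limit in \eqref{eq: limdomQF} and orthogonality of the harmonics $\ell_n\in\{0,-1\}$) is also correct, and is a point the paper leaves implicit.

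The genuine gap is in part~(i), and it is not merely that you defer the computation you yourself call ``the main obstacle''; it is that the cancellation mechanism you predict is not the one that occurs, so carrying out your plan as described would fail. You expect that, after using the deficiency equation \eqref{eq: greeneq} and the asymptotics \eqref{eq: G2asy0}, ``the only residual [is] boundary terms at the punctures that are annihilated by \eqref{eq: limdomQF}''. In fact, when the difference of the two representations of the form is computed (the paper does this as a finite difference between $\lambda_1$ and $\lambda_2$, but a $\lambda$-derivative meets the identical structure), the puncture boundary terms do cancel against the explicit scalars $\tfrac{\pi\lambda^{2|\ell_n+\alpha_n|}}{2\sin(\pi\alpha_n)}$ via the Wronskian of the Macdonald functions and the Gamma-function identities, yet there survive \emph{bulk} terms supported on the annulus where $\nabla\xi_n\neq 0$, of the form
\begin{equation*}
\braketr{G^{(\ell_n)}_{\lambda_j,n}}{(\xi_{n}\nabla \xi_{n}) \cdot (-i \nabla + \mathbf{A}_n)\, G^{(\ell_n^{\prime})}_{\lambda_j,n}}\,,\qquad j=1,2\,,
\end{equation*}
which no integration by parts or small-$r$ asymptotics makes vanish. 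The paper disposes of them by a structural argument absent from your sketch: the computed difference equals $-2i$ times a real number, while both values of the form are real (this is where the Hermiticity of $B$ and the property $\Xi^{(\ell_n \ell_n^{\prime})}_{n}=\big(\Xi^{(\ell_n^{\prime}\ell_n)}_{n}\big)^{*}$ are used), so the difference must be zero; the same trick settles cut-off independence. Without this reality argument, or an actual verification that the surviving annulus terms cancel, part~(i) is unproved; and since your part~(ii) begins by fixing a single large $\lambda$ ``legitimate by part~(i)'', the gap propagates to the statement as a whole (what remains is closedness and lower-boundedness of each fixed-$\lambda$ form, not of a well-defined $\lambda$-independent $Q^{(B)}_{N}$).
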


	Next we show that the self-adjoint operators associated to the forms $Q^{(B)}_{N}$ identify all self-adjoint realizations of $ H_N $. Besides the derivation of the domain and explicit action of such operators, including the boundary conditions satisfied by the functions contained therein, the major content of the result reported below is the fact that {\it all} the self-adjoint extensions are contained in the family. This is obtained indirectly through an alternative parametrization of the family via Kre{\u\i}n's theory (see also the subsequent \cref{rem: krein}).

	\begin{theorem}[Self-adjoint extensions $ H_{N}^{(B)} $]
		\label{cor:domHbeta}
		\mbox{}	\\
		Under the same assumptions of \cref{thm: Qbeta}, for any  Hermitian matrix $ B \in M_{2N,\,\mathrm{Herm}}(\mathbb{C})$ the operator $ H_{N}^{(B)} $ associated to the quadratic form $Q^{(B)}_{N}$ has domain 
		\bml{
			\label{domHbe1}
			\dom \big(H_{N}^{(B)}\big) 
			= \lf\{ \psi = \phi_{\lambda} + \mbox{$\sum_{n,\ell_n}$}\, q^{(\ell_n)}_{n}\, e^{-i \mathbf{S}_{n}(\mathbf{x}_n) \cdot (\mathbf{x} - \mathbf{x}_n)}\, \xi_{n}\, G_{\lambda,n}^{(\ell_n)} \;\Big|\;
			\phi_{\lambda} \in \dom\big(H_{N}^{(\mathrm{F})}\big)\,,  \ri.	\\
			\lf.  \big[ \lf(B +  L  \ri) \mathbf{q} \big]_{m, \ell_m}\!
				= \lim_{r \to 0^{+}} \frac{ \pi\,2^{|\ell_m + \alpha_m|} \,\Gamma\big(|\ell_m + \alpha_m|\big)}{r^{|\ell_m + \alpha_m|}}\,\avg{\big( |\ell_m + \alpha_m|\,\phi_{\lambda} + r\,\partial_{r} \phi_{\lambda}\big) \tfrac{e^{-i\, \ell_m \theta}}{\sqrt{2\pi}}}_{m},\, \forall\, m, \ell_m   \ri\} ,
			}	
			where $ L : = \lf(\tfrac{\pi\,\lambda^{2|\ell'_n + \alpha_n|}}{ 2\sin(\pi\alpha_n)}\,\delta_{mn}\,\delta_{\ell_m \ell_n^{\prime}} \ri)_{m,n \in \{1, \ldots, N\}; \, \ell_m, \ell'_n \in \{0,-1\}} \in M_{2N,\,\mathrm{Herm}}(\mathbb{C})$, and action
			\bmln{
				\lf( H_{N}^{(B)} \!+\! \lambda^2\ri)\! \psi  
				= \lf( H_{N}^{(\mathrm{F})} \!+\! \lambda^2\ri)\! \phi_{\lambda}
					+ \sum_{n,\ell_n} q^{(\ell_n)}_{n}\, e^{-i \mathbf{S}_{n}(\mathbf{x}_n) \cdot (\mathbf{x} - \mathbf{x}_n)} \Big[  2\big(\check{\mathbf{S}}_{n}\xi_{n}\! - i \nabla \xi_{n} \big) \lf(-i \nabla \!+\! \mathbf{A}_{n}\ri)G^{(\ell_n)}_{\lambda,n} \\
					+ \lf( \check{\mathbf{S}}_{n}^2 \xi_{n}\! + 2\check{\mathbf{S}}_{n} \!\cdot\! (-i \nabla \xi_{n}) - \Delta \xi_{n} \ri)\! G^{(\ell_n)}_{\lambda,n} \Big]\,.
			}
			Moreover the $(2N)^2$-real-parameters family $H_{N}^{(B)}$, $B \in M_{2N,\,\mathrm{Herm}}(\mathbb{C}) \cup \lf\{ \infty \ri\}$, exhausts all possible self-adjoint realizations of the operator $ {H}_{N} $.
	\end{theorem}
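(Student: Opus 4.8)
The plan is to proceed in two stages. First I would read off the domain and action of $H_N^{(B)}$ from the form $Q_N^{(B)}$ via the first representation theorem: by \cref{thm: Qbeta} the form is closed and bounded below on the domain \eqref{eq: Qbetadom}, hence it generates a unique self-adjoint operator, and it remains to identify it. Second, I would prove that the resulting family, together with the Friedrichs operator $H_N^{(\mathrm{F})}$ recorded as the formal value $B=\infty$, exhausts all self-adjoint realizations of $H_N$ through a Kre{\u\i}n-type parametrization relative to $H_N^{(\mathrm{F})}$.

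For the first stage, fix $\psi=\phi_\lambda+\chi_\lambda\mathbf{q}$ in $\dom[Q_N^{(B)}]$ and test the sesquilinear form $Q_N^{(B)}(\varphi,\cdot)$ against an arbitrary $\varphi=\phi'_\lambda+\chi_\lambda\mathbf{q}'$. I would first show that membership in $\dom(H_N^{(B)})$ forces the regular part $\phi_\lambda$ into $\dom(H_N^{(\mathrm{F})})$, not merely the form domain: testing against $\varphi\in C_c^\infty(\mathbb{R}^2\setminus\{\mathbf{x}_1,\dots,\mathbf{x}_N\})$ collapses $Q_N^{(B)}$ to $Q_N^{(\mathrm{F})}(\varphi,\phi_\lambda)$ up to terms manifestly $L^2$-bounded in $\varphi$, so that $L^2$-boundedness yields $(H_N^{(\mathrm{F})}+\lambda^2)\phi_\lambda\in L^2$, i.e. $\phi_\lambda\in\dom(H_N^{(\mathrm{F})})$ by \cref{prop:QF}(ii). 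The candidate action is the one displayed in the statement; it lands in $L^2$ because the singular part of $G_{\lambda,n}^{(\ell_n)}$ is annihilated by $(-i\nabla+\mathbf{A}_n)^2+\lambda^2$ where $\xi_n\equiv 1$, while every surviving term carries either a derivative of $\xi_n$ (supported off the flux) or the factor $\check{\mathbf{S}}_n$ vanishing linearly by \eqref{eq: checkSnLip}, which tames the $r_n^{-|\ell_n+\alpha_n|}$ behaviour of $G$. The heart of this stage is to redo, for general $\phi_\lambda\in\dom(H_N^{(\mathrm{F})})$, the integration by parts performed in the heuristic derivation only for $\phi_\lambda$ supported away from the poles; now genuine boundary terms appear on the circles $\partial B_{r}(\mathbf{x}_m)$. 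Using the two-term expansion \eqref{eq: G2asy0} together with the vanishing-average conditions \eqref{eq: limdomQF} satisfied by $\phi_\lambda$, each boundary term pairs the singular coefficient of $\chi_\lambda\mathbf{q}$ against the regular coefficient of $\varphi$ and conversely; requiring the total to vanish for every $\varphi$ produces exactly the relation $(B+L)\mathbf{q}=\Gamma\phi_\lambda$, where $\Gamma_{m,\ell_m}\phi_\lambda$ is the limit functional on the right-hand side of \eqref{domHbe1}. One checks that $|\ell_m+\alpha_m|+r\partial_r$ annihilates the putative singular term $r^{-|\ell_m+\alpha_m|}$ and doubles the regular term $r^{|\ell_m+\alpha_m|}$, so that $\Gamma$ extracts precisely the regular coefficient dual to the singular part of $G$, the prefactors $\pi\,2^{|\ell_m+\alpha_m|}\Gamma(|\ell_m+\alpha_m|)$ being fixed by the normalization in \eqref{eq: G2asy0}.

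For the exhaustion statement I would pass to the Kre{\u\i}n/von Neumann picture. Let $\dot H_N$ denote the restriction of $H_N$ to $C_c^\infty(\mathbb{R}^2\setminus\{\mathbf{x}_1,\dots,\mathbf{x}_N\})$. The first task is to show its deficiency indices are $(2N,2N)$: near each pole $\mathbf{x}_n$ the regular part $\mathbf{S}_n$ is a bounded gauge field, so a localization argument reduces the solvability of $(\dot H_N^*+\lambda^2)u=0$ to the single-flux problem, which for $\alpha_n\in(0,1)$ contributes exactly the two channels $\ell_n\in\{0,-1\}$, while the region away from the poles and the behaviour at infinity contribute nothing; summing over $n$ gives $2N$. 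Consequently $\dom(\dot H_N^*)=\dom(H_N^{(\mathrm{F})})\dotplus\mathcal N_{-\lambda^2}$ with $\dim\mathcal N_{-\lambda^2}=2N$, spanned modulo $\dom(H_N^{(\mathrm{F})})$ by the truncated defect functions $e^{-i\mathbf{S}_n(\mathbf{x}_n)\cdot(\mathbf{x}-\mathbf{x}_n)}\xi_n G_{\lambda,n}^{(\ell_n)}$. Every self-adjoint extension $H$ satisfies $\dot H_N\subseteq H\subseteq\dot H_N^*$, hence has domain $\{\phi_\lambda+\chi_\lambda\mathbf{q}\}$ constrained by a self-adjoint relation between $\mathbf{q}$ and the boundary data $\Gamma\phi_\lambda$ on $\mathbb{C}^{2N}$. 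I would then match this abstract relation with the concrete condition $(B+L)\mathbf{q}=\Gamma\phi_\lambda$: for relations expressible as a graph over $\mathbf{q}$ (transversal to the Friedrichs condition) the matching is realized by a finite Hermitian $B$, with $L$ the diagonal Weyl-function contribution already isolated in the diagonal of \eqref{eq: Qbeta}, and self-adjointness of the relation translates precisely into $B=B^*$; the remaining non-transversal case, where $\mathbf{q}$ is forced to vanish and $\phi_\lambda$ is unconstrained, is the Friedrichs operator and is recorded as $B=\infty$. Tracking the bijection between Hermitian matrices (resp. $\infty$) and self-adjoint relations then gives completeness.

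The step I expect to be the main obstacle is the exhaustion argument, and within it two points in particular. The first is the rigorous computation of the deficiency indices for $N\ge 2$, where no closed-form solution of the deficiency equation is available: the localization and gluing estimate, controlling the cross-terms generated by the cut-offs $\xi_n$ and ruling out extra global $L^2$ solutions, is the delicate part. The second is the clean identification of the Kre{\u\i}n parameter with $B+L$—in particular verifying that the limit functional $\Gamma$ in \eqref{domHbe1} is well-defined on all of $\dom(H_N^{(\mathrm{F})})$ and is the correct abstract boundary map dual to the defect functions—together with the bookkeeping that promotes the single Friedrichs case to the formal value $B=\infty$. By contrast, the first-stage determination of domain and action, though lengthy, is essentially the integration by parts already set up in the heuristic derivation, now carried out keeping the boundary terms.
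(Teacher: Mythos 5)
Your first stage is essentially the paper's own proof of the domain/action characterization: the paper likewise polarizes $Q_N^{(B)}$, first takes the test vector with zero charges to force $\phi_{2,\lambda}\in\dom\big(H_N^{(\mathrm{F})}\big)$ and read off the action, and then, for nonzero charges, redoes the integration by parts keeping the circle boundary terms, which via \eqref{eq: G2asy0} and \eqref{eq: limdomQF} produce exactly the Wronskian-type functional $|\ell_m+\alpha_m|+r\partial_r$ and hence the condition $(B+L)\mathbf{q}=\bm{\tau}\phi_\lambda$. Your second stage, however, takes a genuinely different route. The paper never computes deficiency indices: it establishes the explicit resolvent representation \eqref{eq: RNFried} for $R_N^{(\mathrm{F})}$ (\cref{lemma: Tn}), feeds the trace map \eqref{eq: deftau} into Posilicano's Kre{\u\i}n-type theory \cite{Po01} (\cref{thm: extRF}), computes $\mathcal{G}(z)$ and $\Lambda(z)$ explicitly (\cref{lemma: Gzexp}), matches the two parametrizations through $\Theta(B)$ (\cref{prop:BTheta}), and then gets exhaustion by citing \cite[Theorem 3.1 and Corollary 3.2]{Po08}. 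What that buys is precisely the removal of your two declared obstacles: the dimension count and the duality between the boundary functional and the defect functions are byproducts of \eqref{eq: expGhat}--\eqref{eq: expLambda}, and the same resolvent formula is then reused for the spectral and scattering results (\cref{pro: spectrum}, \cref{cor:ScatteringHbeta}). Your von Neumann scheme is viable in principle — the decomposition $\dom(\dot H_N^{*})=\dom\big(H_N^{(\mathrm{F})}\big)\dotplus\mathcal{N}_{-\lambda^2}$ is correct since $-\lambda^2\in\rho\big(H_N^{(\mathrm{F})}\big)$, and the truncated defect functions do represent the quotient because $(\dot H_N^{*}+\lambda^2)$ maps them into $L^2$ — but the localization/gluing estimate you defer is exactly the hard analytic content for $N\geq 2$, and carrying it out would largely amount to reconstructing by hand what \eqref{eq: RNFried} provides.

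Beyond that acknowledged obstacle, one step of your exhaustion bookkeeping fails as stated. Self-adjoint extensions correspond to self-adjoint \emph{relations} between $\mathbf{q}$ and the boundary data $\bm{\tau}\phi_\lambda$, and for $2N\geq 2$ these are not exhausted by graphs over $\mathbf{q}$ (finite Hermitian $B$) together with the single relation $\mathbf{q}=\mathbf{0}$ (Friedrichs). There are intermediate relations in which $\mathbf{q}$ is confined to a proper nontrivial subspace $\Pi\,\mathbb{C}^{2N}$ and a Hermitian condition is imposed only on that subspace: for $N=1$, think of Friedrichs behaviour in the $s$-wave channel combined with a point interaction in the $p$-wave channel. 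Such an extension is not equal to $H_N^{(\mathrm{F})}$, nor to any $H_N^{(B)}$ with finite $B$ (for finite $B$ the surjectivity of $\bm{\tau}$ makes every charge vector in $\mathbb{C}^{2N}$ occur, whereas here the charges fill only $\Pi\,\mathbb{C}^{2N}$), so your claimed dichotomy ``graph over $\mathbf{q}$ versus $\mathbf{q}$ forced to vanish'' does not close the argument. The general parametrization is by pairs consisting of an orthogonal projection $\Pi$ and a Hermitian operator on $\mathrm{ran}\,\Pi$ — this is what \cite[Theorem 3.1 and Corollary 3.2]{Po08} supplies, and what the theorem's ``$B=\infty$'' shorthand (together with divergent blocks of $B$) is implicitly standing in for. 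To repair your argument you would have to incorporate these mixed relations explicitly into the matching step, interpreting them as limits in which only part of $B$ diverges, rather than assert that the non-graph case reduces to the Friedrichs operator alone.
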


\begin{remark}[Friedrichs extension]
	\label{remark: HFbc}
	\mbox{}	\\
	The Friedrichs Hamiltonian is formally recovered for ``$B = \infty$'' and from now on we may use the notation $ B = \infty $ to identify the Friedrichs extension. Indeed, when all components of $  B$ diverge, all the charge parameters $q_{n}^{(\ell_n)}$ are set equal to zero. In this case, the boundary conditions in \eqref{domHbe1} for $m \in \{1,\dots,N\}$, $\ell_m \in \{0,-1\}$, read
	\begin{equation*}
		\avg{\Big( |\ell_m + \alpha_m|\, \phi_{\lambda} + r\, \partial_r \phi_{\lambda}\Big) \tfrac{e^{-i\, \ell_m \theta}} {\sqrt{2\pi}}}_m = \mathcal{O}\lf(r_m^{|\ell_m + \alpha_m|}\ri), \qquad \mbox{for\, $r_n \to 0^{+}$}.
	\end{equation*}
	This asymptotic behavior is apparently missing in the characterization of the Friedrichs domain \eqref{eq: domQF}, but it is in fact encoded in the requirement $H_{N} \phi_\lambda \in L^2(\mathbb{R}^2)$ (see also \cite[\S 3.1]{CO18}).
\end{remark}

\begin{remark}[Kre{\u\i}n representation]
	\label{rem: krein}
	\mbox{}	\\
	As anticipated, a key point in the proof of \cref{cor:domHbeta} is an alternative representation of the self-adjoint realizations of $ H_N $ via a straightforward application of Kre{\u\i}n theory. More precisely, one can show (see \cref{thm: extRF}) that all self-adjoint extensions of $ H_N$ can be written in the following form, for $ z \in \mathbb{C} \setminus \mathbb{R} $ and $ \Theta \in M_{2N,\,\mathrm{Herm}}(\mathbb{C})  \cup \lf\{ \infty \ri\} $:
	\begin{gather}
			\dom\big(H_{N}^{(\Theta)}\big) = \Big\{ \psi \in L^2(\mathbb{R}^2)\;\Big|\; \psi = \varphi_{z} + \mathcal{G}(z) \mathbf{q},\; \varphi_{z} \in \dom\big(H_{N}^{(\mathrm{F})}\big),\; \mathbf{q} \!\in \mathbb{C}^{2N}, \; \bm{\tau} \varphi_z = \big[\Theta + \Lambda(z)\big] \mathbf{q} \Big\}\,, \nonumber
	\\
		\big(H_{N}^{(\Theta)} - z\big) \psi = \big(H_{N}^{(\mathrm{F})} - z\big) \varphi_z\,. \label{eq:HNTheta}
	\end{gather}
	Here $ \bm{\tau} $ stands for the trace map $ \bm{\tau}  :=  \bigoplus_{n = 1, \ldots, N; \ell_n \in \{0,-1\}} \tau_{n}^{(\ell_n)} : \dom \big(H_{N}^{(\mathrm{F})}\big) \to \mathbb{C}^{2N} $, where ({\it cf.} \eqref{domHbe1})
	\beq
		\tau_{n}^{(\ell_n)} \psi :=  \lim_{r \to 0^{+}} \tfrac{\pi\,2^{|\ell_n + \alpha_n|} \,\Gamma\big(|\ell_n + \alpha_n|\big)}{r^{|\ell_n + \alpha_n|}}\,\avg{ \Big( |\ell_n + \alpha_n|\,\psi + r\,\partial_{r} \psi\Big) \tfrac{e^{-i\, \ell_n \theta}} {\sqrt{2\pi}}}_{n}. \label{eq: deftau}
	\eeq
	The associated single layer operator is
	\begin{equation*}
		\mathcal{G}(z) := \big(\breve{\mathcal{G}}(\bar{z}) \big)^{*} : \mathbb{C}^{2N} \to L^2(\mathbb{R}^2)\,,
	\end{equation*}
	with  $ \breve{\mathcal{G}}(z) :=  \bm{\tau} \, ( H^{(\mathrm{F})}_N - z )^{-1} : L^2(\mathbb{R}^2) \to \mathbb{C}^{2N}$. 
	Moreover, fixing arbitrarily $z_0 \in \mathbb{C} \setminus [0,+\infty)$, we have set
	\begin{equation}\label{eq: defLambda}
		\Lambda(z) := \bm{\tau} \lf( \tfrac{1}{2} \lf( \mathcal{G}(z_0) + \mathcal{G}(\bar{z}_0) \ri) - \mathcal{G}(z) \ri) : \mathbb{C}^{2N} \to \mathbb{C}^{2N} .
	\end{equation}
	The two representations of the self-adjoint extensions are completely equivalent, {\it i.e.}, there is a one-to-one correspondence between the two families, which we denote by $ \Theta(B) $, $ \Theta: M_{2N,\,\mathrm{Herm}}(\mathbb{C}) \mapsto M_{2N,\,\mathrm{Herm}}(\mathbb{C}) $ (see \cref{prop:BTheta} for the explicit form of such a change of parametrization).
\end{remark}	

We now focus on the main spectral and scattering properties of the operators $  H_{N}^{(B)} $. The crucial ingredient in this framework is an explicit expression of the resolvent operator of any self-adjoint realization. We start by observing that the resolvent of the Friedrichs extension admits a convenient representation (see \cref{lemma: Tn}):
			\begin{equation}
				R^{(\mathrm{F})}_{N}(z) : = \big( H^{(\mathrm{F})}_N - z \big)^{-1}  = \sum_{n \,=\, 0}^{N}\, e^{- i \mathbf{S}_{n}(\mathbf{x}_n) \cdot (\mathbf{x} - \mathbf{x}_n)}\xi_{n}\, R^{(\mathrm{F})}_{n}(z)\, \xi_{n}\,e^{i \mathbf{S}_{n}(\mathbf{x}_n) \cdot (\mathbf{x} - \mathbf{x}_n)}\, \big[1 + T_N(z)\big]^{-1} , \label{eq: RNFried}	
			\end{equation}
			where, for any $n \in \{1,\dots,N\}$, $ R^{(\mathrm{F})}_{n}(z) : =  ( H^{(\mathrm{F})}_n - z )^{-1} $ and $ H^{(\mathrm{F})}_n = (-i \nabla \!+\! \mathbf{A}_{n})^2 $ stands for  the Friedrichs realization of the Schr\"odinger operator corresponding to a single AB flux of intensity $\alpha_n$, placed at $\mathbf{x}_{n}$. We recall that the integral kernel of $ R^{(\mathrm{F})}_{n}(z) $ can be expressed as (see \cite[Eq.\! (3.2)]{AT98} and \cite[Eqs. 10.27.6-7]{OLBC10})
	\begin{equation}\label{eq: RnAT}
		R^{(\mathrm{F})}_{n}\big(z; r_n,\theta_n; r'_n,\theta'_n\big) 
			= \sum_{\ell \,\in\, \mathbb{Z}} \tfrac{1}{2 \pi} \; I_{|\ell + \alpha|}\big(\!- i \sqrt{z} \, (r_n \wedge r'_n)\big)\,K_{|\ell + \alpha|}\big(\!- i \sqrt{z} \,(r_n \vee r'_n)\big) \; e^{i \ell (\theta_n - \theta'_n)} \,.
	\end{equation}
Here and in the sequel for any complex number $z \in \mathbb{C} \setminus \mathbb{R}^{+}$ we always consider the determination of the square root with $\Im \sqrt{z} > 0$.
By convention, we identify $R_{0}^{(F)}(z) \equiv R_{0}(z) := (-\Delta - z)^{-1} $ with the resolvent of the free Hamiltonian, whose integral kernel is (see \cite[Eqs. 10.27.6-7]{OLBC10})
	\begin{equation*}
		R_{0}(z; \mathbf{x}; \mathbf{x}') 
		= \tfrac{i}{4}\, H^{(1)}_{0}\big(\sqrt{z} \;|\mathbf{x} - \mathbf{x}'|\,\big)
		= \tfrac{1}{2 \pi}\, K_{0}\big(\!-i \sqrt{z} \,|\mathbf{x} - \mathbf{x}'|\big) \,.
	\end{equation*}
On the other hand, the operator $ T_N $ appearing in \eqref{eq: RNFried} is given by
\begin{align}
		T_N(z) &:= \sum_{n \,=\, 0}^{N}  e^{-i \mathbf{S}_{n}(\mathbf{x}_n) \cdot (\mathbf{x} - \mathbf{x}_n)} P_n\, R^{(\mathrm{F})}_{n}(z)\, \xi_{n}\,e^{i \mathbf{S}_{n}(\mathbf{x}_n) \cdot (\mathbf{x} - \mathbf{x}_n)}\,, \label{eq: TNdef}\\
		P_n &:= 2 \,\lf(\check{\mathbf{S}}_{n} \xi_{n} - i \nabla \xi_{n}\ri) \!\cdot\! (-i\nabla \!+\! \mathbf{A}_{n})
				+ \check{\mathbf{S}}_{n}^2 \xi_{n} + 2 \,\check{\mathbf{S}}_{n}\! \cdot ( - i \nabla \xi_{n})\! - \Delta \xi_{n}\,.
\end{align}

Combining the above representation of the Friedrichs resolvent with the Kre{\u\i}n formula for the resolvent of the self-adjoint extensions (recall \cref{rem: krein}), {\it i.e.},
		\begin{equation*}
		R_N^{(B)}(z) := \big( H^{(\Theta(B))}_N - z \big)^{-1} = R_N^{(\mathrm{F})}(z) + \mathcal{G}(z) \big[ \Theta(B) + \Lambda(z) \big]^{-1} \breve{\mathcal{G}}(z)\,,
		\end{equation*}
where the explicit form of the map $ \Theta(B) $ is given in \eqref{eq: TeB}, we are able to prove the following results.

\begin{proposition}[Spectral properties]
	\label{pro: spectrum}
	\mbox{}	\\
	Let $ N \in \N $ and, for all $n = 1,\dots,N$, let $\alpha_n \in (0,1)$. Then, for any Hermitian matrix $B \in M_{2N,\,\mathrm{Herm}}(\mathbb{C}) $, 
	\begin{equation*}
		\sigma_{\mathrm{ac}}\big(H_{N}^{(B)}\big) = \sigma_{\mathrm{ac}}(-\Delta) = [0,+\infty)\,.
	\end{equation*}
	Furthermore, $ \sigma_{\mathrm{disc}}\big(H_{N}^{(\mathrm{F})}\big) = \varnothing $, while $ \sigma_{\mathrm{disc}}\big(H_{N}^{(B)}\big) $ contains at most $ 2N $ negative eigenvalues. More precisely,
		\begin{equation*}
			-\lambda^2 \in \sigma_{\mathrm{disc}}\big(H_{N}^{(B)}\big) \quad (\lambda > 0) \quad \Longleftrightarrow \quad \ker \lf[ \Theta(B) + \Lambda(-\lambda^2) \ri] \neq \varnothing\,,
		\end{equation*}
	and the associated eigenvectors are of the form $ \mathcal{G}(-\lambda^2)\,\mathbf{q} $, with $ \mathbf{q} \in \ker \lf[ \Theta(B) + \Lambda(-\lambda^2) \ri] $.
\end{proposition}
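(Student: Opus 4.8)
The plan is to extract all the spectral information from the Kre\u{\i}n resolvent formula
\begin{equation*}
  R_N^{(B)}(z) = R_N^{(\mathrm{F})}(z) + \mathcal{G}(z)\big[\Theta(B)+\Lambda(z)\big]^{-1}\breve{\mathcal{G}}(z)\,,
\end{equation*}
exploiting the single structural fact that the correction to the Friedrichs resolvent is a finite-rank operator of rank at most $2N$, hence trace class. This drives both halves of the statement.

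For the \emph{absolutely continuous spectrum}, since $R_N^{(B)}(z) - R_N^{(\mathrm{F})}(z)$ is trace class, the Birman--Kuroda (Kato--Rosenblum) theorem in its resolvent form yields existence and completeness of the wave operators for the pair $\big(H_N^{(B)},H_N^{(\mathrm{F})}\big)$, so their absolutely continuous parts are unitarily equivalent and $\sigma_{\mathrm{ac}}\big(H_N^{(B)}\big)=\sigma_{\mathrm{ac}}\big(H_N^{(\mathrm{F})}\big)$ for every Hermitian $B$. It then suffices to identify $\sigma_{\mathrm{ac}}\big(H_N^{(\mathrm{F})}\big)$, which I would do by comparison with the free Laplacian: the magnetic field being compactly supported, the pair $\big(H_N^{(\mathrm{F})},-\Delta\big)$ admits complete wave operators (classical for a single flux, and the content of the scattering analysis in general), whence $\sigma_{\mathrm{ac}}\big(H_N^{(\mathrm{F})}\big)=\sigma_{\mathrm{ac}}(-\Delta)=[0,+\infty)$.

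For the \emph{discrete spectrum}, I would first dispose of the Friedrichs case: by \cref{prop:QF} one has $H_N^{(\mathrm{F})}\ge 0$, and, since a trace-class resolvent difference leaves the essential spectrum invariant, $\sigma_{\mathrm{ess}}\big(H_N^{(B)}\big)=\sigma_{\mathrm{ess}}\big(H_N^{(\mathrm{F})}\big)=[0,+\infty)$ for all $B$; a non-negative operator with essential spectrum $[0,+\infty)$ has no isolated eigenvalue of finite multiplicity, giving $\sigma_{\mathrm{disc}}\big(H_N^{(\mathrm{F})}\big)=\varnothing$ and confining $\sigma_{\mathrm{disc}}\big(H_N^{(B)}\big)$ to $(-\infty,0)$. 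For general $B$ I fix $\lambda>0$ and set $z=-\lambda^2$. Because $H_N^{(\mathrm{F})}\ge 0$, the factor $R_N^{(\mathrm{F})}(-\lambda^2)$ is analytic and bounded, so the only possible singularity of $R_N^{(B)}$ at $-\lambda^2$ is carried by $\big[\Theta(B)+\Lambda(-\lambda^2)\big]^{-1}$. Hence $-\lambda^2$ is an eigenvalue exactly when the $2N\times 2N$ Hermitian matrix $\Theta(B)+\Lambda(-\lambda^2)$ is singular; and if $\mathbf{q}\in\ker\big[\Theta(B)+\Lambda(-\lambda^2)\big]$, then taking $\varphi_z=0$ in the domain description \eqref{eq:HNTheta} (for which $\bm{\tau}\varphi_z=0=[\Theta(B)+\Lambda(z)]\mathbf{q}$) gives $\psi=\mathcal{G}(-\lambda^2)\mathbf{q}$ with $\big(H_N^{(B)}+\lambda^2\big)\psi=0$, which is the asserted form of the eigenvectors.

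It remains to bound the total multiplicity by $2N$, for which I would use the Nevanlinna (operator-monotone) character of $\Lambda$ on the negative axis. Differentiating \eqref{eq: defLambda} and using the standard Kre\u{\i}n identity, $\tfrac{d}{dz}\Lambda(z)$ equals the positive-definite matrix $\mathcal{G}(\bar z)^{*}\mathcal{G}(z)$, which on $(-\infty,0)$ reduces to $\mathcal{G}(z)^{*}\mathcal{G}(z)>0$ by injectivity of $\mathcal{G}(z)$; hence $M(\lambda):=\Theta(B)+\Lambda(-\lambda^2)$ is a strictly monotone family of Hermitian matrices in $\lambda$. By min-max each of its $2N$ eigenvalue branches is strictly monotone and therefore vanishes at most once, so summing $\dim\ker M(\lambda)$ over all $\lambda>0$ produces at most $2N$ negative eigenvalues counted with multiplicity. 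I expect the genuine obstacle to be the identification $\sigma_{\mathrm{ac}}\big(H_N^{(\mathrm{F})}\big)=[0,+\infty)$: the Aharonov--Bohm potential is long-range, so one cannot argue from a naive trace-class estimate on $R_N^{(\mathrm{F})}(z)-R_0(z)$ but must invoke the compact support of the field together with gauge covariance (i.e.\ the scattering comparison). The monotonicity bookkeeping for the $2N$ bound is a secondary technical point, hinging on the precise sign and the strict positivity of $\Lambda'$.
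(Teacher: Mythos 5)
Your proposal is correct and follows essentially the same route as the paper: the finite-rank (rank at most $2N$) Kre{\u\i}n resolvent difference plus the Kuroda--Birman theorem handles the absolutely continuous spectrum, combined with the separately established scattering/spectral analysis of the Friedrichs extension (via comparison with an auxiliary smooth magnetic Hamiltonian, exactly as you anticipate in your closing remark) to get $\sigma_{\mathrm{ac}}\big(H_N^{(\mathrm{F})}\big)=[0,+\infty)$ and $\sigma_{\mathrm{disc}}\big(H_N^{(\mathrm{F})}\big)=\varnothing$. The only divergence is in bookkeeping: where you analyze the singularities of the Kre{\u\i}n formula directly and bound the total eigenvalue count by strict monotonicity of $\lambda\mapsto\Lambda(-\lambda^2)$ (a correct, self-contained argument), the paper simply invokes Posilicano's abstract result \cite[Theorem 3.4]{Po04} for the bijection $\mathbf{q}\mapsto\mathcal{G}(-\lambda^2)\,\mathbf{q}$ between $\ker\big[\Theta(B)+\Lambda(-\lambda^2)\big]$ and $\ker\big(H_N^{(B)}+\lambda^2\big)$, together with the rank-$\leq 2N$ structure of the perturbation.
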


Concerning the scattering, we consider the pair $\big( H_{N}^{(B)}, - \Delta \big)$ and define the wave operators  
	\begin{equation*}
		\Omega_{\pm}\lf(H_{N}^{(B)}, - \Delta \ri) \,:=\ \slim_{t \to \mp \infty}\, e^{i t H_{N}^{(B)}}  e^{i t  \Delta}\,.
	\end{equation*}
	To avoid misunderstandings, let us specify the meaning of completeness for wave operators understood here. Following \cite[\S XI.3]{RS81}, for any pair of self-adjoint operators $A,B$ acting in a given Hibert space $\mathcal{H}$, we say that the wave operators $ \Omega_{\pm}(A,B)  $ are {\it complete} if
			\begin{equation*}
				\mbox{ran}\,\Omega_{\pm}(A,B) = \mbox{ran} \,\Omega_{\pm}(B,A) = \mbox{ran}\,P_{\mathrm{ac}}(A) \,,
			\end{equation*}
		where $ P_{\mathrm{ac}}(A) $ stands for the spectral projector onto the absolute continuity subspace of $ A $. We recall that whenever the wave operators $\Omega_{\pm}(A,B)$ exist, they are complete if and only if $\Omega_{\pm}(B,A)$ exist as well (see \cite[Proposition 3 (vol. III, p. 19)]{RS81}). On the other hand, {\it asymptotic completeness} requires also that $ \sigma_{\mathrm{sc}}(A) = \varnothing $.

\begin{proposition}[Scattering properties]
		\label{cor:ScatteringHbeta}
		\mbox{}	\\
		Let $ N \in \N $ and, for all $n = 1,\dots,N$, let $\alpha_n \in (0,1)$. Then, for any  $B \in M_{2N,\,\mathrm{Herm}}(\mathbb{C}) \cup \lf\{ \infty \ri\}$, the wave operators $\Omega_{\pm}(H_{N}^{(B)}, - \Delta)$ exist and are complete.
\end{proposition}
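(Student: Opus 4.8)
The plan is to reduce everything, via the Kre{\u\i}n resolvent formula of \cref{rem: krein} and the Birman--Kuroda trace-class criterion, to the single comparison of the Friedrichs Hamiltonian $H_N^{(\mathrm F)}$ with the free Laplacian $-\Delta$. The starting observation is that the Kre{\u\i}n formula
\[
	R_N^{(B)}(z) - R_N^{(\mathrm F)}(z) = \mathcal{G}(z)\big[\Theta(B)+\Lambda(z)\big]^{-1}\breve{\mathcal{G}}(z)
\]
exhibits the difference between any extension $H_N^{(B)}$ and the Friedrichs realization as an operator factoring through $\mathbb{C}^{2N}$, hence of rank at most $2N$ and a fortiori trace class. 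By the Birman--Kuroda theorem \cite[\textsection XI.3]{RS81}, the wave operators $\Omega_\pm(H_N^{(B)},H_N^{(\mathrm F)})$ therefore exist and are complete for every $B \in M_{2N,\,\mathrm{Herm}}(\mathbb{C})$. In view of the chain rule for wave operators it then suffices to establish existence and completeness of $\Omega_\pm(H_N^{(\mathrm F)},-\Delta)$, which is also precisely the statement for $B=\infty$.

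For the Friedrichs--free pair the situation is more delicate, and this is where the main obstacle lies. Although the magnetic field is compactly supported, each potential $\mathbf{A}_n$ decays only like $|\mathbf{x}|^{-1}$, so the perturbation $2(\sum_n \mathbf{A}_n)\cdot(-i\nabla)+|\sum_n \mathbf{A}_n|^2$ is borderline long-range; indeed a direct evaluation of the diagonal of $R_N^{(\mathrm F)}(z)-R_0(z)$ reveals a logarithmically divergent trace coming from the $|\mathbf{A}|^2\sim|\mathbf{x}|^{-2}$ tail against the two-dimensional trace measure, so that this resolvent difference is \emph{not} trace class and Birman--Kuroda cannot be applied directly to the pair $(H_N^{(\mathrm F)},-\Delta)$. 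The way around this is to exploit the purely transverse (divergence-free) character of the Aharonov--Bohm potentials: after decomposition into angular-momentum channels the cross term acts as $|\mathbf{x}|^{-2}$ times the angular momentum, so that in every channel the effective perturbation decays like $|\mathbf{x}|^{-2}$ and is genuinely short-range.

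Concretely, I would introduce the single effective Aharonov--Bohm Hamiltonian $H_*$ carrying the total flux $\alpha_*:=\sum_n\alpha_n$ at a fixed pole (reduced to a fractional flux by the gauge unitary of \cref{rem: an1}), for which existence and completeness of $\Omega_\pm(H_*,-\Delta)$ are classical \cite{Ru83}. One then checks that $R_N^{(\mathrm F)}(z)-R_*(z)$ \emph{is} trace class: near each $\mathbf{x}_n$ the difference is supported in a bounded set and its Aharonov--Bohm singularities $r_n^{-\alpha_n}$, $r_n^{-(1-\alpha_n)}$ are locally integrable, while in the far field the leading $|\mathbf{x}|^{-1}$ vector-potential tail and the attendant $|\mathbf{x}|^{-2}$ effective tail of the two configurations coincide, both being fixed by the common total flux $\alpha_*$, and hence cancel, leaving a remainder that decays one order faster and integrates against the trace measure. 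Birman--Kuroda then yields existence and completeness of $\Omega_\pm(H_N^{(\mathrm F)},H_*)$, and the chain rule gives $\Omega_\pm(H_N^{(\mathrm F)},-\Delta)=\Omega_\pm(H_N^{(\mathrm F)},H_*)\,\Omega_\pm(H_*,-\Delta)$, completing the reduction.

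The technically heaviest step is the trace-class bound on $R_N^{(\mathrm F)}(z)-R_*(z)$, and I would carry it out through the explicit representation \eqref{eq: RNFried} and its single-flux counterpart for $R_*$. Writing both resolvents in the form $A(z)[1+T_N(z)]^{-1}$, the difference splits into a localized part, where the single-flux resolvents $R_n^{(\mathrm F)}$ differ from the reference only on bounded sets, and a part governed by the difference of the operators $T_N$ of \eqref{eq: TNdef} for the two configurations; the slowly decaying $|\mathbf{x}|^{-1}$ tails carried by $\check{\mathbf{S}}_n\xi_n$ inside $P_n$ are, at leading order, common to $H_N^{(\mathrm F)}$ and $H_*$ and hence cancel in that difference. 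Throughout one must control the first-order operators $P_n$ against the Bessel-type singularities of the kernels $R_n^{(\mathrm F)}$ in \eqref{eq: RnAT}, using, exactly as flagged after \eqref{eq: defXi}, that $\check{\mathbf{S}}_n$ vanishes linearly at $\mathbf{x}_n$ to absorb the extra singularity produced by the gradient, together with the exponential off-diagonal decay of the kernels for $z\notin[0,+\infty)$ to obtain the Hilbert--Schmidt factorizations. Finally, the absolutely continuous spectrum computed in \cref{pro: spectrum} guarantees that $\mathrm{ran}\,P_{\mathrm{ac}}(H_N^{(B)})$ is the spectral subspace of $[0,+\infty)$, so that the completeness obtained is exactly the one declared before the statement. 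The main obstacle throughout remains the borderline long-range decay of the potentials, which is what forces the comparison to be made against a reference sharing the same total flux rather than against $-\Delta$ directly.
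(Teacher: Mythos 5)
Your opening reduction is exactly the paper's: the Kre{\u\i}n formula exhibits $R_N^{(B)}(z)-R_N^{(\mathrm F)}(z)$ as a rank-$\leq 2N$ (hence trace-class) perturbation, Kuroda--Birman gives existence and completeness of $\Omega_{\pm}\big(H_N^{(B)},H_N^{(\mathrm F)}\big)$, and the chain rule reduces everything to the pair $\big(H_N^{(\mathrm F)},-\Delta\big)$, i.e.\ to \cref{thm: OmHNH0}. Your overall architecture for that remaining pair also mirrors the paper's: interpolate an auxiliary operator carrying the same long-range tail, prove a trace-ideal bound against it, and quote a known scattering result for the auxiliary operator versus $-\Delta$. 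The paper's auxiliary operator is the \emph{smooth} magnetic Hamiltonian $H_{\mathbf A}^{(\mathrm F)}$ of \cref{lemma: HA}, which coincides with $\sum_n \mathbf A_n$ outside a compact set, with Loss--Thaller (\cref{prop: RAR0scatt}) playing the role you assign to Ruijsenaars. The gap lies in your central technical claim, that $R_N^{(\mathrm F)}(z)-R_*(z)$ is trace class, and the sketch you give for it fails at two distinct points.

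First, the far-field cancellation you invoke is insufficient. The zeroth-order difference $\big|\sum_n\mathbf A_n\big|^2-|\mathbf A_*|^2=\mathcal O(|\mathbf x|^{-3})$ is indeed integrable, but the first-order perturbation $2\big(\sum_n\mathbf A_n-\mathbf A_*\big)\cdot(-i\nabla)$ has a coefficient decaying only like $|\mathbf x|^{-2}$, and $\int_{|\mathbf x|>1}|\mathbf x|^{-2}\,\diff\mathbf x=\infty$ in two dimensions --- this is precisely the borderline logarithmic divergence you yourself used to rule out the direct comparison with $-\Delta$, so your argument is internally inconsistent. What would save the day is a gauge fact you never state: outside a compact set $\sum_n\mathbf A_n-\mathbf A_*=\nabla\chi$ with $\chi=\sum_n\alpha_n\lf[\arg(\mathbf x-\mathbf x_n)-\arg\mathbf x\ri]=\mathcal O(|\mathbf x|^{-1})$ single-valued, so one may conjugate $H_*$ by $e^{i\chi}$ to make the perturbation compactly supported (at the price of controlling the wave operators produced by this conjugation). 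But at that point you have essentially rebuilt the paper's $H_{\mathbf A}$. Second, even once the perturbation is compactly supported, trace class of the \emph{plain} resolvent difference does not follow: the perturbation is first order, and in 2D the kernel of $\nabla R(z)$ has an $|x-y|^{-1}$ diagonal singularity which is not locally square-integrable, so the natural Hilbert--Schmidt$\,\times\,$Hilbert--Schmidt factorization breaks down. This is exactly why the paper proves only that the resolvent difference is Hilbert--Schmidt (\cref{lemma: RNRAS2}), then establishes trace class for the difference of resolvent \emph{squares} (\cref{lemma: RN2RA2S1}, a nontrivial commutator computation that uses $\dom\big(H_{\mathbf A}^{(\mathrm F)}\big)=H^2(\mathbb R^2)$, a fact unavailable for your singular $H_*$), and invokes the invariance principle rather than plain Kuroda--Birman. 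As written, your key lemma is therefore unsupported, and its proposed justification contains the same logarithmic divergence it was designed to avoid.
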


\section{Proofs}

\subsection{The Friedrichs extension}

We discuss first the properties of the Friedrichs extension, which will play a key role in the analysis of all self-adjoint realizations.

\begin{proof}[Proof of \cref{prop:QF}] 
\mbox{}	

{\it i)} Non-negativity and closedness are obvious consequences of \eqref{eq: QalN} and \eqref{eq: QFr}, respectively. Let us account for \eqref{eq: domQF}, showing the reciprocal inclusion of the sets on its left and right sides. On one hand, we have
	\begin{equation*}
		Q_{N}[\psi] \leqslant 2\, \|\nabla \psi\|_{2}^{2} + 2 \left\|\mbox{$\sum_{n \,=\, 1}^{N}$}\mathbf{A}_{n} \psi \right\|_{2}^{2} \leqslant 2\, \|\nabla \psi\|_{2}^{2} + 2N\,\mbox{$\sum_{n \,=\, 1}^{N}$} \|\mathbf{A}_{n} \psi \|_{2}^{2}\,,
	\end{equation*}
which suffices to infer that the r.h.s. of \eqref{eq: domQF} is a subset of $\dom \big[Q_{N}^{(\mathrm{F})}\big]$. On the other hand, let us consider a partition of unity as in \eqref{eq: xinsupp} and \eqref{eq: sumxin1}: we notice that $\mathbf{S}_n \xi_{n} \in L^{\infty}(\mathbb{R}^2)$ for all $n \in \{0,1,\dots,N\}$. Then, starting again from \eqref{eq: QalN} and using a variant of the IMS localization formula (see, {\it e.g.}, \cite[Thm.\,3.2]{CFKS87}), we derive the following chain of inequalities, for any $ \epsilon \in (0,1)$ and some suitable $ C_\epsilon > 0 $:
	\bmln{
		Q_{N}[\psi] 
			= \tx\sum_{n \,=\, 0}^{N} \lf\| \lf(-i \nabla \!+\! \tx\sum_{m \,=\, 1}^{N} \,\mathbf{A}_{m}\ri)(\xi_{n} \psi) \ri\|_{2}^{2} - \tx\sum_{n \,=\, 0}^{N}  \lf\| \lf(\nabla \xi_{n} \ri)\psi \ri\|_{2}^{2} \\
		\geqslant (1-\epsilon) \, \tx\sum_{n \,=\, 0}^{N} \lf\| \lf(-i \nabla \!+\! \mathbf{A}_{n}\ri)(\xi_{n} \psi) \ri\|_{2}^{2} - \tfrac{1- \epsilon}{\epsilon}\, \tx\sum_{n \,=\, 0}^{N} \lf\| \mathbf{S}_{n} \xi_{n} \psi \ri\|_{2}^{2} - \tx\sum_{n \,=\, 0}^{N}  \lf\| \lf(\nabla \xi_{n} \ri)\psi \ri\|_{2}^{2} \\
		\geqslant (1-\epsilon)^2 \lf\|\nabla (\xi_{0} \psi) \ri\|_{2}^{2} + (1-\epsilon)^2 \,\mbox{$\sum_{n \,=\, 1}^{N}$} (1-\alpha_n)^2 \lf\|\nabla (\xi_{n} \psi) \ri\|_{2}^{2} \\
			+ \epsilon (1-\epsilon) \,\mbox{$\sum_{n \,=\, 1}^{N}$} \min \left\{1 \,, \tfrac{(1 - \alpha_n)^2}{\alpha_n^2}\right\} \lf\| \mathbf{A}_n  \xi_n\psi \ri\|_{2}^{2} - C_{\epsilon} \lf\| \psi \ri\|_{2}^{2} \\
		\geqslant (1-\epsilon)^2 \min_{n\,=\, 1,\dots,N} (1-\alpha_n)^2  \lf\|\nabla \psi \ri\|_{2}^{2} + \epsilon (1-\epsilon) \lf( \min_{n\,=\,1,\dots,N} \left\{1 \,, \tfrac{(1 - \alpha_n)^2}{\alpha_n^2}\right\} \ri) \mbox{$\sum_{n \,=\, 1}^{N}$} \lf\| \mathbf{A}_n \xi_n \psi \ri\|_{2}^{2}  - C_{\epsilon} \lf\| \psi \ri\|_{2}^{2} \,.
	}
Here we have used the lower bounds (see \cite[Eq. (2.4)]{CF21})
	\begin{equation*} 
		\lf\| \lf(-i \nabla + \mathbf{A}_{n}\ri) \psi\ri\|_{2}^{2} \geqslant (1-\alpha_n)^2 \,\lf\|\nabla \psi \ri\|_{2}^{2}\,, \qquad
		\lf\| \lf(-i \nabla + \mathbf{A}_{n}\ri) \psi\ri\|_{2}^{2} \geqslant 
			\min \left\{1 \,, \tfrac{(1 - \alpha_n)^2}{\alpha_n^2}\right\}\,\lf\| \mathbf{A}_n \psi \ri\|_{2}^{2}.
	\end{equation*}
	Summing up, we infer that for any $ c > 0 $ small enough, there exists a finite $ \gamma_c > 0 $ such that
	\begin{equation*}
		Q_{\alpha,N}[\psi] + \gamma_c\,\|\psi\|_{2}^{2} \geqslant c \left( \lf\|\nabla \psi \ri\|_{2}^{2}  + \mbox{$\sum_{n \,=\, 1}^{N}$} \lf\| \mathbf{A}_n\, \xi_n \psi \ri\|_{2}^{2} \right).
	\end{equation*}
This shows that $\dom \big[ Q_{N}^{(\mathrm{F})} \big]$ is a subset of the r.h.s. of \eqref{eq: domQF}, thus proving the identity \eqref{eq: domQF}.

Finally, for any $n \in \{1,\dots,N\}$ the limits in \eqref{eq: limdomQF} can be deduced by the same arguments described in \cite[\S{2.1}]{CF21}, making reference to the representation in polar coordinates centered at $\mathbf{x}_{n}$ and building on the square-integrability of $\nabla \psi,\mathbf{A}_{n} \psi$ near $\mathbf{x}_{n}$ for any $\psi \in \dom \big[ Q_{N}^{(\mathrm{F})} \big]$.

{\it ii)}
Given the sesquilinear form $ Q_{N}^{(\mathrm{F})}[\psi_1,\psi_2] $ defined by polarization starting from the quadratic form $Q_{N}^{(\mathrm{F})}$, the unique operator associated to it (see, {\it e.g.}, \cite[Thm.\,VIII.15]{RS81}) is
	\begin{equation*}
		\dom \big(H_{N}^{(\mathrm{F})}\big) = \lf\{ \psi_{2} \!\in\! \dom\big[ Q_{N}^{(\mathrm{F})} \big] \;\Big|\; \exists\,w \!\in\! L^2(\mathbb{R}^2), Q_{N}^{(\mathrm{F})}[\psi_{1},\psi_{2}] = \braketl{\psi_{1}}{w}_{L^2(\mathbb{R}^2)},\; \forall\, \psi_{1} \!\in\! \dom\big[Q_{N}^{(\mathrm{F})}\big] \ri\}\,,
	\end{equation*}
with $H_{N}^{(\mathrm{F})} \psi_{2} := w$ for all $\psi_{2} \!\in\! \dom \big(H_{N}^{(\mathrm{F})}\big)$. For any $\psi_{1},\psi_{2}\! \in\! \dom\big[Q_{N}^{(\mathrm{F})}\big]$, integrating by parts and noting that $\mathbf{A}_{n} \cdot (\mathbf{x} - \mathbf{x}_{n}) = 0$, we obtain
	\bmln{
		Q_{N}^{(\mathrm{F})}[\psi_{1},\psi_{2}] 
  		= \lim_{r \to 0^+}\! \int_{\mathbb{R}^2 \,\setminus\, \cup_{n = 1}^N B_{r}(\mathbf{x}_n)}\hspace{-0.5cm}\diff\mathbf{x}\; \lf[ \left(-i \nabla + \tx\sum_{\ell \,=\, 1}^N \mathbf{A}_{\ell}\right)\psi_{1} \ri]^* \cdot \left(-i \nabla + \tx\sum_{m \,=\, 1}^N \mathbf{A}_{m}\right)\psi_{2} \\
		= \lim_{r \to 0^+}\! \left[\int_{\mathbb{R}^2 \,\setminus\, \cup_{n = 1}^N B_{r}(\mathbf{x}_n)}\hspace{-0.9cm}\diff\mathbf{x}\; \psi_{1}^{*} \left(-i \nabla \!+\! \tx\sum_{m \,=\, 1}^N \mathbf{A}_{m}\right)^2\! \psi_{2}
- i \sum_{n \,=\, 1}^N \int_{\partial B_{r}(\mathbf{x}_n)}\hspace{-0.6cm}\diff\Sigma_r(\mathbf{x})\; \psi_{1}^{*}\; \tfrac{\mathbf{x} - \mathbf{x}_{n}}{|\mathbf{x} - \mathbf{x}_{n}|} \cdot \left(-i\nabla + \mathbf{S}_{n}\right)\psi_{2} \right] .
	}
Concerning the boundary terms, using the Cauchy-Schwarz inequality and the relations in \eqref{eq: limdomQF}, we infer
	\bmln{
		\left|\int_{\partial B_{r}(\mathbf{x}_n)}\hspace{-0.5cm}\diff\Sigma_r(\mathbf{x})\; \psi_{1}^{*}\; \tfrac{\mathbf{x} - \mathbf{x}_{n}}{|\mathbf{x} - \mathbf{x}_{n}|} \cdot \left(-i\nabla + \mathbf{S}_{n}\right)\psi_{2}\right| \\
		\leqslant
			\lf\| \psi_1 \ri\|_{L^2(\partial B_{r}(\mathbf{x}_n))} \lf( \lf\| \partial_r \psi_2 \ri\|_{L^2(\partial B_{r}(\mathbf{x}_n))} + \lf\|\mathbf{S}_{n}\ri\|_{L^{\infty}\lf(\partial B_r(\mathbf{x}_{n})\ri)} \lf\| \psi_2 \ri\|_{L^2(\partial B_{r}(\mathbf{x}_n)}  \right) \\
		= 2\pi\, r \,\avg{\,|\psi_1|^2}_{n}^{1/2} \lf( \avg{\,|\partial_r \psi_2|^2}_{n}^{1/2} + \lf\|\mathbf{S}_{n}\ri\|_{L^{\infty}(B_r(\mathbf{x}_{n}))} \avg{\,|\psi_2|^2}_{n}^{1/2} \ri)
\;\xrightarrow[r \to 0^+]{}\; 0\,.
	}
Summing up, the above results show that $H_{N}^{(\mathrm{F})} \psi_{1} = H_{N} \psi_{1}$, which in turn yields the thesis.
\end{proof}

We now address the proof of the representation \eqref{eq: RNFried} of the Friedrichs operator resolvent, which is going to play a very important role in the sequel. We henceforth denote by $ \lf\| \: \cdot \: \ri\|_{\mathcal{L}^n} : = \lf\| \: \cdot \: \ri\|_{\mathcal{L}^n(L^2(\mathbb{R}^2))} $, $ n \in [1, + \infty] $, the Schatten norm of order $ n $, and by $ \mathcal{L}^n : = \mathcal{L}^n(L^2(\mathbb{R}^2)) $ the corresponding operators ideal.

	\begin{proposition}[Resolvent of $ H^{(\mathrm{F})}_{N} $]
		\label{lemma: Tn} 
		\mbox{}		\\
		For any $z \in \mathbb{C} \setminus [0,+\infty)$, the operator $T_{N}(z)$ defined in  \eqref{eq: TNdef} is bounded and \eqref{eq: RNFried} holds true, {\it i.e.},
			\begin{equation*}
				R^{(\mathrm{F})}_{N}(z) = \sum_{n = 0}^{N} e^{- i \mathbf{S}_{n}(\mathbf{x}_n) \cdot (\mathbf{x} - \mathbf{x}_n)}\xi_{n}\, R^{(\mathrm{F})}_{n}(z)\, \xi_{n}\,e^{i \mathbf{S}_{n}(\mathbf{x}_n) \cdot (\mathbf{x} - \mathbf{x}_n)}\, \big[1 + T_N(z)\big]^{-1} \,. 
			\end{equation*}
	\end{proposition}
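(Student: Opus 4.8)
The plan is to treat the right-hand side of \eqref{eq: RNFried}, stripped of the factor $[1+T_N(z)]^{-1}$, as a parametrix for $H^{(\mathrm{F})}_N-z$. Set
\[
	\widetilde{R}_N(z) := \sum_{n=0}^{N} e^{-i\mathbf{S}_n(\mathbf{x}_n)\cdot(\mathbf{x}-\mathbf{x}_n)}\,\xi_n\,R^{(\mathrm{F})}_n(z)\,\xi_n\,e^{i\mathbf{S}_n(\mathbf{x}_n)\cdot(\mathbf{x}-\mathbf{x}_n)},
\]
so that the claim reduces to the operator identity $(H^{(\mathrm{F})}_N-z)\,\widetilde{R}_N(z)=1+T_N(z)$ together with the invertibility of $1+T_N(z)$ on $L^2(\mathbb{R}^2)$. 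First I would check that $\widetilde{R}_N(z)$ maps $L^2(\mathbb{R}^2)$ into $\dom(H^{(\mathrm{F})}_N)$: on $B_{r_*/2}(\mathbf{x}_n)$ one has $\xi_n\equiv 1$, so near each flux the $n$-th summand is a gauge phase times $R^{(\mathrm{F})}_n(z)(\dots)$ and therefore inherits the admissible boundary behaviour \eqref{eq: limdomQF} of the single-flux Friedrichs resolvent, while away from the fluxes the cut-offs render every summand smooth; hence $H^{(\mathrm{F})}_N$ may be applied termwise.

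The core is the termwise computation of $(H^{(\mathrm{F})}_N-z)$ applied to the $n$-th summand. On $\supp\xi_n$ only the $n$-th singularity is present and $H_N=(-i\nabla+\mathbf{A}_n+\mathbf{S}_n)^2$ with $\mathbf{S}_n$ smooth there; conjugation by the phase $e^{-i\mathbf{S}_n(\mathbf{x}_n)\cdot(\mathbf{x}-\mathbf{x}_n)}$ replaces $\mathbf{S}_n$ by $\check{\mathbf{S}}_n=\mathbf{S}_n-\mathbf{S}_n(\mathbf{x}_n)$, which vanishes linearly at $\mathbf{x}_n$ by \eqref{eq: checkSnLip}. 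Using $\nabla\cdot\mathbf{A}_m=0$ one expands
\[
	(-i\nabla+\mathbf{A}_n+\check{\mathbf{S}}_n)^2 = (-i\nabla+\mathbf{A}_n)^2 + 2\,\check{\mathbf{S}}_n\cdot(-i\nabla+\mathbf{A}_n) + \check{\mathbf{S}}_n^2,
\]
and commutes $(-i\nabla+\mathbf{A}_n)^2$ through $\xi_n$ via the magnetic Leibniz rule. The leading term yields $\xi_n\,(H^{(\mathrm{F})}_n-z)R^{(\mathrm{F})}_n(z)\,\xi_n(\dots)=\xi_n^2(\dots)$, the phases cancelling, while every remaining contribution is \emph{exactly} $P_n\,R^{(\mathrm{F})}_n(z)\,\xi_n(\dots)$ with $P_n$ as in \eqref{eq: TNdef}. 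Summing over $n$ and invoking the partition of unity $\sum_{n=0}^{N}\xi_n^2=1$ gives $(H^{(\mathrm{F})}_N-z)\widetilde{R}_N(z)=1+T_N(z)$, hence $\widetilde{R}_N(z)=R^{(\mathrm{F})}_N(z)\,(1+T_N(z))$ since $H^{(\mathrm{F})}_N-z$ is boundedly invertible on $\mathbb{C}\setminus[0,+\infty)$ by \cref{prop:QF}.

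It remains to invert $1+T_N(z)$. Boundedness of $T_N(z)$ follows because $(-i\nabla+\mathbf{A}_n)R^{(\mathrm{F})}_n(z)$ is bounded --- by the energy identity $\|(-i\nabla+\mathbf{A}_n)R^{(\mathrm{F})}_n(z)u\|_2^2=\langle R^{(\mathrm{F})}_n(z)u,(1+zR^{(\mathrm{F})}_n(z))u\rangle$ --- while every coefficient in $P_n$ lies in $L^\infty(\mathbb{R}^2)$ thanks to $\xi_n\in C^2$ with compact support and the linear bound \eqref{eq: checkSnLip}; here it is essential that $\check{\mathbf{S}}_n$ vanishes at $\mathbf{x}_n$, so that the first-order term is no more singular than the magnetic gradient. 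Moreover $T_N(z)$ is compact and analytic in $z$: each summand is a bounded first-/zeroth-order operator with compactly supported coefficients composed with the locally smoothing $R^{(\mathrm{F})}_n(z)$, so local magnetic Rellich compactness applies (in fact one checks $T_N(z)\in\mathcal{L}^p$ for a suitable $p$ via \eqref{eq: RnAT}). Since $\|T_N(z)\|\to 0$ as $z\to-\infty$ along the negative real axis, a Neumann series gives invertibility there, and the analytic Fredholm theorem propagates it to all of $\mathbb{C}\setminus[0,+\infty)$ off a discrete set. That set is empty: at a hypothetical exceptional $z_0$ a kernel vector $u$ of $1+T_N(z_0)$ would satisfy $(H^{(\mathrm{F})}_N-z_0)\widetilde{R}_N(z_0)u=0$, whence $\widetilde{R}_N(z_0)u=0$ by injectivity of $H^{(\mathrm{F})}_N-z_0$, and a support/unique-continuation argument then forces $u=0$. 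Inverting yields $R^{(\mathrm{F})}_N(z)=\widetilde{R}_N(z)(1+T_N(z))^{-1}$, which is \eqref{eq: RNFried}.

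The algebraic identity is essentially bookkeeping once the gauge phase and the expansion above are in place; the real work lies in the analytic points of the last paragraph. Controlling $T_N(z)$ hinges on pairing the boundedness of $(-i\nabla+\mathbf{A}_n)R^{(\mathrm{F})}_n(z)$ with the linear vanishing of $\check{\mathbf{S}}_n$, without which the first-order coefficient near the flux would be too singular; and upgrading to compactness (or Schatten membership) requires the explicit mapping properties of the single-flux resolvent \eqref{eq: RnAT}. Ruling out the exceptional points of the analytic Fredholm argument --- equivalently, the injectivity of $1+T_N(z)$ for every $z\in\mathbb{C}\setminus[0,+\infty)$ --- is the most delicate step, and is where I expect the main difficulty.
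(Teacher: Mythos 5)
Your first three paragraphs reproduce, in essence, the paper's own proof: the parametrix identity $(H_N^{(\mathrm{F})}-z)\,\widetilde R_N(z)=1+T_N(z)$ obtained from the gauge--phase conjugation, the magnetic Leibniz expansion and the partition of unity \eqref{eq: sumxin1}, together with boundedness of $T_N(z)$ via the energy identity for $(-i\nabla+\mathbf{A}_n)R_n^{(\mathrm{F})}(z)$ --- this is exactly \cref{lemma: TN small} --- followed by a Neumann series. Note, however, that the paper \emph{stops} at the Neumann series: its proof of \cref{lemma: Tn} yields invertibility of $1+T_N(z)$ only when $\dist(z,\mathbb{R}^{+})$ is large, and every later use of \eqref{eq: RNFried} is indeed confined to that regime ($\lambda$, $\lambda_0$ large, cf.\ \eqref{eq: TN}). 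Your attempt to cover \emph{all} $z\in\mathbb{C}\setminus[0,+\infty)$ by analytic Fredholm theory is therefore more ambitious than what the paper does, and it is precisely there that your proposal has gaps.

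Two concrete problems. (a) Your justification of compactness (``compactly supported coefficients'') fails for the $n=0$ summand of $T_N(z)$: neither $\xi_0$ nor $\mathbf{S}_0\xi_0$ is compactly supported. Compactness still holds, but because these coefficients vanish at infinity (e.g.\ $|\mathbf{S}_0\xi_0|\leqslant C(1+|\mathbf{x}|)^{-1}$), so that each piece has the form $f(\mathbf{x})\,g(-i\nabla)$ with both $f$ and $g$ vanishing at infinity. (b) More seriously, the exclusion of exceptional points is not proved: you invoke ``a support/unique-continuation argument'' and yourself flag it as the main difficulty. As sketched it would not work --- the summands of $\widetilde R_N(z_0)$ have overlapping supports ($\supp\xi_0$ meets every $\supp\xi_n$), and each factor $\xi_n$ has a large kernel --- but the gap can be closed by an elementary argument requiring no unique continuation. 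Since $H_N^{(\mathrm{F})}-z$ is injective, your identity gives $\ker\big(1+T_N(z)\big)=\ker\widetilde R_N(z)$. Let $\widetilde R_N(z)u=0$ and set $v_n:=\xi_n e^{i\mathbf{S}_n(\mathbf{x}_n)\cdot(\mathbf{x}-\mathbf{x}_n)}u$; since the phases cancel and the $\xi_n$ are real,
\begin{equation*}
0=\big\langle u\,\big|\,\widetilde R_N(z)\,u\big\rangle=\sum_{n=0}^{N}\big\langle v_n\,\big|\,R_n^{(\mathrm{F})}(z)\,v_n\big\rangle\,.
\end{equation*}
If $\Im z\neq 0$, taking imaginary parts and using $\Im\big\langle v\,\big|\,R_n^{(\mathrm{F})}(z)\,v\big\rangle=\Im z\,\big\|R_n^{(\mathrm{F})}(z)\,v\big\|_2^2$ forces $R_n^{(\mathrm{F})}(z)v_n=0$, hence $v_n=0$, for every $n$; if $z\in(-\infty,0)$, the same conclusion follows because each $R_n^{(\mathrm{F})}(z)$ is positive and injective. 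Then $\xi_n u=0$ for all $n$, and \eqref{eq: sumxin1} gives $u=\sum_{n}\xi_n^2u=0$. With (a) and (b) repaired, your analytic-Fredholm route is sound and in fact establishes the statement for every $z\in\mathbb{C}\setminus[0,+\infty)$, i.e.\ slightly more than the paper's own proof does.
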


We first present an auxiliary lemma. Recall the definition \eqref{eq: TNdef} of the operator $ T_N(z) $:
	\begin{equation*}
		T_N(z)  
		= \tx\sum_{n} e^{-i \mathbf{S}_{n}(\mathbf{x}_n) \cdot (\mathbf{x} - \mathbf{x}_n)} P_n\, R^{(\mathrm{F})}_{n}(z)\, \xi_{n}\,e^{i \mathbf{S}_{n}(\mathbf{x}_n) \cdot (\mathbf{x} - \mathbf{x}_n)}\,,
	\end{equation*}
where $ P_n = 2 \,\lf(\check{\mathbf{S}}_{n} \xi_{n} - i \nabla \xi_{n}\ri) \!\cdot\! (-i\nabla \!+\! \mathbf{A}_{n})
				+ \check{\mathbf{S}}_{n}^2 \xi_{n} + 2 \,\check{\mathbf{S}}_{n}\! \cdot ( - i \nabla \xi_{n})\! - \Delta \xi_{n} $.
				
	\begin{lemma}
		\label{lemma: TN small}
		\mbox{}	\\
		The operator $ T_N(z)$ is bounded for any $ z \in \mathbb{C} \setminus \R^+ $ and $ \lf\|T_N(z) \ri\| < 1$ for any $ z \in \mathbb{C} $ such that $\dist(z, \mathbb{R}^{+}) $ is large enough.
	\end{lemma}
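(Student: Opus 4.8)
The plan is to estimate $T_N(z)$ termwise, reducing everything to two functional-calculus bounds for the single-flux Friedrichs resolvents $R^{(\mathrm{F})}_n(z)$.

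First I would discard the phase factors $e^{\pm i \mathbf{S}_n(\mathbf{x}_n)\cdot(\mathbf{x}-\mathbf{x}_n)}$: these are modulus-one multiplication operators, hence unitary, and do not alter operator norms. Thus it suffices to bound each summand $P_n R^{(\mathrm{F})}_n(z)\,\xi_n$ for $n = 0,1,\dots,N$. I then split $P_n$ into its first- and zeroth-order parts, writing $P_n = \mathbf{W}_n \cdot (-i\nabla + \mathbf{A}_n) + V_n$, with $\mathbf{W}_n := 2(\check{\mathbf{S}}_n \xi_n - i\nabla \xi_n)$ and $V_n := \check{\mathbf{S}}_n^2 \xi_n + 2\,\check{\mathbf{S}}_n \cdot (-i\nabla \xi_n) - \Delta \xi_n$. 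The crucial preliminary remark is that $\mathbf{W}_n, V_n \in L^{\infty}(\R^2)$: indeed the $\xi_n$ are $C^2$ with compactly supported derivatives, while the Lipschitz bound \eqref{eq: checkSnLip} guarantees that $\check{\mathbf{S}}_n$ is bounded on $\supp \xi_n$ and vanishes linearly at $\mathbf{x}_n$ — the point being that near $\mathbf{x}_n$ the potential $\mathbf{S}_n = \sum_{m \neq n} \mathbf{A}_m$ is smooth, so subtracting its value at the pole produces a genuinely bounded coefficient. (For $n = 0$ one has $\mathbf{A}_0 \equiv 0$ and $\mathbf{S}_0 = \sum_m \mathbf{A}_m$, which is bounded on $\supp \xi_0$ since the latter avoids all the poles.)

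With this decomposition I would estimate
\[ \lf\| P_n R^{(\mathrm{F})}_n(z)\,\xi_n \ri\| \leq \|\mathbf{W}_n\|_{\infty}\, \lf\| (-i\nabla + \mathbf{A}_n) R^{(\mathrm{F})}_n(z) \ri\| + \|V_n\|_{\infty}\, \lf\| R^{(\mathrm{F})}_n(z) \ri\| . \]
Since $\sigma\big(H^{(\mathrm{F})}_n\big) = [0,+\infty)$, the spectral theorem gives $\| R^{(\mathrm{F})}_n(z) \| = 1/\dist\big(z,[0,+\infty)\big)$, while from $\| (-i\nabla+\mathbf{A}_n) u\|^2 = \langle u, H^{(\mathrm{F})}_n u \rangle$ one obtains, again by functional calculus, $\| (-i\nabla+\mathbf{A}_n) R^{(\mathrm{F})}_n(z) \|^2 = \sup_{\lambda \geq 0} \lambda/|\lambda - z|^2$. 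Both quantities are finite for every $z \in \mathbb{C}\setminus[0,+\infty)$, and summing the $N+1$ contributions proves that $T_N(z)$ is bounded on $\mathbb{C}\setminus\R^+$. For the smallness, a direct optimization yields $\sup_{\lambda \geq 0} \lambda/|\lambda - z|^2 = 1/\big(2(|z| - \Re z)\big)$ (the maximum being attained at $\lambda = |z|$). Hence both $\| R^{(\mathrm{F})}_n(z) \|$ and $\| (-i\nabla+\mathbf{A}_n) R^{(\mathrm{F})}_n(z) \|$ tend to $0$ as $z$ recedes from the positive half-line; in particular, along $z = -\lambda^2$ with $\lambda \to +\infty$ — the regime actually used in the sequel — one has $\dist\big(z,[0,+\infty)\big) = \tfrac12(|z| - \Re z) = \lambda^2$, whence $\| P_n R^{(\mathrm{F})}_n(z)\xi_n\| = \mathcal{O}(\lambda^{-1})$ and therefore $\|T_N(z)\| \leq \sum_{n=0}^N \mathcal{O}(\lambda^{-1}) < 1$ for $\dist(z,\R^+)$ large enough.

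The main obstacle is the first-order term $\mathbf{W}_n \cdot (-i\nabla+\mathbf{A}_n) R^{(\mathrm{F})}_n(z)$: unlike the zeroth-order piece it is not a bounded multiplier times a resolvent, and controlling it hinges on the functional-calculus identity for $\|(-i\nabla+\mathbf{A}_n) R^{(\mathrm{F})}_n(z)\|$ together with the genuine $L^{\infty}$ bound on $\mathbf{W}_n$, which is precisely the place where the linear vanishing of $\check{\mathbf{S}}_n$ at the pole (i.e.\ the regularity of the remaining fluxes' potentials near $\mathbf{x}_n$) is indispensable.
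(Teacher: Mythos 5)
Your proposal is correct and follows essentially the same route as the paper's proof: both estimate $T_N(z)$ termwise, discard the unitary phase factors, use that $\check{\mathbf{S}}_n\xi_n$, $\nabla\xi_n$, $\Delta\xi_n$ are bounded (with the linear vanishing \eqref{eq: checkSnLip} doing the work near the poles), and reduce everything to norm bounds on $R^{(\mathrm{F})}_{n}(z)$ and $(-i\nabla+\mathbf{A}_{n})R^{(\mathrm{F})}_{n}(z)$. The only substantive difference is how the gradient bound is obtained: the paper proves $\big\|(-i\nabla+\mathbf{A}_{n})R^{(\mathrm{F})}_{n}(z)\psi\big\|_2^2 = \langle R^{(\mathrm{F})}_{n}(z)\psi\,|\,\psi\rangle + z\,\|R^{(\mathrm{F})}_{n}(z)\psi\|_2^2$ by an explicit integration by parts, using the asymptotics \eqref{eq: limdomQF} to kill the surface terms at $\mathbf{x}_n$, whereas you invoke the identity $\|(-i\nabla+\mathbf{A}_{n})u\|_2^2 = \langle u\,|\,H^{(\mathrm{F})}_{n}u\rangle$ on the operator domain (legitimate, since $H^{(\mathrm{F})}_{n}$ is by construction the operator associated with the closed form $Q^{(\mathrm{F})}_{n}$) together with the spectral theorem, which yields the slightly sharper explicit constant $\sup_{\lambda\geq 0}\lambda/|\lambda-z|^2 = \big(2(|z|-\Re z)\big)^{-1}$; these are of course the same identity in two guises. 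One caveat, shared equally by the paper's own bound $\tfrac{C}{\dist(z,\mathbb{R}^{+})}\big(1+\tfrac{|z|}{\dist(z,\mathbb{R}^{+})}\big)$: neither estimate is uniformly small on $\{\dist(z,\mathbb{R}^{+})\geq M\}$ (take $z$ with large positive real part and comparatively modest imaginary part, where $|z|-\Re z$ is small while $\dist(z,\mathbb{R}^{+})$ is large), so your intermediate claim that both norms ``tend to $0$ as $z$ recedes from the positive half-line'' is loose in the same way the paper's conclusion is; your explicit specialization to $z=-\lambda^2$, $\lambda$ large, which is the only regime used later (cf.\ \eqref{eq: TN}), is the right way to read the smallness statement.
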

	
	\begin{proof}
		Since $\sigma\big(H_{n}^{(\mathrm{F})}\big) = [0,+\infty)$, we have the bound 
		\begin{equation*}
			\lf\|R^{(\mathrm{F})}_{n}(z) \ri\| \leqslant \frac{1}{ \dist(z,\mathbb{R}^{+})}\,,
		\end{equation*}
		which together with the fact that $\mbox{ran} \big(R^{(\mathrm{F})}_{n}(z)\big) \subset \dom \big[Q_{N}^{(\mathrm{F})}\big]$ and the asymptotic conditions \eqref{eq: limdomQF}, yields
	\bmln{
	\big\|(-i\nabla \!+\! \mathbf{A}_{n}) R^{(\mathrm{F})}_{n}(z)\, \psi \big\|_{2}^2 
		= \lim_{r \to 0^{+}} \int_{\mathbb{R}^2 \setminus B_{r}(\mathbf{x}_n)}\hspace{-0.4cm} \diff\mathbf{x} \left|(-i\nabla \!+\! \mathbf{A}_{n}) R^{(\mathrm{F})}_{n}(z)\psi \right|^2 \\	
	= \lim_{r \to 0^{+}} \left[ - \int_{\partial B_{r}(\mathbf{x}_n)}\hspace{-0.6cm} \diff\Sigma_r\,  \lf( R^{(\mathrm{F})}_{n}(z)\psi \ri)^*\;\partial_r \big(R^{(\mathrm{F})}_{n}(z)\psi\big) + \int_{\mathbb{R}^2 \setminus B_{r}(\mathbf{x}_n)}\hspace{-0.6cm} \diff\mathbf{x}\; \lf( R^{(\mathrm{F})}_{n}(z)\psi \ri)^* \; (-i\nabla \!+\! \mathbf{A}_{n})^2 R^{(\mathrm{F})}_{n}(z)\psi
	\right] \\
		= \big\langle R^{(\mathrm{F})}_{n}(z)\,\psi\,\big|\, \psi \big\rangle + z\,\big\|R^{(\mathrm{F})}_{n}(z)\,\psi\,\big\|_{2}^2 
		\leqslant  \tfrac{1}{\dist (z,\mathbb{R}^{+})}\left(1 + \tfrac{|z|}{\dist (z,\mathbb{R}^{+})} \right) \|\psi\|_2^{2}\,.	
	}
Since $\xi_{n}$ is smooth and uniformly bounded together with all its derivatives, and the same can be said for $\check{\mathbf{S}}_n$ on $\supp \xi_{n}$, from the above arguments we readily infer\footnote{Here and in the following we denote by $ C $ a positive finite constant, whose value may change from line to line.}
	\begin{equation*}
		\lf\|T_n(z) \ri\| \leqslant \frac{C}{\dist(z,\mathbb{R}^{+})} \left(1 + \frac{|z|}{\dist(z,\mathbb{R}^{+})} \right) ,
	\end{equation*}
for all $z \in \mathbb{C} \setminus [0,+\infty)$, which proves the thesis.
	\end{proof}

	\begin{proof}[Proof of \cref{lemma: Tn}]
		We start from the ansatz
		\begin{equation}\label{eq: RNansatz}
			R^{(\mathrm{F})}_{N}(z) = \sum_{n \,=\, 0}^{N} \, e^{- i \mathbf{S}_{n}(\mathbf{x}_n) \cdot (\mathbf{x} - \mathbf{x}_n)}\xi_{n}\, R^{(\mathrm{F})}_{n}(z)\, \xi_{n}\, e^{i \mathbf{S}_{n}(\mathbf{x}_n) \cdot (\mathbf{x} - \mathbf{x}_n)} + W_N(z)\,,
		\end{equation}
where $W_{N}(z)$ is a suitable reminder operator to be determined {\it a posteriori}. 
Applying $H_{N}^{(\mathrm{F})} - z$ on both sides of \eqref{eq: RNansatz}, we get
	\bmln{
		1 =  \tx\sum_{n} \,  e^{- i \mathbf{S}_{n}(\mathbf{x}_n) \cdot (\mathbf{x} - \mathbf{x}_n)} \Big[
			\xi_{n} \Big(\! \big(\! - i \nabla \!+\! \mathbf{A}_{n} \!+\! \check{\mathbf{S}}_{n} \big)^2\! - z \Big) R^{(\mathrm{F})}_{n}(z)\, \xi_{n} \\
			+ 2 ( - i \nabla \xi_{n}) \!\cdot\! \big(\!-i\nabla \!+\! \mathbf{A}_{n} \!+\! \check{\mathbf{S}}_{n} \big) R^{(\mathrm{F})}_{n}(z)\, \xi_{n}
			- (\Delta \xi_{n})\, R^{(\mathrm{F})}_{n}(z)\, \xi_{n}  \Big]\, e^{i \mathbf{S}_{n}(\mathbf{x}_n) \cdot (\mathbf{x} - \mathbf{x}_n)}
			+ \big( H_{N}^{(\mathrm{F})} - z \big) W_N(z) \\
		= \tx\sum_{n} \,  e^{- i \mathbf{S}_{n}(\mathbf{x}_n) \cdot (\mathbf{x} - \mathbf{x}_n)} \Big[ \xi_{n} \big( ( - i \nabla \!+\! \mathbf{A}_{n})^2 - z\big)
			+ 2 \,(\check{\mathbf{S}}_{n} \xi_{n} - i \nabla \xi_{n}) \!\cdot\! (-i\nabla \!+\! \mathbf{A}_{n}) \\
			+ \check{\mathbf{S}}_{n}^2 \xi_{n} + 2 \,\check{\mathbf{S}}_{n}\! \cdot ( - i \nabla \xi_{n})\! - \Delta \xi_{n}
			\Big] R^{(\mathrm{F})}_{n}(z)\, \xi_{n}\,e^{i \mathbf{S}_{n}(\mathbf{x}_n) \cdot (\mathbf{x} - \mathbf{x}_n)}
			+ \big( H_{N}^{(\mathrm{F})}\! - z \big) W_N(z)\,.
	}
Since $ (( - i \nabla \!+\! \mathbf{A}_{n})^2 - z ) R^{(\mathrm{F})}_{n}(z) = 1 $ and $\sum_{n = 0}^{N} \xi_{n}^2 = 1$, we infer that
	\bmln{
		W_N(z) = -\,R^{(\mathrm{F})}_{N}(z) \; \tx\sum_{n} \, e^{-i \mathbf{S}_{n}(\mathbf{x}_n) \cdot (\mathbf{x} - \mathbf{x}_n)} \Big[
					2 \,(\check{\mathbf{S}}_{n} \xi_{n} - i \nabla \xi_{n}) \!\cdot\! (-i\nabla \!+\! \mathbf{A}_{n}) \\
					+ \check{\mathbf{S}}_{n}^2 \xi_{n} + 2 \,\check{\mathbf{S}}_{n}\! \cdot ( - i \nabla \xi_{n})\! - \Delta \xi_{n}
				\Big] R^{(\mathrm{F})}_{n}(z)\, \xi_{n}\, e^{i \mathbf{S}_{n}(\mathbf{x}_n) \cdot (\mathbf{x} - \mathbf{x}_n)}\,.
	}
Summing up, we obtain
	\begin{equation*}
		R^{(\mathrm{F})}_{N}(z) \big[ 1 + T_N(z) \big] = \tx\sum_{n} \, e^{-i \mathbf{S}_{n}(\mathbf{x}_n) \cdot (\mathbf{x} - \mathbf{x}_n)} \xi_{n}\, R^{(\mathrm{F})}_{n}(z)\, \xi_{n}\,e^{i \mathbf{S}_{n}(\mathbf{x}_n) \cdot (\mathbf{x} - \mathbf{x}_n)} \,, 
	\end{equation*}
and it just remains to point out that, due to \cref{lemma: TN small}, $ 1  + T_N(z) $ is invertible via the Neumann series
		\begin{equation*}
		\big[1 + T_N(z)\big]^{-1} = \sum_{j = 0}^{+\infty}\, (-1)^{j} \big(T_N(z)\big)^{j} 	\,.
		\end{equation*}
	\end{proof}

We now address the spectral and scattering properties of the Friedrichs realization, which will serve as basic ingredients of the proofs of \cref{pro: spectrum} and \cref{cor:ScatteringHbeta}.

\begin{proposition}[Spectral properties of  $ H_N^{(\mathrm{F})} $]
	\label{pro: spectrum F}
	\mbox{}	\\
	Let $ N \in \N $ and, for all $n = 1,\dots,N$, let $\alpha_n \in (0,1)$. Then, 
	\begin{equation*}
		\sigma_{\mathrm{ess}}\big(H_{N}^{(\mathrm{F})}\big) = [0,+\infty)\,,
		\qquad
		\sigma_{\mathrm{disc}}\big(H_{N}^{(\mathrm{F})}\big) = \varnothing\,.
	\end{equation*}
\end{proposition}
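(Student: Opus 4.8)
The plan is to establish $\sigma_{\mathrm{ess}}\big(H_{N}^{(\mathrm{F})}\big) = [0,+\infty)$ by a singular Weyl sequence argument, after which the emptiness of the discrete spectrum comes for free. Since $H_{N}^{(\mathrm{F})}$ is the self-adjoint operator associated with the non-negative form $Q_{N}^{(\mathrm{F})}$ (\cref{prop:QF}), its spectrum is contained in $[0,+\infty)$, whence $\sigma_{\mathrm{ess}}\big(H_{N}^{(\mathrm{F})}\big) \subseteq \sigma\big(H_{N}^{(\mathrm{F})}\big) \subseteq [0,+\infty)$. It therefore suffices to prove the reverse inclusion $[0,+\infty) \subseteq \sigma_{\mathrm{ess}}\big(H_{N}^{(\mathrm{F})}\big)$, and since $\sigma_{\mathrm{ess}}$ is closed it is enough to treat $E > 0$. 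Once this is done, $\sigma_{\mathrm{ess}}\big(H_{N}^{(\mathrm{F})}\big) = \sigma\big(H_{N}^{(\mathrm{F})}\big) = [0,+\infty)$, and consequently $\sigma_{\mathrm{disc}}\big(H_{N}^{(\mathrm{F})}\big) = \sigma\big(H_{N}^{(\mathrm{F})}\big) \setminus \sigma_{\mathrm{ess}}\big(H_{N}^{(\mathrm{F})}\big) = \varnothing$, which disposes of both assertions at once.

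To produce the Weyl sequence I would fix $E = |\kv|^2 > 0$ and, for a normalized $\chi \in C^{\infty}_{\mathrm{c}}(\R^2)$, consider the trial functions
\[
	\psi_j(\xv) := R_j^{-1}\,\chi\!\lf(\tfrac{\xv - \mathbf{c}_j}{R_j}\ri)\,e^{i \kv \cdot \xv}\,,
\]
with widths $R_j \to +\infty$ and centres $\mathbf{c}_j$ chosen so that $|\mathbf{c}_j| \gg R_j$, e.g.\ $|\mathbf{c}_j| = R_j^2$. For $j$ large each $\psi_j$ is supported in $B_{R_j}(\mathbf{c}_j)$, which is disjoint from all fluxes, so $\psi_j \in C^{\infty}_{\mathrm{c}}(\R^2 \setminus \{\xv_1,\dots,\xv_N\}) \subset \dom\big(H_{N}^{(\mathrm{F})}\big)$ and $H_{N}^{(\mathrm{F})}$ acts on it classically as $\big(\!-\Delta + 2\,\mathbf{S}_0 \cdot (-i\nabla) + \mathbf{S}_0^2\big)\psi_j$. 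The $L^2$-norm is preserved ($\|\psi_j\|_2 = \|\chi\|_2 = 1$, the prefactor $R_j^{-1}$ compensating the two-dimensional dilation), and since the supports escape to infinity one has $\psi_j \rightharpoonup 0$. Thus $\{\psi_j\}$ is a candidate singular Weyl sequence and it remains to check that $\big\|\big(H_{N}^{(\mathrm{F})} - E\big)\psi_j\big\|_2 \to 0$.

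Splitting $\big(H_{N}^{(\mathrm{F})} - E\big)\psi_j = (-\Delta - E)\psi_j + \big(2\,\mathbf{S}_0 \cdot (-i\nabla) + \mathbf{S}_0^2\big)\psi_j$, the first term is the standard free contribution: writing $\chi_j := R_j^{-1}\chi((\cdot - \mathbf{c}_j)/R_j)$ one has $(-\Delta - E)\big(\chi_j e^{i \kv \cdot \xv}\big) = e^{i \kv \cdot \xv}\big(-\Delta\chi_j - 2i\,\kv \cdot \nabla\chi_j\big)$, and the scaling gives $\|\Delta\chi_j\|_2 = O(R_j^{-2})$, $\|\nabla\chi_j\|_2 = O(R_j^{-1})$, so this term is $O(R_j^{-1}) \to 0$. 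For the magnetic correction one uses $|\mathbf{S}_0(\xv)| \leqslant \sum_n \alpha_n/|\xv - \xv_n|$, so on $\supp \psi_j \subset B_{R_j}(\mathbf{c}_j)$ one has $|\mathbf{S}_0| \lesssim |\mathbf{c}_j|^{-1} = R_j^{-2}$; since $\|(-i\nabla)\psi_j\|_2$ stays bounded (its leading part being $\kv\chi_j$), the correction is $O(R_j^{-2}) \to 0$. Hence $E \in \sigma_{\mathrm{ess}}\big(H_{N}^{(\mathrm{F})}\big)$, and the proof closes as in the first paragraph. The only genuine subtlety is the slow $|\xv|^{-1}$ decay of the total potential $\mathbf{S}_0$, which is long-range and prevents a direct relatively-compact-perturbation comparison with $-\Delta$: it forces the two length scales to be decoupled — support width $R_j \to \infty$ to kill the free error, centre distance $|\mathbf{c}_j| \gg R_j$ to kill the magnetic error — and the choice $|\mathbf{c}_j| = R_j^2$ reconciles both requirements at once.
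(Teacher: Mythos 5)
Your proof is correct, but it takes a genuinely different route from the paper. You construct an explicit singular Weyl sequence $\psi_j = R_j^{-1}\chi\big((\cdot-\mathbf{c}_j)/R_j\big)e^{i\kv\cdot\xv}$ with the two scales decoupled ($R_j\to\infty$ to kill the free error, $|\mathbf{c}_j|=R_j^2$ to kill the magnetic error), exploiting that $C^{\infty}_{\mathrm{c}}(\R^2\setminus\{\xv_1,\dots,\xv_N\})\subset\dom\big(H_N^{(\mathrm{F})}\big)$ and that $|\mathbf{S}_0|=\mathcal{O}(|\xv|^{-1})$ at infinity; all your scaling estimates ($\|\Delta\chi_j\|_2=\mathcal{O}(R_j^{-2})$, $\|\nabla\chi_j\|_2=\mathcal{O}(R_j^{-1})$, $|\mathbf{S}_0|\lesssim R_j^{-2}$ on $\supp\psi_j$) check out, and the endgame (non-negativity of the operator forces $\sigma=\sigma_{\mathrm{ess}}=[0,+\infty)$, hence $\sigma_{\mathrm{disc}}=\varnothing$) is the same as in the paper. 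The paper instead proves that $R_N^{(\mathrm{F})}(z)-R_0(z)$ is compact, by interpolating through an auxiliary smooth magnetic Schr\"odinger operator $H_{\mathbf{A}}^{(\mathrm{F})}$ coinciding with the AB potential outside a compact set (\cref{lemma: RAR0compact} and \cref{lemma: RNRAS2}), and then invokes Weyl's criterion on stability of the essential spectrum under compact resolvent perturbations. Your argument is more elementary and self-contained — it needs no resolvent machinery and no auxiliary operator — which is a genuine advantage if one only wants \cref{pro: spectrum F}. What the paper's route buys is economy in context: the compactness (indeed Hilbert--Schmidt and trace-class) estimates on resolvent differences are needed anyway for the scattering results (\cref{thm: OmHNH0} and \cref{cor:ScatteringHbeta}), so the spectral statement comes essentially for free as a by-product, whereas your Weyl sequence yields no information usable for the wave operators. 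One side remark: your closing claim that the long-range decay of $\mathbf{S}_0$ ``prevents'' a relatively-compact comparison with $-\Delta$ should be softened — the perturbation is indeed not relatively compact in the operator sense, but the paper shows the resolvent difference is nonetheless compact via the smeared-potential approximation; what the slow decay rules out is only the naive direct argument.
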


\begin{proposition}[Scattering properties of  $ H_N^{(\mathrm{F})} $]
		\label{thm: OmHNH0}
		\mbox{}		\\
		Let $ N \in \N $ and, for all $n = 1, \ldots, N$, let $\alpha_n \in (0,1)$. Then, the wave operators $\Omega_{\pm}\big(H_{N}^{(\mathrm{F})}, - \Delta \big)$ exist and are complete and, in addition, 
		\begin{equation*}
		\sigma_{\mathrm{ac}}\big(H_{N}^{(\mathrm{F})}\big) = \sigma_{\mathrm{ac}}(-\Delta) = [0,+\infty)\,.
		\end{equation*}
\end{proposition}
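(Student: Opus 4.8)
The plan is to obtain both assertions simultaneously from the Birman--Kuroda trace-class theorem: if the resolvent difference $R_N^{(\mathrm{F})}(z)-R_0(z)$ belongs to the trace ideal $\mathcal{L}^1$ for a single $z\in\mathbb{C}\setminus[0,+\infty)$ (hence, by the resolvent identity, for all such $z$), then the wave operators $\Omega_{\pm}(H_N^{(\mathrm{F})},-\Delta)$ exist and are complete, and the absolutely continuous parts of $H_N^{(\mathrm{F})}$ and $-\Delta$ are unitarily equivalent. Since $-\Delta$ is purely absolutely continuous with spectrum $[0,+\infty)$, this gives $\sigma_{\mathrm{ac}}(H_N^{(\mathrm{F})})=[0,+\infty)$, consistently with $\sigma_{\mathrm{ess}}(H_N^{(\mathrm{F})})=[0,+\infty)$ from \cref{pro: spectrum F}. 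Thus everything reduces to the trace-class property of the resolvent difference.

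To establish it I would exploit the representation \eqref{eq: RNFried}, fixing once and for all a convenient $z$ with $\dist(z,\mathbb{R}^{+})$ large, so that $\|T_N(z)\|<1$ by \cref{lemma: TN small} and $[1+T_N(z)]^{-1}$ is given by a norm-convergent Neumann series (\cref{lemma: Tn}). Writing $[1+T_N(z)]^{-1}=1-T_N(z)\,[1+T_N(z)]^{-1}$ and using $\sum_{n=0}^{N}\xi_n^2=1$, the difference $R_N^{(\mathrm{F})}(z)-R_0(z)$ splits into a finite sum of three types of contributions: \emph{(a)} the localized single-flux differences $e^{-i\mathbf{S}_n(\mathbf{x}_n)\cdot(\mathbf{x}-\mathbf{x}_n)}\,\xi_n\big(R_n^{(\mathrm{F})}(z)-R_0(z)\big)\xi_n\,e^{i\mathbf{S}_n(\mathbf{x}_n)\cdot(\mathbf{x}-\mathbf{x}_n)}$ for $n=1,\dots,N$ (the $n=0$ term vanishes, since $R_0^{(\mathrm{F})}=R_0$); \emph{(b)} commutator terms arising from moving $R_0$ through the cut-offs and the gauge phases, built from the \emph{compactly supported} bounded multipliers $\nabla\xi_n$, $\Delta\xi_n$ and $\check{\mathbf{S}}_n\,\xi_n$; and \emph{(c)} the remainder $-\big(\sum_n e^{-i\mathbf{S}_n(\mathbf{x}_n)\cdot(\mathbf{x}-\mathbf{x}_n)}\xi_n R_n^{(\mathrm{F})}(z)\xi_n e^{i\mathbf{S}_n(\mathbf{x}_n)\cdot(\mathbf{x}-\mathbf{x}_n)}\big)T_N(z)[1+T_N(z)]^{-1}$, where $[1+T_N(z)]^{-1}$ is bounded and, by \eqref{eq: TNdef}, $T_N(z)$ is a finite sum of resolvents flanked by the same compactly supported multipliers.

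The core estimate is the trace-class bound for the type-\emph{(a)} terms. Each is supported in both variables inside the fixed ball $\supp\xi_n$, so it suffices to control it locally. Using the explicit kernels \eqref{eq: RnAT} and $R_0(z;\mathbf{x},\mathbf{x}')=\tfrac{1}{2\pi}K_0(-i\sqrt z\,|\mathbf{x}-\mathbf{x}'|)$, I would show that the leading diagonal (coincidence) singularity of $R_n^{(\mathrm{F})}(z)$ matches that of $R_0(z)$ away from the flux $\mathbf{x}_n$, so that the difference kernel is substantially less singular on the diagonal, while the residual non-smoothness at $\mathbf{x}_n$ (the $r_n^{\pm|\ell+\alpha|}$ behaviour) is integrable; concretely, one estimates the Hilbert--Schmidt norms of the difference and of a suitable derivative thereof and concludes via the factorization $\mathcal{L}^2\cdot\mathcal{L}^2\subset\mathcal{L}^1$. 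Alternatively, the trace-class (indeed, scattering) property for a single flux may be imported from the literature (\textit{e.g.} \cite{Ru83,RY02,Ya03}). For the type-\emph{(b)} and type-\emph{(c)} terms, a compactly supported bounded multiplier sandwiched between resolvents supplies both the spatial localization and the smoothing needed to write each contribution as a product of two Hilbert--Schmidt operators, hence as a trace-class one; the bounded factor $[1+T_N(z)]^{-1}$ does not affect membership in $\mathcal{L}^1$.

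I expect the main obstacle to be exactly the local single-flux estimate of step \emph{(a)}: one must simultaneously verify the cancellation of the diagonal singularities of the two resolvent kernels and control the weaker but genuine singularity at the flux location $\mathbf{x}_n$, where the non-free nature of $H_n^{(\mathrm{F})}$ is concentrated. Once this is settled, the assembly of the remaining contributions is routine, relying only on the boundedness of $[1+T_N(z)]^{-1}$, the compact support of the multipliers entering $T_N(z)$ and the commutators, and the ideal property $\mathcal{L}^2\cdot\mathcal{L}^2\subset\mathcal{L}^1$.
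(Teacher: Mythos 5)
Your reduction to the Birman--Kuroda theorem fails at its very first step: the claim $R_N^{(\mathrm{F})}(z) - R_0(z) \in \mathcal{L}^1$ is false, and no refinement of the kernel estimates can rescue it. The obstruction sits at spatial infinity, not at the fluxes (which you single out as the ``main obstacle''; the localized type-\emph{(a)} terms are in fact the harmless part). The perturbation relating $H_N^{(\mathrm{F})}$ to $-\Delta$ is long range: $\sum_n \mathbf{A}_n$ decays only like $1/|\mathbf{x}|$, which is not square-integrable on $\mathbb{R}^2$. Concretely, in your own decomposition the $n=0$ summand of $T_N(z)$ in \eqref{eq: TNdef} is flanked by the multiplier $\check{\mathbf{S}}_{0}\,\xi_{0}$, where $\xi_0 \equiv 1$ outside a disc and $\mathbf{S}_0 = \sum_m \mathbf{A}_m$ decays at best like $1/|\mathbf{x}|$; this multiplier is neither compactly supported nor in $L^2(\mathbb{R}^2)$, so the type-\emph{(c)} term admits no factorization $\mathcal{L}^2\cdot\mathcal{L}^2\subset\mathcal{L}^1$, nor even a Hilbert--Schmidt bound. (This is precisely why \cref{lemma: RAR0compact} of the paper only obtains \emph{compactness}, through a norm limit of smeared potentials $\mathbf{A}_\varepsilon$.) That the failure is intrinsic, and not an artifact of the decomposition, follows from Birman--Krein theory: a trace-class resolvent difference would force $S(\lambda) - 1 \in \mathcal{L}^1$ for a.e.\ $\lambda$; but already for $N=1$ the scattering matrix of the Friedrichs AB Hamiltonian acts in the angular sector $\ell$ as multiplication by $e^{i\pi(|\ell| - |\ell+\alpha|)}$, i.e.\ with phase shifts $\mp\pi\alpha/2$ that do not decay as $|\ell|\to\infty$, so $S(\lambda)-1$ is not even compact. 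For the same reason your fallback of ``importing'' the single-flux result from \cite{Ru83,RY02,Ya03} does not help: those works prove existence and completeness by explicit diagonalization and stationary methods, not via a trace-class comparison with $-\Delta$, so they provide nothing that can be inserted into a Birman--Kuroda scheme.

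The paper circumvents the long-range problem by splitting it off. It introduces an auxiliary smooth potential $\mathbf{A}\in C^{\infty}(\mathbb{R}^2,\mathbb{R}^2)$ coinciding with $\sum_n \mathbf{A}_n$ outside a compact set (\cref{lemma: HA}), so that $H_N^{(\mathrm{F})} - H_{\mathbf{A}}^{(\mathrm{F})}$ is a compactly supported perturbation. Only for this pair are trace-ideal techniques used, and even there the first power of the resolvents is not enough: the paper proves $R_N^{(\mathrm{F})}(z) - R_{\mathbf{A}}^{(\mathrm{F})}(z)\in\mathcal{L}^2$ and $\big(R_N^{(\mathrm{F})}(z)\big)^2 - \big(R_{\mathbf{A}}^{(\mathrm{F})}(z)\big)^2\in\mathcal{L}^1$ (\cref{lemma: RNRAS2}, \cref{lemma: RN2RA2S1}), and concludes existence and completeness of $\Omega_{\pm}\big(H_N^{(\mathrm{F})}, H_{\mathbf{A}}^{(\mathrm{F})}\big)$ via the invariance principle. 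The genuinely long-range pair $\big(H_{\mathbf{A}}^{(\mathrm{F})}, -\Delta\big)$ is then handled by importing the Loss--Thaller theorem on scattering by long-range magnetic fields (\cref{prop: RAR0scatt}), and the chain rule combines the two steps. Any repair of your argument must follow the same logic: the comparison operator has to reproduce the full $1/|\mathbf{x}|$ magnetic tail of $\sum_n\mathbf{A}_n$ at infinity, and the free Laplacian cannot.
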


We state first some technical lemmas which will be used in the proofs of the main results. Our arguments involve an auxiliary magnetic Schr\"odinger operator, namely,
	\begin{equation}\label{eq:HAreg}
		H_{\mathbf{A}} = (-i \nabla + \mathbf{A})^2\,,
	\end{equation}
where $\mathbf{A} \in C^\infty(\mathbb{R}^2,\mathbb{R}^2)$ is any given vector potential fulfilling
	\begin{equation}\label{eq:AAasy}
		\mathbf{A} = \sum_{n\,=\,1}^{N} \mathbf{A}_{n} \qquad \mbox{in\; $\mathbb{R}^2 \!\setminus\! \mathcal{Q}$}\,,
	\end{equation}
for some bounded open subset $\mathcal{Q} \subset \mathbb{R}^2$ such that $\mathbf{x}_{n} \in \mathcal{Q}$ for all $n \in \{ 1,\dots ,N\}$. 
In particular, under these hypotheses, we certainly have
	\begin{equation*}
		\mathbf{A} \in L^2_{\mathrm{loc}}(\mathbb{R}^2) \cap L^{\infty}(\mathbb{R}^2)\,,
	\end{equation*}
and
	\begin{equation*}
		\mathbf{A}(\mathbf{x}) = \left( \,\sum_{n\,=\,1}^{N} \alpha_n \right) {\mathbf{x}^{\perp} \over |\mathbf{x}|^2} + \mathcal{O}\!\lf({1 \over |\mathbf{x}|^2} \ri) , \qquad \mbox{for\; $|\mathbf{x}| \to +\infty$}\,.
	\end{equation*}
Without loss of generality, we further assume $\mathbf{A}$ to fulfill the Coulomb gauge
	\begin{equation}\label{eq:AACou}
		\nabla \cdot \mathbf{A} = 0 \quad \mbox{in\, $\mathbb{R}^2$}.
	\end{equation}

	\begin{lemma}
		\label{lemma: HA}
		\mbox{}		\\
		For any $\mathbf{A} \in C^\infty(\mathbb{R}^2,\mathbb{R}^2)$ fulfilling \eqref{eq:AAasy} and \eqref{eq:AACou}, the operator $H_{\mathbf{A}}$ defined in \eqref{eq:HAreg} is essentially self-adjoint on $C^{\infty}_{\mathrm{c}}(\mathbb{R}^2)$. Its unique self-adjoint realization is the positive operator
		\begin{equation}\label{eq: HAdom}
			H_{\mathbf{A}}^{(\mathrm{F})} := (-i \nabla + \mathbf{A})^2\,, \qquad 
			\dom\big(H_{\mathbf{A}}^{(\mathrm{F})}\big) = H^2(\mathbb{R}^2)\,.
		\end{equation}
		For any $ z \in \mathbb{C} $ such that $\dist(z, \mathbb{R}^{+}) $ is large enough, the resolvent operator $R_{\mathbf{A}}^{(\mathrm{F})}(z) := \big(H_{\mathbf{A}}^{(\mathrm{F})} - z \big)^{-1}$ satisfies
		\begin{equation}
			R_{\mathbf{A}}^{(\mathrm{F})}(z) 
			= R_{0}(z) \lf[1 + \lf(H_{\mathbf{A}}^{(\mathrm{F})} + \Delta \ri) R_{0}(z) \ri]^{-1}
			= \lf[1 + R_{0}(z) \lf(H_{\mathbf{A}}^{(\mathrm{F})} + \Delta \ri) \ri]^{-1}\! R_{0}(z) \,. \label{eq: RAR0}
		\end{equation}
	\end{lemma}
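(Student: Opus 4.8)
The plan is to treat $H_{\mathbf{A}}^{(\mathrm{F})}$ as a relatively bounded perturbation of $-\Delta$ and invoke the Kato--Rellich theorem. Expanding the square and using the Coulomb gauge \eqref{eq:AACou}, on $C^{\infty}_{\mathrm{c}}(\mathbb{R}^2)$ one has $ H_{\mathbf{A}} = -\Delta + V $ with
	\begin{equation*}
		V := -2i\,\mathbf{A}\cdot\nabla + |\mathbf{A}|^2 = H_{\mathbf{A}}^{(\mathrm{F})} + \Delta\,.
	\end{equation*}
First I would check that $V$ is symmetric on $C^{\infty}_{\mathrm{c}}(\mathbb{R}^2)$: the term $|\mathbf{A}|^2$ is real and bounded, while integrating the first-order term by parts produces precisely a contribution $-i(\nabla\cdot\mathbf{A})$, which vanishes by \eqref{eq:AACou}.

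The core of the argument is the estimate that $V$ is $(-\Delta)$-bounded with relative bound $0$. Since $\mathbf{A}\in L^{\infty}(\mathbb{R}^2)$ (as recalled before the statement), one has $\|V\psi\|_{2} \leqslant 2\|\mathbf{A}\|_{\infty}\|\nabla\psi\|_{2} + \|\mathbf{A}\|_{\infty}^2\|\psi\|_{2}$; combining this with the standard interpolation bound $\|\nabla\psi\|_{2}\leqslant \epsilon\,\|\Delta\psi\|_{2} + C_{\epsilon}\|\psi\|_{2}$ (valid for every $\epsilon>0$, e.g.\ via the Fourier transform) gives $\|V\psi\|_{2}\leqslant a\,\|\Delta\psi\|_{2}+b\,\|\psi\|_{2}$ with $a$ arbitrarily small. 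The Kato--Rellich theorem then shows that $-\Delta+V$ is self-adjoint on $\dom(-\Delta)=H^2(\mathbb{R}^2)$ and essentially self-adjoint on every core of $-\Delta$, in particular on $C^{\infty}_{\mathrm{c}}(\mathbb{R}^2)$; this yields essential self-adjointness of $H_{\mathbf{A}}$ and identifies its closure with $H_{\mathbf{A}}^{(\mathrm{F})}$ as in \eqref{eq: HAdom}. Positivity is immediate, since $\langle\psi,H_{\mathbf{A}}\psi\rangle=\|(-i\nabla+\mathbf{A})\psi\|_{2}^{2}\geqslant 0$ on $C^{\infty}_{\mathrm{c}}$ and extends to the closure, so that $\sigma\big(H_{\mathbf{A}}^{(\mathrm{F})}\big)\subseteq[0,+\infty)$.

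For the resolvent identities \eqref{eq: RAR0} I would first show $\|V R_{0}(z)\|<1$ when $\dist(z,\mathbb{R}^{+})$ is large. By the spectral theorem for $-\Delta$ one has $\|R_{0}(z)\|=\dist(z,\mathbb{R}^{+})^{-1}$ and $\|\nabla R_{0}(z)\|\to 0$ as $\dist(z,\mathbb{R}^{+})\to\infty$, whence $\|V R_{0}(z)\|\leqslant 2\|\mathbf{A}\|_{\infty}\|\nabla R_{0}(z)\|+\|\mathbf{A}\|_{\infty}^{2}\|R_{0}(z)\|\to 0$. Thus $1+V R_{0}(z)$ is boundedly invertible on $L^2(\mathbb{R}^2)$ by Neumann series. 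On $H^2(\mathbb{R}^2)$ the algebraic factorization $\big(H_{\mathbf{A}}^{(\mathrm{F})}-z\big)\psi=\big(1+V R_{0}(z)\big)(-\Delta-z)\psi$, obtained by inserting $R_{0}(z)(-\Delta-z)\psi=\psi$ into $V\psi$, exhibits $H_{\mathbf{A}}^{(\mathrm{F})}-z$ as a composition of the two bijections $(-\Delta-z)\colon H^2\to L^2$ and $1+V R_{0}(z)\colon L^2\to L^2$; inverting gives the first identity $R_{\mathbf{A}}^{(\mathrm{F})}(z)=R_{0}(z)\big[1+V R_{0}(z)\big]^{-1}$.

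Finally, the second identity follows from the trivial intertwining relation $\big(1+R_{0}(z)V\big)R_{0}(z)=R_{0}(z)\big(1+V R_{0}(z)\big)$. The only mildly delicate point, and the step I expect to require the most care, is making sense of $1+R_{0}(z)V$ on $H^2$, since $R_{0}(z)V$ is a priori unbounded on $L^2$. I would settle this by the factorization $1+R_{0}(z)V=R_{0}(z)\big(1+V R_{0}(z)\big)(-\Delta-z)$ valid on $H^2$, which displays it as a composition of three bijections and hence as boundedly invertible on $H^2$. Applying $1+R_{0}(z)V$ to the first identity and using the intertwining relation yields $\big(1+R_{0}(z)V\big)R_{\mathbf{A}}^{(\mathrm{F})}(z)=R_{0}(z)$, so that $R_{\mathbf{A}}^{(\mathrm{F})}(z)=\big[1+R_{0}(z)V\big]^{-1}R_{0}(z)$, which is the second expression in \eqref{eq: RAR0}. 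Everything else is routine.
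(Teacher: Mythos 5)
Your proposal is correct and takes essentially the same approach as the paper: the identical Kato--Rellich bound $\|(H_{\mathbf{A}}+\Delta)\psi\|_{2}\leqslant 2\|\mathbf{A}\|_{\infty}\|\nabla\psi\|_{2}+\|\mathbf{A}\|_{\infty}^{2}\|\psi\|_{2}$ gives \eqref{eq: HAdom}, the same Neumann-series smallness of $(H_{\mathbf{A}}^{(\mathrm{F})}+\Delta)R_{0}(z)$ gives invertibility, and your factorization/intertwining derivation of the two formulas in \eqref{eq: RAR0} is an equally routine variant of the paper's second-resolvent-identity-plus-adjoint argument. The one inaccuracy is your claim that $\|\nabla R_{0}(z)\|\to 0$ as $\dist(z,\mathbb{R}^{+})\to\infty$: by Fourier transform $\|\nabla R_{0}(z)\|=\big(2(|z|-\Re z)\big)^{-1/2}$, which blows up along $z=y^{4}+iy$, $y\to+\infty$, even though $\dist(z,\mathbb{R}^{+})=y\to\infty$ --- but this is the same looseness as the paper's ``for any $z$ far enough from $\mathbb{R}^{+}$'', and it is harmless since the lemma is only ever applied at $z=-\lambda^{2}$, where $|z|-\Re z=2\lambda^{2}$ is indeed large.
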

	
\begin{proof}
	The first part of the thesis follows noting that $H_{\mathbf{A}}$ is a small perturbation of the free Laplacian in the sense of Kato. In fact, for any $\psi \in  H^2(\mathbb{R}^2)$ and any $\varepsilon \in (0,1)$, we have
	\begin{equation*}
		\lf\| \lf( H_{\mathbf{A}} + \Delta \ri) \psi \ri\|_{2} 
		\leqslant 2\, \lf\| \mathbf{A} \ri\|_{\infty} \lf\| \nabla \psi \ri\|_{2} + \lf\| \mathbf{A} \ri\|_{\infty}^{2}\, \lf\|\psi \ri\|_{2}
		\leqslant \varepsilon\, \big\| (-\Delta) \psi \big\|_{2} + \lf(\tfrac{1}{\varepsilon} + 1 \ri) \lf\| \mathbf{A} \ri\|_{\infty}^2\, \lf\| \psi \ri\|_{2}\,.
	\end{equation*}
	Furthermore, $H_{\mathbf{A}}^{(\mathrm{F})}$ is positive definite, so that $\sigma\big(H_{\mathbf{A}}^{(\mathrm{F})}\big) \subseteq [0,+\infty)$ and $R_{\mathbf{A}}^{(\mathrm{F})}(z) $ is a bounded operator for any $z \in \mathbb{C} \setminus [0,+\infty)$. By means of the second resolvent identity, we infer
	\begin{equation}\label{eq: 2RId}
		R_{\mathbf{A}}^{(\mathrm{F})}(z) \lf[ 1 + \lf(H_{\mathbf{A}}^{(\mathrm{F})}+ \Delta \ri) R_{0} (z)\ri] = R_{0}(z)  \,. 
	\end{equation}
Recalling that $\mathbf{A} \in L^{\infty}(\mathbb{R}^2)$, by elementary arguments we get
	\bmln{
		\lf\| \lf(H_{\mathbf{A}}^{(\mathrm{F})}  + \Delta \ri) R_{0}(z) \ri\| 
		= \lf\| \lf(2 \mathbf{A} \cdot (-i \nabla) + \lf| \mathbf{A} \ri|^2\ri) R_{0}(z) \ri\| \\
		\leqslant 2\, \|\mathbf{A}\|_{\infty} \,\esssup_{\mathbf{k} \in \mathbb{R}^2} \lf| \tfrac{\mathbf{k}}{|\mathbf{k}|^2 - z} \ri| + \|\mathbf{A}\|_{\infty}^2\, \esssup_{\mathbf{k} \in \mathbb{R}^2} \lf| \tfrac{1}{|\mathbf{k}|^2 - z} \ri| 
		\leqslant C \lf\|\mathbf{A} \ri\|_{\infty}  \left( \tfrac{1}{\sqrt{|z| - \Re z}} + \tfrac{
\lf\| \mathbf{A} \ri\|_{\infty}}{\dist\lf(z, \mathbb{R}^{+}\ri)} \ri) .
	}
From here we deduce $\big\| \big(H_{\mathbf{A}}^{(\mathrm{F})} + \Delta \big) R_{0}(z) \big\| < 1$, for any $z$ far enough from $ \mathbb{R}^+$, which in turn ensures that $1 + \big(H_{\mathbf{A}}^{(\mathrm{F})} + \Delta\big)R_{0}^{(\mathrm{F})}(z)$ is indeed invertible. In view of this, the first identity in \eqref{eq: RAR0} follows readily from \eqref{eq: 2RId}. The second identity in \eqref{eq: RAR0} can be derived by evaluating the adjoint of the first identity, with $z^*$ in place of  $ z $.
\end{proof}

	\begin{lemma}
		\label{lemma: RAR0compact}
		\mbox{}		\\
		For any $\mathbf{A} \in C^\infty(\mathbb{R}^2,\mathbb{R}^2)$ fulfilling \eqref{eq:AAasy} and \eqref{eq:AACou}, there holds
		\begin{equation*}
			R_{\mathbf{A}}^{(\mathrm{F})}(z) - R_{0}(z) \in \mathcal{L}^{\infty}\,, \qquad  \forall\, z \in \mathbb{C} \setminus [0,+\infty) \,,
		\end{equation*}
		so that 
		\begin{equation*}
				\sigma_{\mathrm{ess}}\big(H_{\mathbf{A}}^{(\mathrm{F})}\big) = \sigma_{\mathrm{ess}}( - \Delta) = [0,+\infty)\,,	\qquad		\sigma_{\mathrm{disc}}\big(H_{\mathbf{A}}^{(\mathrm{F})}\big) = \varnothing\,.
		\end{equation*}
	\end{lemma}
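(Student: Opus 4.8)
The plan is to prove that the resolvent difference $R_{\mathbf{A}}^{(\mathrm{F})}(z) - R_0(z)$ is compact for every $z \in \mathbb{C}\setminus[0,+\infty)$, and then to read off the two spectral statements from Weyl's theorem together with the positivity of $H_{\mathbf{A}}^{(\mathrm{F})}$. First I would invoke the second resolvent identity in the form
\[
	R_{\mathbf{A}}^{(\mathrm{F})}(z) - R_0(z) = -\,R_{\mathbf{A}}^{(\mathrm{F})}(z)\,\big(H_{\mathbf{A}}^{(\mathrm{F})} + \Delta\big)\,R_0(z)\,,
\]
which is meaningful since $\mathrm{ran}\,R_0(z) = H^2(\mathbb{R}^2)$ and, by the Coulomb gauge \eqref{eq:AACou}, on $H^2(\mathbb{R}^2)$ one has $H_{\mathbf{A}}^{(\mathrm{F})} + \Delta = 2\,\mathbf{A}\cdot(-i\nabla) + |\mathbf{A}|^2$ (this is exactly the identity already used in the proof of \cref{lemma: HA}, where $H_{\mathbf{A}}^{(\mathrm{F})}+\Delta$ was shown to be $R_0$-bounded). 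Since $R_{\mathbf{A}}^{(\mathrm{F})}(z)$ is bounded on the whole resolvent set and the compact operators form a norm-closed two-sided ideal, it suffices to prove that $\big(H_{\mathbf{A}}^{(\mathrm{F})} + \Delta\big)R_0(z) = 2\,\mathbf{A}\cdot(-i\nabla)R_0(z) + |\mathbf{A}|^2 R_0(z)$ is compact.

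Then I would observe that each summand has the Kato form $f(\mathbf{x})\,g(-i\nabla)$: the first is a sum of terms $A_j(\mathbf{x})\,g_j(-i\nabla)$ with symbol $g_j(\mathbf{k}) = k_j/(|\mathbf{k}|^2 - z)$, while the second is $|\mathbf{A}|^2(\mathbf{x})\,g_0(-i\nabla)$ with $g_0(\mathbf{k}) = 1/(|\mathbf{k}|^2 - z)$. I would then apply the Kato--Seiler--Simon estimate $\|f(\mathbf{x})g(-i\nabla)\|_{\mathcal{L}^3} \leqslant (2\pi)^{-2/3}\,\|f\|_{L^3}\,\|g\|_{L^3}$, valid in $\mathbb{R}^2$ because $3\geqslant 2$. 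The integrability inputs are immediate from the standing assumptions: $\mathbf{A}$ is smooth and bounded and decays like $|\mathbf{x}|^{-1}$, so $\mathbf{A},\,|\mathbf{A}|^2 \in L^3(\mathbb{R}^2)$; and since $z\notin[0,+\infty)$ the denominator $|\mathbf{k}|^2 - z$ stays bounded away from zero, so the momentum symbols are bounded and decay like $|\mathbf{k}|^{-1}$ and $|\mathbf{k}|^{-2}$ respectively, whence $g_j,\,g_0 \in L^3(\mathbb{R}^2)$. Thus both summands lie in $\mathcal{L}^3 \subset \mathcal{L}^\infty$, and the displayed factorization yields $R_{\mathbf{A}}^{(\mathrm{F})}(z) - R_0(z) \in \mathcal{L}^\infty$ for all admissible $z$ at once, with no need for analytic continuation.

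For the spectral consequences, compactness of the resolvent difference is precisely the hypothesis of Weyl's theorem on the invariance of the essential spectrum, giving $\sigma_{\mathrm{ess}}\big(H_{\mathbf{A}}^{(\mathrm{F})}\big) = \sigma_{\mathrm{ess}}(-\Delta) = [0,+\infty)$. Since $H_{\mathbf{A}}^{(\mathrm{F})} \geqslant 0$ by \cref{lemma: HA}, we have $\sigma\big(H_{\mathbf{A}}^{(\mathrm{F})}\big) \subseteq [0,+\infty)$; combined with the inclusion $[0,+\infty) = \sigma_{\mathrm{ess}} \subseteq \sigma$ this forces $\sigma\big(H_{\mathbf{A}}^{(\mathrm{F})}\big) = [0,+\infty) = \sigma_{\mathrm{ess}}\big(H_{\mathbf{A}}^{(\mathrm{F})}\big)$, hence $\sigma_{\mathrm{disc}}\big(H_{\mathbf{A}}^{(\mathrm{F})}\big) = \varnothing$.

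The main obstacle I anticipate is the slow $|\mathbf{x}|^{-1}$ decay of $\mathbf{A}$ inherited from the Aharonov--Bohm tails: $\mathbf{A}$ is neither compactly supported nor square-integrable, so the naive Hilbert--Schmidt ($\mathcal{L}^2$) route fails. The key point is exactly that, in two dimensions, $|\mathbf{x}|^{-1}$ and $|\mathbf{k}|^{-1}$ just miss $L^2$ but belong to every $L^p$ with $p>2$, so choosing the Schatten exponent $p=3$ (any $p\in(2,\infty)$ works) renders the Kato--Seiler--Simon bound finite. One should also check that $\mathbf{A}$ carries no local singularity when estimating the $L^3$ norms, which is guaranteed by construction, since $\mathbf{A}=\sum_n \mathbf{A}_n$ holds only outside the bounded set $\mathcal{Q}$ and $\mathbf{A}\in C^\infty(\mathbb{R}^2)$ throughout.
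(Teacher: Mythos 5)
Your proposal is correct, and the spectral conclusions are drawn exactly as in the paper (Weyl's theorem plus positivity of $H_{\mathbf{A}}^{(\mathrm{F})}$); the compactness step, however, runs along a genuinely different route. The paper writes the second resolvent identity with the free resolvent on the \emph{left}, $R_{\mathbf{A}}^{(\mathrm{F})}(-\lambda^2)-R_0(-\lambda^2) = -2R_0\,\mathbf{A}\cdot(-i\nabla)R_{\mathbf{A}}^{(\mathrm{F})} - R_0|\mathbf{A}|^2R_{\mathbf{A}}^{(\mathrm{F})}$, and then must confront the fact that $\mathbf{A}\notin L^2(\mathbb{R}^2)$: it introduces the smeared potentials $\mathbf{A}_\varepsilon = (1+|\mathbf{x}|)^{-\varepsilon}\mathbf{A}$, shows that $R_0(-\lambda^2)\mathbf{A}_\varepsilon$ is Hilbert--Schmidt, and passes to the norm limit $\varepsilon\to 0^+$ (using $\|\mathbf{A}_\varepsilon-\mathbf{A}\|_\infty\to 0$) to conclude that $R_0(-\lambda^2)\mathbf{A}$ is compact, while $R_0(-\lambda^2)|\mathbf{A}|^2$ is Hilbert--Schmidt outright; it also fixes $z=-\lambda^2$ and relies on the standard fact that compactness of the resolvent difference at one point of the resolvent set propagates to all of it. You instead put $R_{\mathbf{A}}^{(\mathrm{F})}(z)$ on the left, so that the perturbation hits $R_0(z)$ and every summand has the exact form $f(\mathbf{x})g(-i\nabla)$, and you invoke the Kato--Seiler--Simon bound with $p=3$; this replaces the paper's approximation argument by a single Schatten-norm estimate, treats every $z\in\mathbb{C}\setminus[0,+\infty)$ at once, and even yields the stronger conclusion $R_{\mathbf{A}}^{(\mathrm{F})}(z)-R_0(z)\in\mathcal{L}^3$, not mere compactness. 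The trade-off is the reliance on the Kato--Seiler--Simon inequality, whereas the paper's smearing argument needs nothing beyond Hilbert--Schmidt kernel estimates and the norm-closedness of the ideal of compact operators. Both proofs rest on the same structural inputs from \cref{lemma: HA} (namely $\dom\big(H_{\mathbf{A}}^{(\mathrm{F})}\big)=H^2(\mathbb{R}^2)$, the Coulomb-gauge identity $H_{\mathbf{A}}^{(\mathrm{F})}+\Delta = 2\mathbf{A}\cdot(-i\nabla)+|\mathbf{A}|^2$ on $H^2(\mathbb{R}^2)$, and boundedness of the resolvents involved), and your integrability checks are accurate in dimension two: $\mathbf{A}\sim|\mathbf{x}|^{-1}$ and the symbols $\sim|\mathbf{k}|^{-1}$, $|\mathbf{k}|^{-2}$ all just miss $L^2$ but lie in $L^3$, which is precisely why the exponent $p=3$ works.
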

	
\begin{proof}
It suffices to prove the thesis for some $z$ in the resolvent set. Let us fix $z = - \lambda^2$, for some $\lambda > 0$, and proceed to notice that the second resolvent identity yields
	\begin{equation}\label{eq: comp0}
		R_{\mathbf{A}}^{(\mathrm{F})}(-\lambda^2) - R_{0}(-\lambda^2) = 
		- 2\,R_{0}(-\lambda^2)\, \mathbf{A} \cdot (-i \nabla) R_{\mathbf{A}}^{(\mathrm{F})}(-\lambda^2)
		- R_{0}(-\lambda^2) \lf| \mathbf{A} \ri|^2 R_{\mathbf{A}}^{(\mathrm{F})}(-\lambda^2)\,.
	\end{equation}
Hereafter, we show that both addenda on the right-hand side of \eqref{eq: comp0} are indeed compact operators.

We introduce the one-parameter family of smeared vector potentials
	\begin{equation*}
		\mathbf{A}_{\varepsilon}(\mathbf{x}) := {1 \over (1 + |\mathbf{x}|)^{\varepsilon}}\,\mathbf{A}(\mathbf{x}) \qquad (\eps > 0)\,.
	\end{equation*}
	Since $|\mathbf{A}(\mathbf{x})| \leqslant {C \over 1+|\mathbf{x}|}$, we have
	\begin{equation*}
		\left\|R_{0}(-\lambda^2)\, \mathbf{A}_{\varepsilon}\right\|_{\mathcal{L}^2}^2 
				\leqslant C \lf\| R_{0}(-\lambda^2) \ri\|_2^2  \int_{\mathbb{R}^2}\hspace{-0.1cm} \diff \mathbf{\xv}\; {1 \over (1 + |\mathbf{x}|)^{2+2\varepsilon}}  < + \infty\,,
	\end{equation*}
proving that $R_{0}(-\lambda^2)\, \mathbf{A}_{\varepsilon}$ is a Hilbert-Schmidt operator. To say more, we have
	\begin{equation*}
		\lf\|\mathbf{A}_{\varepsilon} - \mathbf{A}\ri\|_{\infty} 
		\leqslant C \,\esssup_{\mathbf{x} \in \mathbb{R}^2} \tfrac{(1 + |\mathbf{x}|)^{\varepsilon} - 1}{(1 + |\mathbf{x}|)^{\varepsilon} (1 + |\mathbf{x}|)}
		= C \,\varepsilon\,(1+ \varepsilon)^{-{1 + \varepsilon \over \varepsilon}} \,\xrightarrow[\varepsilon \to 0^{+}]{}\, 0\,,
	\end{equation*}
which implies, in turn, that $R_{0}(-\lambda^2)\, \mathbf{A}_{\varepsilon}$ converges to $R_{0}(-\lambda^2)\, \mathbf{A}$ in norm. Since  $(-i \nabla) R_{\mathbf{A}}^{(\mathrm{F})}(-\lambda^2)$ is bounded, we deduce that the first operator on the r.h.s. of \eqref{eq: comp0} is compact.
On the other hand,  
	\begin{equation*}
		\left\|R_{0}(- \lambda^2)\, \lf| \mathbf{A} \ri|^2\right\|_{\mathcal{L}^2}^{2} 
		\leqslant C \lf\| R_{0}(-\lambda^2) \ri\|_2^2  \int_{\mathbb{R}^2}\hspace{-0.1cm} \diff\mathbf{\xv}\; {1 \over (1+|\mathbf{x}|)^4} < + \infty\,.
	\end{equation*}
Hence, since $R_{\mathbf{A}}^{(\mathrm{F})}(-\lambda^2)$ is bounded, the second operator on the r.h.s. of \eqref{eq: comp0} is compact too.

A straightforward application of Weyl's criterion \cite[Thm. XIII.14]{RS81} completes the proof.
\end{proof}

Next, we recall an important result about scattering theory for magnetic Schr\"odinger operators originally proved in \cite{LT87} and \cite{Ta99}.
	
	\begin{proposition}\label{prop: RAR0scatt}
		\mbox{}	\\
		For any $\mathbf{A} \in C^\infty(\mathbb{R}^2,\mathbb{R}^2)$ fulfilling \eqref{eq:AAasy} and \eqref{eq:AACou}, the wave operators $ \Omega_{\pm} (H_{\mathbf{A}}^{(\mathrm{F})}, - \Delta  ) $ exist and are asymptotically complete, so that
		\begin{equation*}
				\sigma_{\mathrm{ac}}\big(H_{\mathbf{A}}^{(\mathrm{F})}\big) =  \sigma_{\mathrm{ac}}\lf(-\Delta\ri) = [0,+\infty)\,,
				\qquad 
				\sigma_{\mathrm{sc}}\big( H_{\mathbf{A}}^{(\mathrm{F})}\big) = \varnothing\,.
		\end{equation*}
	\end{proposition}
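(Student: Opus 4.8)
The plan is to deduce the statement from the scattering theory for two-dimensional magnetic Schr\"odinger operators whose magnetic field is compactly supported, as developed in \cite{LT87,Ta99}; accordingly, the core of the argument is to verify that $H_{\mathbf{A}}^{(\mathrm{F})}$ falls within the scope of those works and then to import their conclusions. First I would record the structural features of $\mathbf{A}$ that make this possible. Since $\mathbf{A} = \sum_{n} \mathbf{A}_n$ outside the bounded set $\mathcal{Q}$ and each $\mathbf{A}_n$ is curl-free away from $\mathbf{x}_n$, the magnetic field $B := \mathrm{curl}\,\mathbf{A}$ is a smooth function supported in $\overline{\mathcal{Q}}$; by Stokes' theorem its total flux equals $2\pi \sum_n \alpha_n$. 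Thus $B$ is short-range in the strongest possible sense, while $\mathbf{A}$ itself retains the genuinely long-range azimuthal tail $\big(\sum_n \alpha_n\big)\,\mathbf{x}^{\perp}/|\mathbf{x}|^2 \sim |\mathbf{x}|^{-1}$ inherited from the Aharonov--Bohm potentials.

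The existence of $\Omega_{\pm}\big(H_{\mathbf{A}}^{(\mathrm{F})}, -\Delta\big)$ I would establish by Cook's method, working in the Coulomb gauge \eqref{eq:AACou} so that the perturbation reads $H_{\mathbf{A}}^{(\mathrm{F})} + \Delta = 2\,\mathbf{A}\cdot(-i\nabla) + |\mathbf{A}|^2$. The quadratic term decays like $|\mathbf{x}|^{-2}$ and poses no problem, whereas the first-order term, applied to a free wave packet $e^{it\Delta}\psi$ concentrated near $|\mathbf{x}|\sim t$, has norm only of order $|\mathbf{x}|^{-1}\sim t^{-1}$, which is not time-integrable; this is the obstruction that a priori forces one into modified wave operators. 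The crucial point, exploited in \cite{LT87}, is that the long-range part of $\mathbf{A}$ is purely transversal: its radial component $\mathbf{A}\cdot\mathbf{x}/|\mathbf{x}|$ vanishes identically, since $\mathbf{x}^{\perp}\cdot\mathbf{x} = 0$. Because the free flow transports momentum into the radial direction, $\mathbf{A}\cdot(-i\nabla)$ effectively extracts the angular derivative, gaining an extra factor $|\mathbf{x}|^{-1}\sim t^{-1}$; the resulting contribution is then of order $t^{-2}$, so that $\frac{d}{dt}\,e^{itH_{\mathbf{A}}^{(\mathrm{F})}} e^{it\Delta}\psi \in L^1\big((1,+\infty);L^2(\mathbb{R}^2)\big)$ on a dense set of $\psi$, yielding the strong limits as $t\to+\infty$; the limits as $t\to-\infty$ follow symmetrically.

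For completeness and the absence of singular continuous spectrum I would combine two ingredients. On the one hand, \cref{lemma: RAR0compact} already gives that $R_{\mathbf{A}}^{(\mathrm{F})}(z) - R_0(z) \in \mathcal{L}^{\infty}$, whence the essential spectra coincide and equal $[0,+\infty)$; this controls the spectrum but is by itself insufficient to produce the wave operators. On the other hand, to upgrade existence to asymptotic completeness I would invoke the stronger conclusions of \cite{LT87,Ta99}: the compact support of $B$ permits a Mourre-type estimate for $H_{\mathbf{A}}^{(\mathrm{F})}$ (equivalently, a suitable Kato-smoothness bound for the free resolvent sandwiched by the decaying coefficients $\mathbf{A}$ and $|\mathbf{A}|^2$), which simultaneously rules out singular continuous spectrum, $\sigma_{\mathrm{sc}}\big(H_{\mathbf{A}}^{(\mathrm{F})}\big) = \varnothing$, and yields existence of the inverse wave operators $\Omega_{\pm}\big(-\Delta, H_{\mathbf{A}}^{(\mathrm{F})}\big)$, hence completeness. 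The spectral identities $\sigma_{\mathrm{ac}}\big(H_{\mathbf{A}}^{(\mathrm{F})}\big) = \sigma_{\mathrm{ac}}(-\Delta) = [0,+\infty)$ then follow from the intertwining property of the (complete) wave operators together with \cref{lemma: RAR0compact}.

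The step I expect to be the main obstacle is precisely the existence of the \emph{unmodified} wave operators in the presence of the $|\mathbf{x}|^{-1}$ tail of $\mathbf{A}$: one must genuinely use the transversality of the Aharonov--Bohm potential---the exact vanishing of its radial component---rather than mere size estimates, since a generic vector potential of that decay would require modified dynamics. Once this is in place, completeness and the emptiness of the singular continuous spectrum are comparatively standard consequences of the compactness in \cref{lemma: RAR0compact} and the Mourre/Kato-smoothness machinery supplied by \cite{LT87,Ta99}.
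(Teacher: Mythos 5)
Your proposal is correct and takes essentially the same route as the paper: the paper's entire proof is the citation ``See, \emph{e.g.}, \cite[Theorem 2]{LT87}'', i.e.\ it defers existence and asymptotic completeness to exactly the works \cite{LT87,Ta99} that you verify the hypotheses of (compactly supported smooth field, transversal long-range tail) and then invoke. Your additional heuristic reconstruction of the internal mechanism---Cook's method exploiting the vanishing radial component of the Aharonov--Bohm tail, plus the completeness machinery of those references and \cref{lemma: RAR0compact} for the spectral identities---is consistent with how those cited results work, though the paper itself does not reproduce any of it.
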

	\begin{proof}
		See, {\it e.g.}, \cite[Theorem 2]{LT87}.
	\end{proof}

	\begin{lemma}
		\label{lemma: RNRAS2}
		\mbox{}		\\
		Let $ N \in \N $ and, for all $n = 1, \ldots, N$, let $\alpha_n \in (0,1)$. Then, for any $\mathbf{A} \in C^\infty(\mathbb{R}^2,\mathbb{R}^2)$ fulfilling \eqref{eq:AAasy} and \eqref{eq:AACou}, there holds
			\begin{equation*}
				R_{N}^{(\mathrm{F})}(z) - R_{\mathbf{A}}^{(\mathrm{F})}(z) \in \mathcal{L}^{2} , \qquad  \forall\, z \in \mathbb{C} \setminus [0,+\infty) \,.
			\end{equation*}
	\end{lemma}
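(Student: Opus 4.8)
The plan is to reduce the statement to a single spectral parameter and then compare the two resolvents through parallel localized representations, isolating the Hilbert--Schmidt content term by term.

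First I would fix $z_0 = -\lambda^2$ with $\lambda > 0$ large and reduce to proving $R^{(\mathrm F)}_N(z_0) - R^{(\mathrm F)}_{\mathbf A}(z_0) \in \mathcal L^2$. Indeed, writing $S_\bullet(z) := [1 - (z-z_0)R^{(\mathrm F)}_\bullet(z_0)]^{-1}$ (bounded and analytic for $z$ in the resolvent set) one has $R^{(\mathrm F)}_\bullet(z) = S_\bullet(z)R^{(\mathrm F)}_\bullet(z_0)$, whence a short computation expresses $R^{(\mathrm F)}_N(z) - R^{(\mathrm F)}_{\mathbf A}(z)$ as a sum of terms each containing the single factor $D_0 := R^{(\mathrm F)}_N(z_0) - R^{(\mathrm F)}_{\mathbf A}(z_0)$ sandwiched between bounded operators; since $\mathcal L^2$ is a two-sided ideal, $D_0 \in \mathcal L^2$ propagates to all $z$. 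Choosing $\lambda$ large also guarantees, by \cref{lemma: TN small}, that $\|T_N(z_0)\| < 1$, and the same device furnishes a convergent parametrix for $R^{(\mathrm F)}_{\mathbf A}$.

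The comparison rests on two representations built from the same partition of unity. For $H^{(\mathrm F)}_N$ I use \eqref{eq: RNFried}, i.e. $R^{(\mathrm F)}_N = \mathcal R\,[1 + T_N]^{-1}$ with $\mathcal R := \sum_{n=0}^N \Phi_n\,\xi_n R^{(\mathrm F)}_n \xi_n \Phi_n^{-1}$ and $\Phi_n := e^{-i\mathbf S_n(\mathbf x_n)\cdot(\mathbf x - \mathbf x_n)}$. Since $\mathbf A$ is smooth and bounded, repeating verbatim the proof of \cref{lemma: Tn} with $H^{(\mathrm F)}_{\mathbf A}$ in place of $H^{(\mathrm F)}_N$, and using the free resolvent $R_0$ as building block near \emph{every} flux together with the same phases $\Phi_n$, produces $R^{(\mathrm F)}_{\mathbf A} = \mathcal R^{\mathbf A}\,[1 + T^{\mathbf A}_N]^{-1}$ with $\mathcal R^{\mathbf A} := \sum_{n=0}^N \Phi_n\,\xi_n R_0 \xi_n \Phi_n^{-1}$ and an error $T^{\mathbf A}_N$ of the same structure, again of norm $< 1$ for $\lambda$ large. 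From $R^{(\mathrm F)}_N = \mathcal R - R^{(\mathrm F)}_N T_N$ and $R^{(\mathrm F)}_{\mathbf A} = \mathcal R^{\mathbf A} - R^{(\mathrm F)}_{\mathbf A} T^{\mathbf A}_N$ I then derive the key algebraic identity
\[
	R^{(\mathrm F)}_N - R^{(\mathrm F)}_{\mathbf A} = \Big[ \big(\mathcal R - \mathcal R^{\mathbf A}\big) - R^{(\mathrm F)}_N\big(T_N - T^{\mathbf A}_N\big) \Big]\,\big[1 + T^{\mathbf A}_N\big]^{-1},
\]
whose virtue is that it keeps the full resolvent $R^{(\mathrm F)}_N$ adjacent to the difference of the error operators.

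It then remains to check that the two bracketed operators are Hilbert--Schmidt. For $\mathcal R - \mathcal R^{\mathbf A}$ the $n = 0$ terms coincide (same free block, same phase), so $\mathcal R - \mathcal R^{\mathbf A} = \sum_{n=1}^N \Phi_n\,\xi_n\big(R^{(\mathrm F)}_n - R_0\big)\xi_n\Phi_n^{-1}$; here each summand lies in $\mathcal L^2$ simply because $\xi_n R^{(\mathrm F)}_n \xi_n$ and $\xi_n R_0 \xi_n$ are \emph{separately} Hilbert--Schmidt — both kernels have only a logarithmic (hence locally square-integrable) singularity on the diagonal and stay bounded near $\mathbf x_n$, so the compactly supported cut-offs make them globally square-integrable. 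For $R^{(\mathrm F)}_N(T_N - T^{\mathbf A}_N)$ I expand over $n$: for each flux ($n \geq 1$) the coefficients in $P_n$ and $P^{\mathbf A}_n$ are bounded and compactly supported (by the properties of $\xi_n$ and \eqref{eq: checkSnLip}), while the singular factor $\mathbf A_n$ enters only through the bounded operator $(-i\nabla + \mathbf A_n)R^{(\mathrm F)}_n$; grouping as $\big(R^{(\mathrm F)}_N\,\one_{B_{r_*}(\mathbf x_n)}\big)\cdot\big((-i\nabla + \mathbf A_n)R^{(\mathrm F)}_n\xi_n\Phi_n^{-1}\big)$ and using that $R^{(\mathrm F)}_N\,\one_K \in \mathcal L^2$ for compact $K$ (its kernel being square-integrable in the first variable, uniformly in the second) against a bounded second factor shows each such term is in $\mathcal L^2$. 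The only non-localized contribution is the $n = 0$ one, which however enters \emph{only through the difference} $T_N - T^{\mathbf A}_N$: since $\mathbf A = \sum_m \mathbf A_m$ outside $\mathcal Q$ and $\supp \xi_0$ avoids the fluxes, $P_0 - P^{\mathbf A}_0$ is again bounded with compact support, and the same grouping applies; multiplying by the bounded operator $[1 + T^{\mathbf A}_N]^{-1}$ concludes.

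I expect the main obstacle to be precisely the bookkeeping that guarantees that \emph{no isolated first-order (order $-1$) localized operator survives}: such operators fail to be Hilbert--Schmidt in two dimensions, so every potentially dangerous factor must be paired with a resolvent (which supplies the missing smoothing via $R\,\one_K \in \mathcal L^2$), and the genuinely long-range $n = 0$ pieces — which are individually not Hilbert--Schmidt — must be seen to cancel in the differences $\mathcal R - \mathcal R^{\mathbf A}$ and $T_N - T^{\mathbf A}_N$. The algebraic identity above, together with the compact-support and boundedness analysis of the coefficient operators (resting on $\check{\mathbf S}_n$ being Lipschitz and on $\mathbf A$ coinciding with $\sum_m \mathbf A_m$ away from $\mathcal Q$), is what renders this cancellation transparent.
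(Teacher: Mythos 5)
Your argument is correct in substance, but it takes a genuinely different --- and considerably heavier --- route than the paper's. The paper's proof is essentially three lines: since $\mathbf{A} = \sum_n \mathbf{A}_n$ outside $\mathcal{Q}$, the difference of the Hamiltonians localizes as $H_{N}^{(\mathrm{F})} - H_{\mathbf{A}}^{(\mathrm{F})} = \one_{\mathcal{Q}}\big(H_{N}^{(\mathrm{F})} - H_{\mathbf{A}}^{(\mathrm{F})}\big)$; the second resolvent identity combined with the representation \eqref{eq: RAR0} --- whose whole point is to place a \emph{free} resolvent adjacent to $\one_{\mathcal{Q}}$ --- yields $R_{N}^{(\mathrm{F})}(z) - R_{\mathbf{A}}^{(\mathrm{F})}(z) = -\big[1 + R_{0}(z)\big(H_{\mathbf{A}}^{(\mathrm{F})}+\Delta\big)\big]^{-1} R_{0}(z)\,\one_{\mathcal{Q}}\,\big(H_{N}^{(\mathrm{F})}-H_{\mathbf{A}}^{(\mathrm{F})}\big)R_{N}^{(\mathrm{F})}(z)$; and the only Hilbert--Schmidt input is $R_{0}(z)\one_{\mathcal{Q}} \in \mathcal{L}^2$, read off from the explicit free kernel. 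You instead duplicate the parametrix construction of \cref{lemma: Tn} for the smooth operator and compare the two representations block by block. Your key algebraic identity is correct, the cancellation of the non-localized $n=0$ pieces in $\mathcal{R}-\mathcal{R}^{\mathbf{A}}$ and $T_N - T_N^{\mathbf{A}}$ does work precisely because $\mathbf{A}$ agrees with $\sum_n\mathbf{A}_n$ off $\mathcal{Q}$ and you use the same phases in both parametrices, and the groupings pairing each compactly supported coefficient with a resolvent factor are legitimate (boundedness of $(-i\nabla+\mathbf{A}_n)R_n^{(\mathrm{F})}$ is available from the proof of \cref{lemma: TN small}). So both proofs exploit the same geometric fact, but the paper implements it through one resolvent identity, while you implement it through the localized representation; your version makes the cancellation mechanism explicit at the price of much more bookkeeping.

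The cost is not only length: your route needs two analytic inputs that the paper's route never requires, and you assert both without proof. Namely, (i) $\xi_n R_n^{(\mathrm{F})}(z)\,\xi_n \in \mathcal{L}^2$, and (ii) $R_N^{(\mathrm{F})}(z)\,\one_K \in \mathcal{L}^2$ for compact $K$. Both are true, but they concern kernels of the Aharonov--Bohm resolvents rather than of $R_0$, and "logarithmic diagonal singularity'' / "square-integrability of the kernel uniformly in the second variable'' are exactly the facts one must establish. They can be supplied with the tools at hand: for (i), dominate the series \eqref{eq: RnAT} term by term, using that $I_{\nu}(\lambda r_\wedge)K_{\nu}(\lambda r_\vee)$ is decreasing in the order $\nu\geq 0$, so that the kernel is bounded by a multiple of the free one (equivalently, invoke the diamagnetic bound for the single-flux Friedrichs operator); for (ii), use \eqref{eq: RNFried} to write $\one_K R_N^{(\mathrm{F})}$ as a sum of terms $\one_K\, e^{-i\mathbf{S}_n(\mathbf{x}_n)\cdot(\mathbf{x}-\mathbf{x}_n)}\xi_n R_n^{(\mathrm{F})}\xi_n\, e^{i\mathbf{S}_n(\mathbf{x}_n)\cdot(\mathbf{x}-\mathbf{x}_n)}\big[1+T_N\big]^{-1}$, each Hilbert--Schmidt by (i) (or by the free-kernel computation when $n=0$) times a bounded operator, and then take adjoints. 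With these two lemmas filled in, your proof is complete; without them it is not self-contained, and this is the main respect in which it falls short of the paper's argument, which needs nothing beyond the free kernel.
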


\begin{proof}
Let $\one_{\mathcal{Q}}$ be the indicator function associated to the bounded set $\mathcal{Q}$ appearing in \eqref{eq:AAasy}. An elementary computation shows that
	\begin{equation}\label{eq: RNRAunoO}
		H_{N}^{(\mathrm{F})} - H_{\mathbf{A}}^{(\mathrm{F})} 
		= \big(\mbox{$\sum_n$}\mathbf{A}_n - \mathbf{A} \big) \cdot \big[ 2 (- i\nabla) + \mbox{$\sum_n$}\mathbf{A}_n + \mathbf{A} \big]
		= \one_{\mathcal{Q}} \lf(H_{N}^{(\mathrm{F})} - H_{\mathbf{A}}^{(\mathrm{F})} \ri).
	\end{equation}
Then, using the second resolvent identity and the representation \eqref{eq: RAR0} for $R_{\mathbf{A}}^{(\mathrm{F})}(z)$, we deduce
	\begin{equation*}
		R_{N}^{(\mathrm{F})}(z) - R_{\mathbf{A}}^{(\mathrm{F})}(z) 
		= - \lf[1 + R_{0} (z) \lf(H_{\mathbf{A}}^{(\mathrm{F})} + \Delta \ri) \ri]^{-1} R_{0} (z)\, \one_{\mathcal{Q}}\, \lf( H_{N}^{(\mathrm{F})} - H_{\mathbf{A}}^{(\mathrm{F})} \ri)  R_{N}^{(\mathrm{F})}(z)\,.
	\end{equation*}
Notice that $\lf[\one + R_{0} (z) \lf(H_{\mathbf{A}}^{(\mathrm{F})} + \Delta \ri) \ri]^{-1}$ and $\big[ H_{N}^{(\mathrm{F})} - H_{\mathbf{A}}^{(\mathrm{F})} \big] R_{N}^{(\mathrm{F})}(z)$ are bounded operators. On the other hand, 
	\begin{equation*}
		\lf\|R_{0}(-\lambda^2) \one_{\mathcal{Q}}\ri\|_{\mathcal{L}^2}^{2} \leq \lf\| R_{0}(-\lambda^2) \ri\|_2^2 \lf| \Omega \ri| < + \infty\,.
	\end{equation*}
This suffices to infer that $R_{0}^{(\mathrm{F})}(z)\, \one_{\mathcal{Q}} \in \mathcal{L}^2$ for $z = -\lambda^2$, whence for any $z \in \mathbb{C} \setminus [0,+\infty)$, which in turn implies the thesis.
\end{proof}

	\begin{lemma}
		\label{lemma: RN2RA2S1}
		\mbox{}		\\
		Let $ N \in \N $ and, for all $n = 1, \ldots, N$, let $\alpha_n \in (0,1)$. Then, for any $\mathbf{A} \in C^\infty(\mathbb{R}^2,\mathbb{R}^2)$ fulfilling \eqref{eq:AAasy} and \eqref{eq:AACou}, there holds
		\begin{equation*}
			\big(R_{N}^{(\mathrm{F})}(z)\big)^2 - \big(R_{\mathbf{A}}^{(\mathrm{F})}(z)\big)^2 \in \mathcal{L}^1 , \qquad  \forall\, z \in \mathbb{C} \setminus [0,+\infty)\,.
		\end{equation*}
	\end{lemma}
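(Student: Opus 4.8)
Write $D:=R_{N}^{(\mathrm{F})}(z)-R_{\mathbf{A}}^{(\mathrm{F})}(z)$, which belongs to $\mathcal{L}^2$ by \cref{lemma: RNRAS2}. The plan is to exploit the elementary operator identity
\begin{equation*}
\big(R_{N}^{(\mathrm{F})}(z)\big)^2-\big(R_{\mathbf{A}}^{(\mathrm{F})}(z)\big)^2 = R_{\mathbf{A}}^{(\mathrm{F})}(z)\,D + D\,R_{\mathbf{A}}^{(\mathrm{F})}(z) + D^2 ,
\end{equation*}
and to treat the three summands separately. Since $\mathcal{L}^2\cdot\mathcal{L}^2\subset\mathcal{L}^1$, the last term $D^2$ is automatically trace class, so the whole problem reduces to showing that the two ``cross'' terms $R_{\mathbf{A}}^{(\mathrm{F})}(z)\,D$ and $D\,R_{\mathbf{A}}^{(\mathrm{F})}(z)$ lie in $\mathcal{L}^1$. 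A priori each is only the product of a bounded operator and a Hilbert--Schmidt one, hence merely in $\mathcal{L}^2$; the crux is to upgrade this to $\mathcal{L}^1$ by producing a genuine factorization into \emph{two} Hilbert--Schmidt operators.

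To this end I would insert a polynomial weight. Fix $\rho>1$ and set $\langle\mathbf{x}\rangle:=(1+|\mathbf{x}|^2)^{1/2}$. For the first cross term write
\begin{equation*}
R_{\mathbf{A}}^{(\mathrm{F})}(z)\,D = \big(R_{\mathbf{A}}^{(\mathrm{F})}(z)\,\langle\mathbf{x}\rangle^{-\rho}\big)\,\big(\langle\mathbf{x}\rangle^{\rho}\,D\big),
\end{equation*}
and show that both factors are Hilbert--Schmidt. For the left factor I would use the representation \eqref{eq: RAR0} of \cref{lemma: HA} (first for $z$ far from $\mathbb{R}^+$, then for all $z$ by the resolvent identity) together with the standard fact that $f(\mathbf{x})\,g(-i\nabla)\in\mathcal{L}^2$ whenever $f,g\in L^2(\mathbb{R}^2)$: since both the symbol $(|\cdot|^2-z)^{-1}$ and the weight $\langle\mathbf{x}\rangle^{-\rho}$ lie in $L^2(\mathbb{R}^2)$ for $\rho>1$, one gets $R_0(z)\langle\mathbf{x}\rangle^{-\rho}\in\mathcal{L}^2$ and hence $R_{\mathbf{A}}^{(\mathrm{F})}(z)\langle\mathbf{x}\rangle^{-\rho}\in\mathcal{L}^2$. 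For the right factor I would invoke the second-resolvent-identity form $D=-R_{\mathbf{A}}^{(\mathrm{F})}(z)\big(H_N^{(\mathrm{F})}-H_{\mathbf{A}}^{(\mathrm{F})}\big)R_N^{(\mathrm{F})}(z)$ used in \cref{lemma: RNRAS2}: because $H_N^{(\mathrm{F})}-H_{\mathbf{A}}^{(\mathrm{F})}=\one_{\mathcal{Q}}\big(H_N^{(\mathrm{F})}-H_{\mathbf{A}}^{(\mathrm{F})}\big)$ is supported in the bounded set $\mathcal{Q}$, and the kernel $R_{\mathbf{A}}^{(\mathrm{F})}(z;\mathbf{x},\mathbf{w})$ decays exponentially as $|\mathbf{x}|\to\infty$ (Combes--Thomas), the kernel $D(\mathbf{x},\mathbf{y})$ decays exponentially in $\mathbf{x}$, which absorbs the weight and yields $\langle\mathbf{x}\rangle^{\rho}D\in\mathcal{L}^2$. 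The second cross term is handled symmetrically: using the adjoint form $D=-R_N^{(\mathrm{F})}(z)\big(H_N^{(\mathrm{F})}-H_{\mathbf{A}}^{(\mathrm{F})}\big)R_{\mathbf{A}}^{(\mathrm{F})}(z)$, whose kernel decays in the second variable, one writes $D\,R_{\mathbf{A}}^{(\mathrm{F})}(z)=\big(D\langle\mathbf{x}\rangle^{\rho}\big)\big(\langle\mathbf{x}\rangle^{-\rho}R_{\mathbf{A}}^{(\mathrm{F})}(z)\big)$, again a product of two Hilbert--Schmidt operators.

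The hard part is precisely the weighted Hilbert--Schmidt bound $\langle\mathbf{x}\rangle^{\rho}D\in\mathcal{L}^2$. One cannot simply factor it as $\big(\langle\mathbf{x}\rangle^{\rho}R_{\mathbf{A}}^{(\mathrm{F})}(z)\one_{\mathcal{Q}}\big)\big(H_N^{(\mathrm{F})}-H_{\mathbf{A}}^{(\mathrm{F})}\big)R_N^{(\mathrm{F})}(z)$ with the second factor Hilbert--Schmidt, because $\big(H_N^{(\mathrm{F})}-H_{\mathbf{A}}^{(\mathrm{F})}\big)R_N^{(\mathrm{F})}(z)$ fails to be Hilbert--Schmidt: the coefficients of $H_N^{(\mathrm{F})}-H_{\mathbf{A}}^{(\mathrm{F})}$ carry a non-$L^2$ singularity at the fluxes. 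The way around this is to note that this singularity is only $L^1_{\mathrm{loc}}$ and is regularized by the adjacent resolvent, so that one should estimate the weighted Hilbert--Schmidt norm of the kernel $D(\mathbf{x},\mathbf{y})$ directly, combining the integrable flux singularity in the intermediate variable with the exponential spatial decay coming from the localization of the perturbation in $\mathcal{Q}$. Once this weighted bound and its mirror image are in place, assembling the three summands gives $\big(R_N^{(\mathrm{F})}(z)\big)^2-\big(R_{\mathbf{A}}^{(\mathrm{F})}(z)\big)^2\in\mathcal{L}^1$ for $z$ far from $[0,+\infty)$, and the full range $z\in\mathbb{C}\setminus[0,+\infty)$ follows once more by the resolvent identity.
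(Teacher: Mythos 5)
Your algebraic decomposition and the treatment of $D^2$ coincide with the paper's; the substance of the lemma lies entirely in the two cross terms, and there your argument has a genuine gap. The decisive claim $\langle\mathbf{x}\rangle^{\rho}D\in\mathcal{L}^2$ is never proved: you defer it to ``a direct estimate of the weighted Hilbert--Schmidt norm of the kernel of $D$'', but this is precisely the nontrivial point, and the route you sketch for it cannot work as stated. First, $\big(H_{N}^{(\mathrm{F})}-H_{\mathbf{A}}^{(\mathrm{F})}\big)R_{N}^{(\mathrm{F})}(z)$ has no pointwise kernel to estimate (it contains first-order derivative terms), so ``combining the integrable flux singularity in the intermediate variable with the exponential decay'' is not an argument one can run on kernels. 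Second, your premise that the coefficient singularity of $H_{N}^{(\mathrm{F})}-H_{\mathbf{A}}^{(\mathrm{F})}$ is ``only $L^1_{\mathrm{loc}}$'' is false: the zero-order part behaves like $|\mathbf{A}_n|^2\sim|\mathbf{x}-\mathbf{x}_n|^{-2}$ near each flux, which is \emph{not} locally integrable in two dimensions. Third, the ``adjoint form'' $D=-R_{N}^{(\mathrm{F})}(z)\big(H_{N}^{(\mathrm{F})}-H_{\mathbf{A}}^{(\mathrm{F})}\big)R_{\mathbf{A}}^{(\mathrm{F})}(z)$ invoked for the second cross term is ill-defined as an $L^2$ operator identity, because $H_{N}^{(\mathrm{F})}-H_{\mathbf{A}}^{(\mathrm{F})}$ does not map $H^2(\mathbb{R}^2)=\dom\big(H_{\mathbf{A}}^{(\mathrm{F})}\big)$ into $L^2(\mathbb{R}^2)$ ($H^2$ functions need not vanish at the fluxes); at real $z$ one should simply take adjoints of the first cross term, as the paper does.

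The irony is that the factorization you explicitly discard would close the gap at once: you do not need $\big(H_{N}^{(\mathrm{F})}-H_{\mathbf{A}}^{(\mathrm{F})}\big)R_{N}^{(\mathrm{F})}(z)$ to be Hilbert--Schmidt, only \emph{bounded}, which it is (this is exactly what is used in the proof of \cref{lemma: RNRAS2}). Writing, for $z=-\lambda^2$,
\begin{equation*}
\langle\mathbf{x}\rangle^{\rho}D
=-\Big(\langle\mathbf{x}\rangle^{\rho}R_{\mathbf{A}}^{(\mathrm{F})}(-\lambda^2)\,\one_{\mathcal{Q}}\Big)\Big[\big(H_{N}^{(\mathrm{F})}-H_{\mathbf{A}}^{(\mathrm{F})}\big)R_{N}^{(\mathrm{F})}(-\lambda^2)\Big]\,,
\end{equation*}
it suffices to check $\langle\mathbf{x}\rangle^{\rho}R_{\mathbf{A}}^{(\mathrm{F})}(-\lambda^2)\,\one_{\mathcal{Q}}\in\mathcal{L}^2$, which follows from the diamagnetic kernel bound $\big|R_{\mathbf{A}}^{(\mathrm{F})}(-\lambda^2;\mathbf{x},\mathbf{y})\big|\leqslant\tfrac{1}{2\pi}K_{0}(\lambda|\mathbf{x}-\mathbf{y}|)$ (or from a Combes--Thomas estimate): the exponential decay beats the polynomial weight and $\mathcal{Q}$ is bounded. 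Then $R_{\mathbf{A}}^{(\mathrm{F})}D=\big(R_{\mathbf{A}}^{(\mathrm{F})}\langle\mathbf{x}\rangle^{-\rho}\big)\big(\langle\mathbf{x}\rangle^{\rho}D\big)$ is a product of two Hilbert--Schmidt operators, as you wanted, and the rest of your proof (including the left factor, which you justified correctly via \eqref{eq: RAR0}) goes through. With that single repair your weighted-resolvent argument becomes a correct and genuinely different alternative to the paper's proof, which instead inserts a cutoff $\chi_{\mathcal{Q}}\in C^{\infty}_{\mathrm{c}}(\mathbb{R}^2)$, reduces the cross term to $\big(R_{\mathbf{A}}^{(\mathrm{F})}(z)\big)^2\chi_{\mathcal{Q}}\in\mathcal{L}^1$, and proves the latter by commuting $\chi_{\mathcal{Q}}$ through the resolvents via the Coulomb gauge until every summand contains a factor $R_{\mathbf{A}}^{(\mathrm{F})}(z)\,\mathcal{X}\,R_{\mathbf{A}}^{(\mathrm{F})}(z)$ with $\mathcal{X}\in C^{\infty}_{\mathrm{c}}(\mathbb{R}^2)$, manifestly trace class.
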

	
\begin{proof}
To begin with, let us draw the attention to the elementary algebraic identity
	\bmln{
		\big(R_{N}^{(\mathrm{F})}(z)\big)^2 - \big(R_{\mathbf{A}}^{(\mathrm{F})}(z)\big)^2 
		= \lf(R_{N}^{(\mathrm{F})}(z) - R_{\mathbf{A}}^{(\mathrm{F})}(z)\ri)^2
			+ \lf(R_{N}^{(\mathrm{F})}(z) - R_{\mathbf{A}}^{(\mathrm{F})}(z)\ri) R_{\mathbf{A}}^{(\mathrm{F})}(z)
			\\
			+ R_{\mathbf{A}}^{(\mathrm{F})}(z) \lf(R_{N}^{(\mathrm{F})}(z) - R_{\mathbf{A}}^{(\mathrm{F})}(z)\ri) .
	}
Since $R_{N}^{(\mathrm{F})}(z) - R_{\mathbf{A}}^{(\mathrm{F})}(z) \in \mathcal{L}^2$ by \cref{lemma: RNRAS2}, using basic properties of Hilbert-Schmidt operators we readily obtain that $\big(R_{N}^{(\mathrm{F})}(z) - R_{\mathbf{A}}^{(\mathrm{F})}(z)\big)^2 \in \mathcal{L}^1 $ (see, {\it e.g.}, \cite[Thm. VI.22]{RS81}). For $ z $ real the other two terms are one the adjoint of the other, so it suffices to prove that
	\begin{equation*}
		R_{\mathbf{A}}^{(\mathrm{F})}(z) \lf(R_{N}^{(\mathrm{F})}(z) - R_{\mathbf{A}}^{(\mathrm{F})}(z)\ri) \in \mathcal{L}^1 .
	\end{equation*}
	Let $\chi_{\mathcal{Q}} \in C^{\infty}_{\mathrm{c}}(\mathbb{R}^2)$ be such that $\chi_{\mathcal{Q}} \equiv 1$ in $\mathcal{Q}$. Acting as in the derivation of \eqref{eq: RNRAunoO}, we get
	\begin{equation*}
		R_{\mathbf{A}}^{(\mathrm{F})}(z) \lf(R_{N}^{(\mathrm{F})}(z) - R_{\mathbf{A}}^{(\mathrm{F})}(z)\ri) 
		= -\, \big(R_{\mathbf{A}}^{(\mathrm{F})}(z)\big)^2\, \chi_{\mathcal{Q}} \lf(H_{N}^{(\mathrm{F})} - H_{\mathbf{A}}^{(\mathrm{F})}\ri) R_{N}^{(\mathrm{F})}(z)\,.
	\end{equation*}
Since $\big(H_{N}^{(\mathrm{F})} - H_{\mathbf{A}}^{(\mathrm{F})}\big) R_{N}^{(\mathrm{F})}(z)$ is a bounded operator, it remains to show that
	\begin{equation}\label{eq: RAchiS1}
		\big(R_{\mathbf{A}}^{(\mathrm{F})}(z)\big)^2\, \chi_{\mathcal{Q}} \in \mathcal{L}^1 \,.
	\end{equation}
To this purpose, we repeatedly exploit the following identity, which is a straightforward consequence of the Coulomb gauge,
\bmln{
		\lf[(-i\partial_{j}), H_{\mathbf{A}}^{(\mathrm{F})} \ri]
		= \lf[(-i\partial_{j}), \lf|\mathbf{A}\ri|^2 \ri] + 2 \sum_{k = 1}^2 \lf[(-i\partial_{j}), \mathbf{A}_k \ri] (-i\partial_k)   
		=  - i \partial_j \lf| \mathbf{A} \ri|^2 + 2 \sum_{k = 1}^2 (-i \partial_j \mathbf{A}_k) (-i\partial_k)  \\
		= - i \partial_j \lf| \mathbf{A} \ri|^2   + 2 \lf(- i \nabla \ri) \cdot (-i \partial_j \mathbf{A})\,,
		}
to compute	
	\bml{\label{eq: seqid}
		\big(R_{\mathbf{A}}^{(\mathrm{F})}(z)\big)^2 \chi_{\mathcal{Q}} 
		= - \,\big(R_{\mathbf{A}}^{(\mathrm{F})}(z)\big)^2\big[H_{\mathbf{A}}^{(\mathrm{F})}, \chi_{\mathcal{Q}}\big] R_{\mathbf{A}}^{(\mathrm{F})}(z) + R_{\mathbf{A}}^{(\mathrm{F})}(z)\,\chi_{\mathcal{Q}} \, R_{\mathbf{A}}^{(\mathrm{F})}(z)  \\
		=  -  \,2 \big(R_{\mathbf{A}}^{(\mathrm{F})}(z)\big)^2 (-i \nabla) \cdot (- i \nabla \chi_{\mathcal{Q}}) R_{\mathbf{A}}^{(\mathrm{F})}(z) 
			-\, \big(R_{\mathbf{A}}^{(\mathrm{F})}(z)\big)^2 \big[2 \mathbf{A} \cdot (- i \nabla \chi_{\mathcal{Q}}) + \Delta \chi_{\mathcal{Q}} \big] R_{\mathbf{A}}^{(\mathrm{F})}(z)	\\
			+ R_{\mathbf{A}}^{(\mathrm{F})}(z)\,\chi_{\mathcal{Q}} \, R_{\mathbf{A}}^{(\mathrm{F})}(z)	  \\
		= - \,4 \sum_{j =1}^2 \lf\{ \big(R_{\mathbf{A}}^{(\mathrm{F})}(z) \big)^2 \big[ 2  (-i \partial_j \mathbf{A}) \cdot (-i \nabla)  	
		+ \big(- i \partial_j \lf|\mathbf{A}\ri|^2\big) \big] R_{\mathbf{A}}^{(\mathrm{F})}(z) (-i \nabla \mathbf{A}_j) \cdot R_{\mathbf{A}}^{(\mathrm{F})}(z) (- i \nabla \chi_{\mathcal{Q}}) R_{\mathbf{A}}^{(\mathrm{F})}(z)	\ri. \\
		+ \,2\, R_{\mathbf{A}}^{(\mathrm{F})}(z) \big[ 2 (-i\partial_j) (-i \nabla \mathbf{A}_j)
		+ \big(- i \nabla \lf|\mathbf{A}\ri|^2\big)\big] \cdot \big(R_{\mathbf{A}}^{(\mathrm{F})}(z) \big)^2 (- i \nabla \chi_{\mathcal{Q}}) R_{\mathbf{A}}^{(\mathrm{F})}(z) \\
		\lf. -\,4\, R_{\mathbf{A}}^{(\mathrm{F})}(z) (-i\partial_j) R_{\mathbf{A}}^{(\mathrm{F})}(z)\, (-i \nabla \mathbf{A}_j) \cdot R_{\mathbf{A}}^{(\mathrm{F})}(z) (- i \nabla \chi_{\mathcal{Q}}) R_{\mathbf{A}}^{(\mathrm{F})}(z) \ri\}	\\		
			-\,2 \big(R_{\mathbf{A}}^{(\mathrm{F})}(z)\big)^2 \big(- i \nabla \lf|\mathbf{A}\ri|^2 \big) \cdot\, R_{\mathbf{A}}^{(\mathrm{F})}(z) (- i \nabla \chi_{\mathcal{Q}}) R_{\mathbf{A}}^{(\mathrm{F})}(z)   - \,2\, (-i \nabla) \cdot \big(R_{\mathbf{A}}^{(\mathrm{F})}(z)\big)^2 (- i \nabla \chi_{\mathcal{Q}}) R_{\mathbf{A}}^{(\mathrm{F})}(z) \\
			-\, \big(R_{\mathbf{A}}^{(\mathrm{F})}(z)\big)^2 \big[2 \mathbf{A} \cdot (- i \nabla \chi_{\mathcal{Q}}) + \Delta \chi_{\mathcal{Q}}  \big] R_{\mathbf{A}}^{(\mathrm{F})}(z)	
			+ R_{\mathbf{A}}^{(\mathrm{F})}(z)\,\chi_{\mathcal{Q}} \, R_{\mathbf{A}}^{(\mathrm{F})}(z)\,.
	}
Given that $\mathbf{A}$, $\chi$ and all their derivatives are uniformly bounded in $\mathbb{R}^2$, they all define bounded operators. Furthermore, since $\dom\big(H_{\mathbf{A}}^{(\mathrm{F})}\big) = H^2(\mathbb{R}^2)$ (see \eqref{eq: HAdom}), we certainly have $(-i \nabla) R_{\mathbf{A}}^{(\mathrm{F})}(z) \in \mathcal{B}(L^2(\mathbb{R}^2)) $. Taking this into account, it can be checked by direct inspection that each addendum in the last expression of \eqref{eq: seqid} is indeed a product of bounded operators including at least one term of the form
	\begin{equation*}
		R_{\mathbf{A}}^{(\mathrm{F})}(z)\,\mathcal{X}\, R_{\mathbf{A}}^{(\mathrm{F})}(z) \,, \qquad \mbox{for some $\mathcal{X} \in C^{\infty}_{\mathrm{c}}(\mathbb{R}^2)$}\,.
	\end{equation*}
We claim that any such term belongs to $ \mathcal{L}^1 $. As a matter of fact, in view of \eqref{eq: RAR0}, it suffices to prove that
	\begin{equation}\label{eq: R0XR0}
		R_{0}(-\lambda^2)\,\mathcal{X}\, R_{0}(-\lambda^2) = R_{0}(-\lambda^2)\,\one_{\supp \mathcal{X}}\,\mathcal{X}\, \one_{\supp \mathcal{X}} R_{0}(-\lambda^2) \in \mathcal{L}^1 , 
	\end{equation}
for some $\lambda > 0$ large enough. In this connection, we notice that
	\begin{equation*}
		\lf\| R_{0}(-\lambda^2)\,\one_{\supp \mathcal{X}} \ri\|_{\mathcal{L}^2}^2 \!
		\leqslant \lf\| R_0 \ri\|_2^2\, \lf| \supp \mathcal{X} \ri| < + \infty\,,
	\end{equation*}
	with a completely analogous bound for the adjoint operator. Hence, since $ \mathcal{X} $ is bounded, we conclude that \eqref{eq: R0XR0} holds by \cite[Thm. VI.22]{RS81}, which in turn implies \eqref{eq: RAchiS1}, whence the thesis.
\end{proof}

We are now able to prove the characterization of the spectrum of $ H_N^{(\mathrm{F})} $ and of its scattering w.r.t. the free Laplacian.

\begin{proof}[Proof of \cref{pro: spectrum F}]
		Consider any auxiliary operator $H_{\mathbf{A}}^{(\mathrm{F})}$, with $\mathbf{A}$ fulfilling \eqref{eq:AAasy} and \eqref{eq:AACou}. By \cref{lemma: RAR0compact} and \cref{lemma: RNRAS2} we readily infer that
			\begin{equation*}
				R_{N}^{(\mathrm{F})}(z) - R_{0}(z) = R_{N}^{(\mathrm{F})}(z) - R^{(\mathrm{F})}_{\mathbf{A}}(z) + R^{(\mathrm{F})}_{\mathbf{A}}(z) -  R_{0}(z) \in \mathcal{L}^{\infty}\,, \qquad \forall\, z \in \mathbb{C} \setminus [0,+\infty)\,.
			\end{equation*}
		Then, recalling that $H_{N}^{(\mathrm{F})}$ is non-negative, the thesis follows by Weyl's criterion.
		\end{proof}
	
\begin{proof}[Proof of \cref{thm: OmHNH0}]
Let again $H_{\mathbf{A}}^{(\mathrm{F})}$ be any auxiliary operator, with $\mathbf{A}$ fulfilling \eqref{eq:AAasy} and \eqref{eq:AACou}. By \cref{lemma: RN2RA2S1} and the invariance principle \cite[Corollary 3 (vol. III, p. 30)]{RS81}, we deduce the existence and completeness of the wave operators
	\begin{equation*}
		\Omega_{\pm}\lf(H_{N}^{(\mathrm{F})},H_{\mathbf{A}}^{(\mathrm{F})}\ri) := \slim_{t \to \mp \infty} e^{i t H_{N}^{(\mathrm{F})}} e^{-i t H_{\mathbf{A}}^{(\mathrm{F})}} .
	\end{equation*}
Then, on account of \cref{prop: RAR0scatt}, the thesis follows by the chain rule \cite[Proposition 2 (vol. III, p. 18)]{RS81}.
\end{proof}

\subsection{Singular perturbations: quadratic forms}\label{sec:SingPert}
For later reference, we first observe that the sesquilinear form defined by polarization starting from $Q_{N}^{(B)}$ w.r.t. the decompositions
\begin{equation*}
	\psi_j = \phi_{j,\lambda} +  \tx\sum_{n} \, e^{-i \mathbf{S}_{n}(\mathbf{x}_n) \cdot (\mathbf{x} - \mathbf{x}_n)}\, \xi_{n}\, \tx\sum_{\ell_n} \, q^{(\ell_n)}_{j,n}\, G^{(\ell_n)}_{\lambda,n} \qquad (j = 1,2 )
\end{equation*}
is given by
	\bml{
		Q^{(B)}_{N}[\psi_1,\psi_2] 
		= Q^{(\mathrm{F})}_{N}[\phi_{1,\lambda},\phi_{2,\lambda}] 
			- \lambda^2\, \langle \psi_1\,|\,\psi_2\rangle + \lambda^2\,\langle \phi_{1,\lambda}\,|\,\phi_{2,\lambda}\rangle \\
			+ \tx\sum_{n,\ell_n} \big({q_{1,n}^{(\ell_n)}}\big)^* \bigg( 2 \braketl{\,e^{-i \mathbf{S}_{n}(\mathbf{x}_n) \cdot (\mathbf{x} - \mathbf{x}_n)} \big(\check{\mathbf{S}}_{n}\,\xi_{n} - i \nabla \xi_{n} \big) G^{(\ell_n)}_{\lambda,n}}{ \lf(-i \nabla \!+\! \mathbf{A}_{n}\ri)\phi_{2,\lambda}}\\
				+ \braketl{e^{-i \mathbf{S}_{n}(\mathbf{x}_n) \cdot (\mathbf{x} - \mathbf{x}_n)}\lf( \check{\mathbf{S}}_{n}^2\, \xi_{n} + 2\, \mathbf{S}_n(\mathbf{x}_n) \cdot (\check{\mathbf{S}}_n \xi_{n} - i \nabla \xi_{n}) + \Delta \xi_{n} \ri)\! G^{(\ell_n)}_{\lambda,n}}{ \phi_{2,\lambda}} \bigg) \\
			+ \tx\sum_{n,\ell_n} q^{(\ell_n)}_{2,n} \bigg( 2 \braketr{ \lf(-i \nabla \!+\! \mathbf{A}_{n}\ri)\phi_{1,\lambda}}{\,e^{-i \mathbf{S}_{n}(\mathbf{x}_n) \cdot (\mathbf{x} - \mathbf{x}_n)} \big(\check{\mathbf{S}}_{n}\,\xi_{n} - i \nabla \xi_{n} \big) G^{(\ell_n)}_{\lambda,n}}\\
				+ \braketr{ \phi_{1,\lambda}}{e^{-i \mathbf{S}_{n}(\mathbf{x}_n) \cdot (\mathbf{x} - \mathbf{x}_n)}\lf( \check{\mathbf{S}}_{n}^2\, \xi_{n} + 2\, \mathbf{S}_n(\mathbf{x}_n) \cdot (\check{\mathbf{S}}_n \xi_{n} - i \nabla \xi_{n}) + \Delta \xi_{n} \ri)\! G^{(\ell_n)}_{\lambda,n}} \bigg) \\
			+ \tx\sum_{m, n, \ell_m, \ell_n^{\prime}} \big({q^{(\ell_m)}_{1,m}}\big)^* q^{(\ell_n^{\prime})}_{2,n} \left[B^{(\ell_m \ell_n^{\prime})}_{m\,n} + \delta_{m n} \lf( \tfrac{\pi\,\lambda^{2|\ell_m + \alpha_m|}}{2 \sin(\pi \alpha_m)}\,\delta_{\ell_m \ell_n^{\prime}} + \Xi^{(\ell_m \ell_n^{\prime})}_{n}(\lambda) \ri) \right]. \label{eq: defQ2}
	}

\begin{proof}[Proof of \cref{thm: Qbeta}]	\mbox{}

{\it i)} Let us first prove that the quadratic form \eqref{eq: Qbeta} is independent of $ \lambda > 0 $. To this avail, we fix $\lambda_1 \neq \lambda_2$ and consider, for any $\psi \in \dom\big[Q^{(B)}_{N}\big]$, the alternative representations
	\begin{equation*}
		\psi = \phi_{\lambda_j} \!+  \sum_{n,\ell_n}  e^{-i \mathbf{S}_{n}(\mathbf{x}_n) \cdot (\mathbf{x} - \mathbf{x}_n)}\,\xi_{n} q^{(\ell_n)}_{n} G^{(\ell_n)}_{\lambda_j,n} \qquad (j =1,2 )\,.
	\end{equation*} 
	Since $\xi_{n}(G^{(\ell_n)}_{\lambda_2,n} - G^{(\ell_n)}_{\lambda_1,n}) \in \dom\big[Q_{N}^{(\mathrm{F})}\big]$, for all $ n \in \{1,\dots,N\}$, we deduce that
	\begin{equation*}
		\phi_{\lambda_1} = \phi_{\lambda_2} \!+ \mbox{$\sum_{n,\ell_n}$} e^{-i \mathbf{S}_{n}(\mathbf{x}_n) \cdot (\mathbf{x} - \mathbf{x}_n)}\,\xi_{n} q^{(\ell_n)}_{n} \big( G^{(\ell_n)}_{\lambda_2,n} - G^{(\ell_n)}_{\lambda_1,n}\big)
	\end{equation*}
	and thus the ``charges'' $q^{(\ell_n)}_{n}$ are independent of $\lambda$. Furthermore, by using \eqref{eq: greeneq}, the identity $\mathbf{A}_n \cdot (\mathbf{x} - \mathbf{x}_n) = 0$ and the fact that $\xi_{n}$ is real-valued, we obtain, denoting for short by $ \psi_j $, $ j = 1,2 $, the two decompositions,
	\bml{\label{eq: l1l2}
		Q^{(B)}_{N}\lf[ \psi_1\ri] - Q^{(B)}_{N}\lf[ \psi_2 \ri] 
			= \sum_{n,\ell_n,\ell_n^{\prime}} \big({q^{(\ell_n)}_{n}}\big)^* q^{(\ell_n^{\prime})}_{n} \bigg[	
				\tfrac{\pi}{2\sin(\pi \alpha_n) }\,\big(\lambda_1^{2|\ell_n+\alpha_n|} - \lambda_2^{2|\ell_n+\alpha_n|}\big)\, \delta_{\ell_n \ell_n^{\prime}}  \\
				+ \lim_{r \to 0^{+}} \int_{\partial B_r(\mathbf{x}_n)}\hspace{-0.4cm} \diff \Sigma_n\; \xi_{n}^2 \left({\big(\partial_r G^{(\ell_n)}_{\lambda_1,n}\big)}^{\!*} G^{(\ell_n^{\prime})}_{\lambda_2,n} - \big({G^{(\ell_m)}_{\lambda_1,n}}\big)^{*} \big(\partial_r G^{(\ell_n^{\prime})}_{\lambda_2,n}\big) \right) \\
				+ \braketr{G^{(\ell_n)}_{\lambda_1,n}}{\xi_{n}( - i \nabla \xi_{n}) \cdot (-i \nabla + \mathbf{A}_n)\, G^{(\ell_n^{\prime})}_{\lambda_1,n}} - \braketr{\xi_{n} (- i \nabla \xi_{n}) \cdot (-i \nabla + \mathbf{A}_{n})\,G^{(\ell_n)}_{\lambda_1,n}}{ G^{(\ell_n^{\prime})}_{\lambda_1,n}} \\
				+ \braketr{G^{(\ell_n)}_{\lambda_2,n}}{\xi_{n}( - i \nabla \xi_{n}) \cdot (-i \nabla + \mathbf{A}_n)\, G^{(\ell_n^{\prime})}_{\lambda_2,n}} - \braketr{\xi_{n} (- i \nabla \xi_{n}) \cdot (-i \nabla + \mathbf{A}_{n})\,G^{(\ell_n)}_{\lambda_2,n}}{ G^{(\ell_n^{\prime})}_{\lambda_2,n}}
			\bigg]\,,
	}
where we dropped all the vanishing boundary terms. Since  $\xi_{n} = 1$ in an open neighborhood of $\mathbf{x}_n$ and thanks to \eqref{eq: checkSnLip}, \eqref{eq: limdomQF} and the asymptotic expansion \eqref{eq: G2asy0}, we indeed have 
	\bmln{
		\left|\int_{\partial B_r(\mathbf{x}_n)}\hspace{-0.4cm} \diff\Sigma_n\; \phi_{\lambda_2}^* \;{e^{-i \mathbf{S}_{n}(\mathbf{x}_n) \cdot (\mathbf{x} - \mathbf{x}_n)}\, \Big(i \big(\check{\mathbf{S}}_n \!\cdot\! \hat{\mathbf{r}}\big)\xi_{n} + (\partial_r \xi_{n}) - \xi_{n} \partial_r\Big)\Big( G^{(\ell_n^{\prime})}_{\lambda_2,n} - G^{(\ell_n^{\prime})}_{\lambda_1,n}\Big)} \right| \\
		\leqslant 2 \pi r\, \sqrt{\avg{\left|\phi_{\lambda_2}\right|^2}_{n}}\left( \sqrt{\avg{\big|\check{\mathbf{S}}_n \big|^2\, \Big| G^{(\ell_n^{\prime})}_{\lambda_2,n} - G^{(\ell_n^{\prime})}_{\lambda_1,n}\Big|^2}_{n}} + \sqrt{\avg{\Big|\partial_r\Big( G^{(\ell_n^{\prime})}_{\lambda_2,n} - G^{(\ell_n^{\prime})}_{\lambda_1,n}\Big)\Big|^2}_{n}}\right) \\
		\leqslant C \, \sqrt{\avg{\left|\phi_{\lambda_2}\right|^2}_{n}}\left(r^{2+|\ell_n^{\prime} + \alpha_n|} + r^{|\ell_n^{\prime} + \alpha_n|}\right) \xrightarrow[r \to 0^+]{} 0\,.
	}
By similar arguments we deduce the following relations:
	\begin{equation*}
		\bigg|\int_{\partial B_r(\mathbf{x}_n)}\hspace{-0.5cm} \diff\Sigma_n \; \xi_{n}^2 \big(\check{\mathbf{S}}_n \!\cdot \hat{\mathbf{r}}\big)\; {\big( G^{(\ell_n)}_{\lambda_2,n}\! -\! G^{(\ell_n)}_{\lambda_1,n}\big)}^{\!*} \, \big( G^{(\ell_n^{\prime})}_{\lambda_2,n} \!+ G^{(\ell_n^{\prime})}_{\lambda_1,n}\big) \bigg| 
		\leqslant C\, r^{2+|\ell_n+\alpha_n|\,-\,|\ell_n^{\prime} + \alpha_n|} \xrightarrow[r \to 0^+]{} 0\,;
	\end{equation*}
	\begin{equation*}
		\left|\int_{\partial B_r(\mathbf{x}_n)}\hspace{-0.4cm} \diff\mathbf{x}\, \big(\xi_{n} \partial_r \xi_{n}\big)\! \left( {\big( G^{(\ell_n)}_{\lambda_2,n} - G^{(\ell_n)}_{\lambda_1,n}\big)}^{\!*} \, G^{(\ell_n^{\prime})}_{\lambda_1,n} + \big({G^{(\ell_n)}_{\lambda_1,n}}\big)^*\, \big( G^{(\ell_n^{\prime})}_{\lambda_2,n} - G^{(\ell_n^{\prime})}_{\lambda_1,n} \big) \right) \right| \xrightarrow[r \to 0^+]{} 0\,;
	\end{equation*}
	\begin{equation*}
		\bigg|\int_{\partial B_r(\mathbf{x}_n)}\hspace{-0.5cm} d\Sigma_n\; \xi_{n}^2\, {\big( G^{(\ell_n)}_{\lambda_2,n} - G^{(\ell_n)}_{\lambda_1,n}\big)}^{\!*} \, \partial_r \big( G^{(\ell_n^{\prime})}_{\lambda_2,n} - G^{(\ell_n^{\prime})}_{\lambda_1,n}\big) \bigg| 
			\leqslant C\,r^{|\ell_n+\alpha_n|\,+\,|\ell_n^{\prime} + \alpha_n|} \xrightarrow[r \to 0^+]{} 0\,.
	\end{equation*}
On the other hand, using again the asymptotic expansion \eqref{eq: G2asy0} for $G^{(\ell_n)}_{\lambda,n}$, we get
	\bmln{
		\int_{\partial B_r(\mathbf{x}_n)}\hspace{-0.4cm} \diff\Sigma_n\; \xi_{n}^2 \left({\big(\partial_r G^{(\ell_n)}_{\lambda_1,n}\big)}^{\!*} G^{(\ell_n^{\prime})}_{\lambda_2,n} - \big({G^{(\ell_n)}_{\lambda_1,n}}\big)^* \big(\partial_r G^{(\ell_n^{\prime})}_{\lambda_2,n}\big) \right) \\
		= \delta_{\ell_n \ell_n^{\prime}}\! \left[
			\tfrac{1}{2}\,|\ell_n \!+\! \alpha_n|\,\Gamma\big(|\ell_n \!+\! \alpha_n|\big) \Gamma\big(\!-|\ell_n \!+ \!\alpha_n|\big) \big(\lambda_1^{2 |\ell_n + \alpha_n|} - \lambda_2^{2 |\ell_n + \alpha_n|} \big)
			+ \mathcal{O}\!\left(r^{2 \min\{|\ell_n + \alpha_n|\,,\, 1\,-\,|\ell_n + \alpha_n|\}}\right)
			\right] .
}
Applying the identities for the Euler gamma function in \cite[Eqs. 5.5.1 and 5.5.3]{OLBC10}, we deduce that
	\begin{equation*}
		\tfrac{\pi(\lambda_1^{2|\ell_n+\alpha_n|} - \lambda_2^{2|\ell_n+\alpha_n|})}{2\sin(\pi \alpha_n) }\, \delta_{\ell_n \ell_n^{\prime}}  \\
		+ \lim_{r \to 0^{+}} \int_{\partial B_r(\mathbf{x}_n)}\hspace{-0.4cm} \diff \Sigma_n\; \xi_{n}^2 \left({\big(\partial_r G^{(\ell_n)}_{\lambda_1,n}\big)}^{\!*} G^{(\ell_n^{\prime})}_{\lambda_2,n} - \big({G^{(\ell_n)}_{\lambda_1,n}}\big)^* \big(\partial_r G^{(\ell_n^{\prime})}_{\lambda_2,n}\big) \right)
		= 0\,.
	\end{equation*}
In view of the above arguments, the identity \eqref{eq: l1l2} reduces to
	\bmln{
		Q^{(B)}_{N}\lf[ \psi_1 \ri] - Q^{(B)}_{N}\lf[ \psi_2 \ri]  
		= - 2i \,\Re\! \sum_{n,\ell_n,\ell_n^{\prime}}\! \big({q^{(\ell_n)}_{n}}\big)^* q^{(\ell_n^{\prime})}_{n} \bigg(\!
			\braketr{G^{(\ell_n)}_{\lambda_1,n}}{(\xi_{n}\nabla \xi_{n}) \cdot (-i \nabla \!+\! \mathbf{A}_n) G^{(\ell_n^{\prime})}_{\lambda_1,n}} \\
			\,+\, \braketr{G^{(\ell_n)}_{\lambda_2,n}}{(\xi_{n} \nabla \xi_{n}) \cdot (-i \nabla \!+\! \mathbf{A}_n) G^{(\ell_n^{\prime})}_{\lambda_2,n}}
			\!\bigg)\,,
	}
where we used that the functions $\xi_{n}$ are real-valued. Since the quadratic form $Q^{(B)}_{N}$ is real-valued as well, the latter relation can be fulfilled only if 
	\begin{equation*} 
		Q^{(B)}_{N}\lf[ \psi_1 \ri] = Q^{(B)}_{N}\lf[ \psi_2 \ri],
	\end{equation*}
	which ultimately proves that the quadratic form is independent of the spectral parameter $\lambda$.

Next, we show that the form does not depend on the cut-off functions. Let $ \lf(\xi_{1,n}\ri)_{n \in \{1, \ldots, N\}}$, $ \lf(\xi_{2,n} \ri)_{n \in \{1, \ldots, N\}} $ be two families of such functions. Correspondingly, for any $\psi \in \dom\big[Q^{(B)}_{N}\big]$, we have two alternative representations which we denote again for short $ \psi_j $, $ j = 1,2 $. Notice that $(\xi_{2,n} - \xi_{1,n}) G^{(\ell_n)}_{\lambda,n} \in \dom[Q_{N}^{(\mathrm{F})}]$, for all $n \in \{1,\dots,N\}$ and $\ell_{n}\in \{0,-1\}$, so that 
\begin{equation*}
	\phi_{1,\lambda} = \phi_{2,\lambda} + \tx\sum_{n,\ell_n} e^{-i \mathbf{S}_{n}(\mathbf{x}_n) \cdot (\mathbf{x} - \mathbf{x}_n)}\,\big(\chi_{2,n} - \chi_{1,n} \big) q^{(\ell_n)}_{n} G^{(\ell_n)}_{\lambda,n}\,.
\end{equation*}
Using \eqref{eq: greeneq}, by direct computations we infer
	\bmln{
		Q^{(B)}_{N}\!\lf[ \psi_1 \ri] - Q^{(B)}_{N}\!\lf[ \psi_2 \ri] \\ = - 2i\, \Re\! \tx\sum_{n,\ell_n,\ell_n^{\prime}}\! \big({q^{(\ell_n)}_{n}}\big)^* q^{(\ell_n^{\prime})}_{n} \braketr{\big(\chi_{2,n} \!-\! \chi_{1,n} \big) G^{(\ell_n)}_{\lambda,n}}{\big( \nabla\chi_{2,n} \! - \! \nabla\chi_{1,n}  \big) \cdot (- i \nabla + \mathbf{A}_n) G^{(\ell_n^{\prime})}_{\lambda,n}},
	}
	where all the boundary terms coming from the integration by parts vanish since $\xi_{1,n} = \xi_{2,n} = 1$ in an open neighborhood of $\mathbf{x}_n$. Recalling again that the quadratic form is real, it appears that the latter relation can be fulfilled if and only if $ Q^{(B)}_{N}\!\lf[ \psi_1 \ri] = Q^{(B)}_{N}\!\lf[ \psi_2 \ri] $, which proves that the form is independent of the family of cut-off functions.

{\it ii).} To begin with, let us show that $Q^{(B)}_{N}$ is bounded from below.
In the sequel we set $ \one_{n} : = \one_{\supp \xi_{n}} $ and pick $\lambda > 0$ large enough. By Cauchy inequality, for any $ \epsilon \in (0,1) $ we infer
	\bml{\label{eq: boundedness proof 1}
Q_{N}^{(\mathrm{F})}[\phi_{\lambda}] 
		\geqslant \tx{1 \over 2}\, Q_{N}^{(\mathrm{F})}[\phi_{\lambda}] 
+ \tx{1 \over 2}\, \sum_{n = 1}^{N} \lf \| \one_{n} \lf(-i\nabla + \mathbf{A}_{n} + \mathbf{S}_{n} \ri)\phi_{\lambda} \ri\|_{2}^{2} \\
		\geqslant \tx{1 \over 2}\, Q_{N}^{(\mathrm{F})}[\phi_{\lambda}] 
			+ \tx\frac{1 - \epsilon}{2}\, \sum_{n = 1}^{N} \lf \| \one_{n} \lf(-i\nabla + \mathbf{A}_{n} \ri)\phi_{\lambda} \ri\|_{2}^{2}
			- \tx\frac{1 - \epsilon}{2\epsilon}\, \sum_{n = 1}^{N} \lf\| \one_{n} \mathbf{S}_{n} \ri\|_{\infty}^2\,\|\phi_{\lambda}\|_{2}^{2}\,.
	}
Furthermore, for any $ \epsilon_1, \epsilon_2 \in (0,1) $ we have
	\bml{\label{eq: boundedness proof 2}
	 	\Re \lf[ q^{(\ell_n)}_{n} \braketr{ \lf(-i \nabla + \mathbf{A}_{n} \ri)\phi_{\lambda}}{\,\,e^{-i \mathbf{S}_{n}(\mathbf{x}_n) \cdot (\mathbf{x} - \mathbf{x}_n)} \big(\check{\mathbf{S}}_{n}\,\xi_{n} - i \nabla \xi_{n} \big) G^{(\ell_n)}_{\lambda,n}} \ri] \\
		\geqslant -\, \tx{\epsilon_1 \over 2}\, \big\| \one_{n}  (-i \nabla + \mathbf{A}_{n}) \phi_{\lambda} \big\|_{2}^{2}
			- \tx{1 \over 2\epsilon_1} \lf|q_{n}^{(\ell_n)}\ri|^2\, \lf\| \big( \check{\mathbf{S}}_{n} \,\xi_{n} - i\nabla \xi_{n} \big) G^{(\ell_n)}_{\lambda,n} \ri\|_{2}^{2} \\
		\geqslant -\, \tx{\epsilon_1 \over 2}\, \lf\| \one_{n} \lf(-i \nabla + \mathbf{A}_{n}\ri) \phi_{\lambda} \ri\|_{2}^{2}
			- \tx{C \over \epsilon_1} \lf|q^{(\ell_n)}_{n}\ri|^2\, \lambda^{2|\ell_n + \alpha_{n}|-2}\,;
	}
	\bml{\label{eq: boundedness proof 3}
		\Re \lf[ q^{(\ell_n)}_{n} \braketr{ \phi_{\lambda}}{e^{-i \mathbf{S}_{n}(\mathbf{x}_n) \cdot (\mathbf{x} - \mathbf{x}_n)}\lf( \check{\mathbf{S}}_{n}^2\, \xi_{n} + 2 \mathbf{S}_n(\mathbf{x}_n) \cdot (\check{\mathbf{S}}_n \xi_{n} - i \nabla \xi_{n}) + \Delta \xi_{n} \ri)\! G^{(\ell_n)}_{\lambda,n}} \ri] \\
			\geqslant - \,\tx{\epsilon_2 \over 2}\,\| \one_{n} \phi_{\lambda}\|_{2}^{2}
- \tx{C \over \epsilon_2} \lf|q^{(\ell_n)}_{n}\ri|^2\, \lambda^{2|\ell_n + \alpha_{n}|-2}\,.
	}
Finally, let us consider the term $\Xi_{\alpha,n}(\lambda)$ defined in \eqref{eq: defXi}. Notice that \eqref{eq: checkSnLip} ensues $\check{\mathbf{S}}_{n}\!\cdot\!\mathbf{A}_{n}\,\xi_{n}^2 \in L^{\infty}(\mathbb{R}^2)$ and $\xi_{n} \check{\mathbf{S}}_n\,|\mathbf{x} - \mathbf{x}_{n}|^{-1} \in L^{\infty}(\mathbb{R}^2)$. Taking this into account and building on the basic inequality 
	\begin{equation*}
		\lf||\mathbf{x} - \mathbf{x}_{n}|\, (-i \nabla) \big(\xi_{n}  G^{(\ell_n)}_{\lambda,n}\big)\ri| \leqslant C \lf| G^{(\ell_n)}_{\lambda,n} \ri|,
	\end{equation*}
we get
	\beq\label{eq: boundedness proof 4}
		\lf|\Xi^{(\ell_n\ell_n^{\prime})}_{n}(\lambda)\ri|
		\leqslant C \lf\|G^{(\ell_n)}_{\lambda,n}\ri\|_{2} \lf\|G^{(\ell_n^{\prime})}_{\lambda,n} \ri\|_{2}	
		\leqslant C\, \lambda^{|\ell_n+\alpha_n| + |\ell_n^{\prime} + \alpha_n| - 2}\,.
	\eeq
For any $\lambda > 0$ sufficiently large, combining \eqref{eq: boundedness proof 1} -- \eqref{eq: boundedness proof 4}, we estimate
	\bmln{
		Q^{(B)}_{N}[\psi] 
		\geqslant \tx{1 \over 2}\, Q_{N}^{(\mathrm{F})}[\phi_{\lambda}] - \lambda^2 \lf\| \psi \ri\|_{2}^2 
			+ \tfrac{1 - \epsilon - 8\epsilon_1}{2}\, \tx\sum_{n } \lf \| \one_{n} \lf(-i\nabla + \mathbf{A}_{n} \ri)\phi_{\lambda} \ri\|_{2}^{2}
			+ \lf( \lambda^2 - C \tfrac{1 - \epsilon}{2\epsilon}  - 2\epsilon_2 \ri) \lf\| \phi_{\lambda} \ri\|_{2}^2 \\
			+ \lf[ \min_{n}\!\lf(\tx{\pi^2 \over \sin(\pi \alpha_n)}\ri) \lambda^{2 \min_{n,\ell_n} |\ell_n + \alpha_n|} - 2C \lf(1\!+ \tfrac{2}{\epsilon_1} \! +\tfrac{1}{\epsilon_2} \ri)\,\lambda^{-2(1-\max_{n,\ell_n} |\ell_n + \alpha_n|)} - \lf\|B \ri\|_{\infty} \ri] \lf| \mathbf{q} \ri|^2 ,
	}
where $ \lf\| B \ri\|_{\infty} := \max_{m,n,\ell_m \ell_n^{\prime}}\!  |B^{(\ell_m \ell_n^{\prime})}_{m\,n} |$.
Now, by the positivity of $Q_{N}^{(\mathrm{F})}$, if we take $  \epsilon, \epsilon_2 \in (0,1) $, $0 < \epsilon_1 < (1 - \epsilon)/8 $ and $\lambda > 0$ large enough (notice that $\max_{n,\ell_n} |\ell_n + \alpha_n| < 1$), the above relation implies that $Q^{(B)}_{N}[\psi] \geqslant - \lambda^2\, \|\psi\|_{2}^{2} $, {\it i.e.}, the form is bounded from below. More precisely, for any $ \lambda $ large enough, we deduce that there exists $c_{\lambda} > 0 $ such that
	\begin{equation*}
	Q^{(B)}_{N}[\psi] + \lambda^2\,\|\psi\|_{2}^{2} \geqslant c_{\lambda} \lf[Q_{N}^{(\mathrm{F})}[\phi_{\lambda}] +  \|\phi_{\lambda}\|_{2}^{2} +    \lf| \mathbf{q} \ri|^2 \ri] ,
	\end{equation*}
{\it i.e.}, the quadratic form is also coercive. This allows to infer that the form $Q^{(B)}_{N}$ is closed by classical arguments (see, {\it e.g.}, \cite{Te90} and \cite[Proof of Thm. 2.4]{CO18}).
\end{proof}

Hereafter we prove the first part of \cref{cor:domHbeta}, regarding the characterization of the domain and action of the self-adjoint operators $H_{N}^{(B)}$, $B \in M_{2N,\,\mathrm{Herm}}(\mathbb{C})$, corresponding to the quadratic forms $Q_{N}^{(B)}$.

\begin{proof}[Proof of \cref{cor:domHbeta} - part I]
Consider the sesquilinear form \eqref{eq: defQ2} associated to $Q_{N}^{(B)}$. Let us first assume $ \mathbf{q}_1 = \mathbf{0} $, {\it i.e.}, $\psi_{1} = \phi_{1,\lambda}$; then, upon integration by parts, we obtain
	\bmln{
		Q^{(B)}_{N}[\phi_{1,\lambda},\psi_2] 
		= \meanlrlr{\phi_{1,\lambda}}{H_{N}}{\phi_{2,\lambda}}
			+ \sum_{n,\ell_n} q^{(\ell_n)}_{2,n} \bigg[2 \braketr{\phi_{1,\lambda}}{\,e^{-i \mathbf{S}_{n}(\mathbf{x}_n) \cdot (\mathbf{x} - \mathbf{x}_n)} \big(\check{\mathbf{S}}_{n}\,\xi_{n} - i \nabla \xi_{n} \big) \lf(-i \nabla \!+\! \mathbf{A}_{n}\ri)G^{(\ell_n)}_{\lambda,n}} \\
			+  \braketr{\phi_{1,\lambda}}{\,e^{-i \mathbf{S}_{n}(\mathbf{x}_n) \cdot (\mathbf{x} - \mathbf{x}_n)} \lf( (\check{\mathbf{S}}_{n}^2 - \lambda^2)\, \xi_{n} + 2\check{\mathbf{S}}_{n} \!\cdot\! (-i \nabla \xi_{n}) - \Delta \xi_{n} \ri) G^{(\ell_n)}_{\lambda,n}} \bigg]\,, 
	}
	where again the boundary terms have been dropped, since by the asymptotic relations in Eq.\,\eqref{eq: limdomQF} of \cref{prop:QF} and arguments similar to those outlined in the proof of \cref{thm: Qbeta}, we get
	\begin{equation*}
		\lf| \int_{\partial B_r(\mathbf{x}_n)}\hspace{-0.6cm} \diff\Sigma_n\; \phi_{1,\lambda}^* \,\lf[ \partial_r\phi_{2,\lambda} \!+\! i \lf(\mathbf{S}_n \!\cdot \hat{\mathbf{r}} \ri)\phi_{2,\lambda} \ri] \ri|
		\leqslant C \,r\, \sqrt{\avg{|\phi_1|^2}_{n}} \lf[ \sqrt{\avg{\lf| \partial_r \phi_2 \ri|^2}_{n}}\!
		+  \,\sqrt{\avg{|\phi_2|^2}_{n}} \,\ri] \!\xrightarrow[r \to 0^+]{} \,0\,;
	\end{equation*}
	\begin{equation*}
		\lf| \int_{\partial B_r(\mathbf{x}_n)}\hspace{-0.4cm} \diff \Sigma_n\; \phi_{1,\lambda}^*\; e^{-i \mathbf{S}_{n}(\mathbf{x}_n) \cdot (\mathbf{x} - \mathbf{x}_n)} \lf[ i \lf(\check{\mathbf{S}}_{n}\!\cdot \hat{\mathbf{r}} \ri)\xi_{n} + (\partial_r\xi_{n}) \ri] G^{(\ell_n)}_{\lambda,n} \ri|
		\leqslant C\, r^{2-|\ell_n + \alpha_n|}\, \sqrt{\avg{|\phi_1|^2}_{n}} \xrightarrow[r \to 0^+]{} \;0\,.
	\end{equation*}
Thus, the condition $Q^{(B)}_{N}[\phi_{1,\lambda},\psi_2] = \langle \phi_{1,\lambda}\,|\,w\rangle$ for some $w = : H_{N}^{(B)}\, \psi_2 \!\in\! L^2(\mathbb{R}^2)$ can be fulfilled only if $ H_{N}\, \phi_{2,\lambda} \in L^2(\mathbb{R}^2)$, {\it i.e.}, $ \phi_{2,\lambda} \in \dom\big(H_{N}^{(\mathrm{F})}\big)$, and 
	\bml{\label{defw1}
		w = H_{N}\, \phi_{2,\lambda}
			+ \tx\sum_{n,\ell_n} q^{(\ell_n)}_{2,n}\, e^{-i \mathbf{S}_{n}(\mathbf{x}_n) \cdot (\mathbf{x} - \mathbf{x}_n)} \lf[  2\big(\check{\mathbf{S}}_{n}\,\xi_{n} - i \nabla \xi_{n} \big) \lf(-i \nabla \!+\! \mathbf{A}_{n}\ri)G^{(\ell_n)}_{\lambda,n} \ri. \\
				\lf. + \lf( (\check{\mathbf{S}}_{n}^2 - \lambda^2)\, \xi_{n} + 2\check{\mathbf{S}}_{n} \!\cdot\! (-i \nabla \xi_{n}) - \Delta \xi_{n} \ri) G^{(\ell_n)}_{\lambda,n} \ri] .
	}
	
Let now assume that $ \mathbf{q} \neq \mathbf{0} $. Integrating by parts and checking again that the boundary contributions vanish, we get
	\bmln{
		Q^{(B)}_{N}[\psi_1,\psi_2] 
		= Q^{(B)}_{N}[\phi_{1,\lambda},\psi_2] \\
		+ \tx\sum_{m,\ell_m} \big(q_{1,m}^{(\ell_m)}\big)^* \lf( 2 \braketl{\,e^{-i \mathbf{S}_{m}(\mathbf{x}_m) \cdot (\mathbf{x} - \mathbf{x}_m)} \big(\check{\mathbf{S}}_{m}\,\xi_{m} - i \nabla \xi_{m} \big) G^{(\ell_m)}_{\lambda,m}}{ \lf(-i \nabla \!+\! \mathbf{A}_{m}\ri)\phi_{2,\lambda}} \ri. \\
			\lf. + \braketl{e^{-i \mathbf{S}_{m}(\mathbf{x}_m) \cdot (\mathbf{x} - \mathbf{x}_m)}\!\lf( (\check{\mathbf{S}}_{m}^2\! - \lambda^2)\, \xi_{m} + 2\, \mathbf{S}_{m}(\mathbf{x}_m) \cdot (\check{\mathbf{S}}_m \xi_{m} - i \nabla \xi_{m}) + \Delta \xi_{m} \ri)\! G^{(\ell_m)}_{\lambda,m}}{ \phi_{2,\lambda}} \ri) \\
		+ \tx\sum_{m, n,\ell_m,\ell_n^{\prime}} \big({q^{(\ell_m)}_{1,m}}\big)^* q^{(\ell_n^{\prime})}_{2,n}\! \left[B^{(\ell_m \ell_n^{\prime})}_{m\,n} + \delta_{mn} \lf( \tfrac{\pi\,\lambda^{2|\ell_n + \alpha_n|}}{2\sin(\pi \alpha_n)}\,\delta_{\ell_n \ell_n^{\prime}}\! + \Xi^{(\ell_n \ell_n^{\prime})}_{n}(\lambda) - \lambda^2 \braketr{\xi_{n} G^{(\ell_n)}_{\lambda,n}}{\xi_{n} G^{(\ell_n^{\prime})}_{\lambda,n}}\! \ri) \right].
	}
Then, demanding $Q^{(B)}_{N}[\psi_1,\psi_2] = \langle \psi_1\,|\,w\rangle$,  with $w$ as in \eqref{defw1}, implies
	\bml{\label{bcid}
		\tx\sum_{m, n,\ell_m,\ell_n^{\prime}} \big({q^{(\ell_m)}_{1,m}}\big)^* q^{(\ell_n^{\prime})}_{2,n} \lf[ B^{(\ell_m \ell_n^{\prime})}_{m\,n} + \delta_{mn} \lf( \tfrac{\pi\,\lambda^{2|\ell_n + \alpha_n|}}{2\sin(\pi \alpha_n)}\,\delta_{\ell_m \ell_n^{\prime}} \ri.\ri. \\
			\lf. \lf. - \braketr{G^{(\ell_m)}_{\lambda,n}}{\big[ (-i\nabla) \!\cdot\! \big(\xi_{n} (-i\nabla \xi_{n})\big) \big] G^{(\ell_n^{\prime})}_{\lambda,n} } 
			- \braketr{G^{(\ell_m)}_{\lambda,n}}{2 \xi_{n} (-i\nabla \xi_{n}) \!\cdot\! (- i\nabla \!+\! \mathbf{A}_{n}) G^{(\ell_n^{\prime})}_{\lambda,n}} \ri) \ri] \\
		= \tx\sum_{m,\ell_m} \big({q^{(\ell_m)}_{1,m}}\big)^* \braketr{G^{(\ell_m)}_{\lambda,m}}{ \lf[ (-i \nabla \!+\! \mathbf{A}_m)^2 + \lambda^2 \ri] e^{i \mathbf{S}_{m}(\mathbf{x}_m) \cdot (\mathbf{x} - \mathbf{x}_m)}\, \xi_{m} \phi_{2,\lambda}} .
	}
Integrating by parts twice and using \eqref{eq: greeneq}, we obtain
	\bmln{
		\braketr{G^{(\ell_m)}_{\lambda,n}}{\big[ (-i\nabla) \!\cdot\! \big(\xi_{n} (-i\nabla \xi_{n})\big) \big] G^{(\ell_n^{\prime})}_{\lambda,n} } 
			+ \braketr{G^{(\ell_m)}_{\lambda,n}}{2 \xi_{n} (-i\nabla \xi_{n}) \!\cdot\! (- i\nabla \!+\! \mathbf{A}_{n}) G^{(\ell_n^{\prime})}_{\lambda,n}} \\
		= \tfrac{1}{2} \lf[ \braketr{G^{(\ell_m)}_{\lambda,n}}{(-i\nabla \xi_{n}^2) \!\cdot\! (- i\nabla \!+\! \mathbf{A}_{n}) G^{(\ell_n^{\prime})}_{\lambda,n}} + \braketr{(- i\nabla \!+\! \mathbf{A}_{n})G^{(\ell_m)}_{\lambda,n}}{(- i\nabla \xi_{n}^2) G^{(\ell_n^{\prime})}_{\lambda,n}} \ri] \\
			+ \lim_{r \to 0^{+}} \int_{\partial B_r(\mathbf{x}_n)}\hspace{-0.5cm} \diff \Sigma_n\; \big({G^{(\ell_m)}_{\lambda,n}}\big)^{*}  (\xi_{n} \partial_r \xi_{n})\, G^{(\ell_n^{\prime})}_{\lambda,n} 
			= 0\,,
	}
where we used that $\xi_{n} = 1$ in an open neighborhood of $\mathbf{x}_{n}$  and \eqref{eq: G2exp}.
Concerning the r.h.s. of  \eqref{bcid}, by similar arguments and recalling again that $\mathbf{A}_{n} \cdot (\mathbf{x} - \mathbf{x}_{n}) = 0$, we infer that
	\bmln{
		\braketr{G^{(\ell_m)}_{\lambda,m}}{ \lf[ (-i \nabla \!+\! \mathbf{A}_m)^2 + \lambda^2 \ri] e^{i \mathbf{S}_{m}(\mathbf{x}_m) \cdot (\mathbf{x} - \mathbf{x}_m)}\, \xi_{m} \phi_{2,\lambda}} \\
		= \lim_{r \to 0^{+}} \int_{\mathbb{R}^2 \setminus B_r(\mathbf{x}_{m})}\hspace{-0.5cm} \diff\mathbf{x}\; \big({G^{(\ell_m)}_{\lambda,m}}\big)^*\, \lf[ (-i \nabla \!+\! \mathbf{A}_m)^2 + \lambda^2 \ri] \lf(e^{i \mathbf{S}_{m}(\mathbf{x}_m) \cdot (\mathbf{x} - \mathbf{x}_m)}\, \xi_{m} \phi_{2,\lambda}\ri) \\
		= \lim_{r \to 0^{+}} \int_{\partial B_r(\mathbf{x}_m)}\hspace{-0.5cm} \diff\Sigma_m \lf( \big({G^{(\ell_m)}_{\lambda,m}}\big)^{*} \partial_r \phi_{2,\lambda} - \big({\partial_r G^{(\ell_m)}_{\lambda,m}}\big)^{*} \phi_{2,\lambda}\ri) ,
	}
where the boundary term vanishes since
	\bmln{
		\lf| \int_{\partial B_r(\mathbf{x}_m)}\hspace{-0.5cm} \diff\Sigma_m\; e^{i \mathbf{S}_{m}(\mathbf{x}_m) \cdot (\mathbf{x} - \mathbf{x}_m)}\, \big(G^{(\ell_m)}_{\lambda,m}\big)^*\, \Big(i (\mathbf{S}_{m}(\mathbf{x}_m) \!\cdot \hat{\mathbf{r}})\, \xi_{m} + \partial_r \xi_{m}\Big)\, \phi_{2,\lambda} \ri| \\
		\leqslant 2 \pi r\, \lf\|\mathbf{S}_{m} \ri\|_{\infty}\; \sqrt{ \avg{\lf|G^{(\ell_m)}_{\lambda,m}\ri|^2}_{m}}\; \sqrt{\avg{\left|\phi_{2,\lambda}\right|^2}_{m}}
		\leqslant C\, r^{1 - |\ell_m + \alpha_m|}\, \sqrt{\avg{\left|\phi_{2,\lambda}\right|^2}_{m}} \; \xrightarrow[r \to 0^+]{}\; 0\,.
	}
Using the asymptotic expansion \eqref{eq: G2asy0} of $G^{(\ell_n)}_{\lambda,n}$, we then obtain
	\bmln{
		\braketr{G^{(\ell_m)}_{\lambda,m}}{ \lf[ (-i \nabla \!+\! \mathbf{A}_m)^2 + \lambda^2 \ri] e^{i \mathbf{S}_{m}(\mathbf{x}_m) \cdot (\mathbf{x} - \mathbf{x}_m)}\, \xi_{m} \phi_{2,\lambda}} \\
		= \tfrac{\Gamma\lf(|\ell_m + \alpha_m|\ri)}{ 2^{1 - |\ell_m + \alpha_m|}} \lim_{r \to 0^{+}} \tfrac{1}{r^{|\ell_m + \alpha_m|}} \int_{0}^{2\pi}\!\! \diff\theta\; \tfrac{e^{-i\, \ell_m \theta}}{\sqrt{2\pi}} \Big( r\,\partial_r \phi_{2,\lambda} + |\ell_m + \alpha_m|\, \phi_{2,\lambda} \Big) \,.
	}
Summing up, the above results entail
	\bmln{
		\sum_{n,\ell_n'} q^{(\ell_n')}_{2,n} \lf[B^{(\ell_m \ell_n')}_{m\,n} + \tfrac{\pi^2\,\lambda^{2|\ell_n + \alpha_n|}}{\sin(\pi \alpha_n)}\,\delta_{mn}\,\delta_{\ell_m \ell_n^{\prime}} \ri] \\
		= \lim_{r \to 0^{+}} \tfrac{\pi\, 2^{|\ell_m + \alpha_m|} \,\Gamma\big(|\ell_m + \alpha_m|\big)}{r^{|\ell_m + \alpha_m|}}  \avg{ \Big( r\,\partial_r \phi_{2,\lambda} + |\ell_m + \alpha_m|\, \phi_{2,\lambda} \Big) \tfrac{e^{-i\, \ell_m \theta}}{ \sqrt{2\pi}}}_{m} ,
	}
which completes the derivation of the domain \eqref{domHbe1}. 
\end{proof}

\subsection{Singular perturbations: Kre{\u\i}n theory}\label{subsec: Krein}

Building on the results derived in the previous section and using the general theory described in \cite{Po01}, we henceforth present an equivalent characterization of the Hamiltonian operators $H_{N}^{(B)}$ in terms of singular perturbations of the Friedrichs Hamiltonian $H_{N}^{(\mathrm{F})}$. Fixing arbitrarily $z_0 \in \mathbb{C} \setminus [0,+\infty)$, we recall \eqref{eq: defLambda}:
	\begin{equation*}
		\Lambda(z) := \bm{\tau} \lf( \tfrac{1}{2} \lf( \mathcal{G}(z_0) + \mathcal{G}(\bar{z}_0) \ri) - \mathcal{G}(z) \ri) : \mathbb{C}^{2N} \to \mathbb{C}^{2N}\,.
	\end{equation*}
Abstract resolvent arguments grant the well-posedness of $\Lambda(z)$ and further ensure the following, for all $z,w \in \mathbb{C} \setminus [0,+\infty)$ (see \cite[Lemma 2.2]{Po01} and the related discussion):
	\begin{equation}
		\Lambda(z) - \Lambda(w) = (z-w) \,\breve{\mathcal{G}}(w)\, \mathcal{G}(z)\,, \qquad 
		\big(\Lambda(z)\big)^{*} = \Lambda(z^{*})\,. \label{eq: Lambdaprop}
	\end{equation}
Keeping in mind that $H_{N}^{(\mathrm{F})}$ is bounded from below, for any Hermitian matrix $\Theta \in M_{2N,\,\mathrm{Herm}}(\mathbb{C})$ there certainly exists a non-empty open set $\mathcal{Z} \subset \mathbb{C} \setminus [0,+\infty)$ such that $\Theta + \Lambda(z)$ is invertible for all $z \in \mathcal{Z}$. Taking this into account, from \cite[Theorem 2.1]{Po01} we deduce the following.

	\begin{proposition}[Singular perturbations of $R_N^{(\mathrm{F})}$]
		\label{thm: extRF}
		\mbox{}		\\
		For any $z \in \mathcal{Z}$, the bounded operator 
			\begin{equation*}
				R_N^{(\Theta)}(z) := R_N^{(\mathrm{F})}(z) + \mathcal{G}(z) \big[ \Theta + \Lambda(z) \big]^{-1} \breve{\mathcal{G}}(z)\,,
			\end{equation*}
		is the resolvent of the self-adjoint operator $H_{N}^{(\Theta)}$ coinciding with $H_{N}^{(\mathrm{F})}$ on $ \ker \bm{\tau}$ and defined by \eqref{eq:HNTheta}, {\it i.e.},
		\begin{gather*}
			\dom\big(H_{N}^{(\Theta)}\big) = \lf\{ \psi \in L^2(\mathbb{R}^2)\;\Big|\; \psi = \varphi_{z} + \mathcal{G}(z) \mathbf{q},\; \varphi_{z} \in \dom\big(H_{N}^{(\mathrm{F})}\big),\; \mathbf{q} \!\in \mathbb{C}^{2N}, \; \bm{\tau} \varphi_z = \big[\Theta + \Lambda(z)\big] \mathbf{q} \ri\} , \nonumber \\
		\big(H_{N}^{(\Theta)} - z\big) \psi = \big(H_{N}^{(\mathrm{F})} - z\big) \varphi_z\,.
		\end{gather*}
	\end{proposition}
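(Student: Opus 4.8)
The plan is to obtain Proposition~\ref{thm: extRF} as a direct application of the abstract theory of singular perturbations of self-adjoint operators developed in \cite[Theorem 2.1]{Po01}. In that framework one is given a self-adjoint operator $A$ on a Hilbert space $\mathcal{H}$ together with a map $\tau \in \mathcal{B}(\dom(A),\mathfrak{h})$, continuous with respect to the graph norm of $A$ and with $\ker\tau$ dense in $\mathcal{H}$; out of these data one builds the single layer operator $\mathcal{G}(z) = (\tau (A-\bar z)^{-1})^{*}$, the map $\Lambda(z)$ as in \eqref{eq: defLambda}, and one reads off both the Krein-type resolvent formula and the domain characterization \eqref{eq:HNTheta}. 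Here I would take $A = H_{N}^{(\mathrm{F})}$, $\mathcal{H} = L^2(\mathbb{R}^2)$, $\mathfrak{h} = \mathbb{C}^{2N}$ and $\tau = \bm{\tau}$. Since $H_{N}^{(\mathrm{F})}$ is self-adjoint and bounded from below by \cref{prop:QF} and \cref{pro: spectrum F}, and the defining properties \eqref{eq: Lambdaprop} of $\Lambda(z)$ hold by abstract resolvent identities, the only genuinely model-dependent facts to establish are: \emph{(i)} that $\bm{\tau}$ is continuous on $\dom\big(H_{N}^{(\mathrm{F})}\big)$ with respect to the graph norm, equivalently that $\breve{\mathcal{G}}(z) = \bm{\tau}\,(H_{N}^{(\mathrm{F})}-z)^{-1}$ is bounded from $L^2(\mathbb{R}^2)$ into $\mathbb{C}^{2N}$; and \emph{(ii)} that $\ker\bm{\tau}$ is dense in $L^2(\mathbb{R}^2)$ while $\bm{\tau}$ is surjective onto $\mathbb{C}^{2N}$.

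For \emph{(i)}, the key structural observation is that the combination $|\ell_n+\alpha_n|\,\psi + r\,\partial_r\psi$ entering \eqref{eq: deftau} annihilates the singular contribution $\sim r^{-|\ell_n+\alpha_n|}$ and isolates, in the $\ell_n$-th angular channel around $\mathbf{x}_n$, the coefficient of the regular power $r^{|\ell_n+\alpha_n|}$ of functions in the Friedrichs domain (see \cref{remark: HFbc}). To prove the bound $|\bm{\tau}\psi| \leqslant C\,\big(\|\psi\|_{2}+\|H_{N}^{(\mathrm{F})}\psi\|_{2}\big)$ I would write $\psi = R_{N}^{(\mathrm{F})}(z) f$ and exploit the explicit resolvent representation \eqref{eq: RNFried}--\eqref{eq: RnAT}: near each $\mathbf{x}_n$ the kernel of $R_{n}^{(\mathrm{F})}(z)$ is governed by the Bessel functions $I_{|\ell+\alpha|},K_{|\ell+\alpha|}$, whose small-argument asymptotics \eqref{eq: G2asy0} exhibit the coefficient of $r^{|\ell_n+\alpha_n|}$ as a bounded linear functional of $f$. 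The boundedness of the correction $[1+T_N(z)]^{-1}$ granted by \cref{lemma: TN small} then propagates the estimate from $R_{n}^{(\mathrm{F})}(z)$ to the full resolvent $R_{N}^{(\mathrm{F})}(z)$, giving the continuity of $\breve{\mathcal{G}}(z)$.

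For \emph{(ii)}, density of $\ker\bm{\tau}$ is immediate: every $\psi \in C^{\infty}_{\mathrm{c}}(\mathbb{R}^2\setminus\{\mathbf{x}_1,\dots,\mathbf{x}_N\})$ vanishes identically near each flux, hence $\bm{\tau}\psi = 0$, and this space is dense in $L^2(\mathbb{R}^2)$. Surjectivity of $\bm{\tau}$ is equivalent to injectivity of the single layer operator $\mathcal{G}(\bar z) = \breve{\mathcal{G}}(z)^{*}$, since $R_{N}^{(\mathrm{F})}(z)$ maps $L^2(\mathbb{R}^2)$ bijectively onto the domain and therefore $\mathrm{ran}\,\bm{\tau} = \mathrm{ran}\,\breve{\mathcal{G}}(z)$. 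For $\mathbf{q}\neq \mathbf{0}$ the element $\mathcal{G}(z)\mathbf{q}$ coincides, up to a term in $\dom\big(H_{N}^{(\mathrm{F})}\big)$, with the cut-off defect function $\sum_{n,\ell_n} q_{n}^{(\ell_n)}\, e^{-i \mathbf{S}_{n}(\mathbf{x}_n)\cdot(\mathbf{x}-\mathbf{x}_n)}\,\xi_{n}\,G^{(\ell_n)}_{\lambda,n}$, so it retains the singular asymptotics $\sim q_{n}^{(\ell_n)} r_n^{-|\ell_n+\alpha_n|}$ near each pole and cannot vanish; hence $\mathcal{G}(\bar z)$ is injective and $\breve{\mathcal{G}}(z)$ has full range.

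With \emph{(i)} and \emph{(ii)} secured, \cite[Theorem 2.1]{Po01} applies verbatim: because $H_{N}^{(\mathrm{F})}$ is bounded below, for every Hermitian $\Theta$ there is a non-empty open $\mathcal{Z}\subset\mathbb{C}\setminus[0,+\infty)$ on which $\Theta+\Lambda(z)$ is boundedly invertible, and on $\mathcal{Z}$ the operator $R_{N}^{(\Theta)}(z)$ is the resolvent of a self-adjoint $H_{N}^{(\Theta)}$ extending $H_{N}^{(\mathrm{F})}\big|_{\ker\bm{\tau}}$, with domain and action exactly as in \eqref{eq:HNTheta}. I expect the main obstacle to be step \emph{(i)}: the quantitative, graph-norm control of the singular-coefficient extraction performed by the resolvent, which forces one to track the Bessel asymptotics through the localization in \eqref{eq: RNFried}. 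Once this continuity is established, the remainder is essentially a translation of Posilicano's abstract statement into the present notation, identifying $\mathcal{G}$, $\breve{\mathcal{G}}$ and $\Lambda$ with the operators defined in \cref{rem: krein}.
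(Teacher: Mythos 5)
Your proposal takes essentially the same route as the paper: there the proposition is likewise obtained as a direct application of Posilicano's abstract result \cite[Theorem 2.1]{Po01}, after noting that \eqref{eq: Lambdaprop} follows from \cite[Lemma 2.2]{Po01} and that the lower-boundedness of $H_{N}^{(\mathrm{F})}$ guarantees a non-empty open set $\mathcal{Z}$ on which $\Theta + \Lambda(z)$ is invertible. You actually go further than the paper, which invokes the abstract theorem without explicitly checking the graph-norm continuity of $\bm{\tau}$ or the density/surjectivity hypotheses; your verifications of these points are correct and consistent with the computations the paper carries out later (in \cref{lemma: Gzexp} and \cref{lemma: FdomH}) for other purposes.
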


We next prove that the self-adjoint operators $H_{N}^{(\Theta)}$ defined in \cref{thm: extRF} are nothing but a reparametrization of the Hamiltonians $H_{N}^{(B)}$ discussed previously in \cref{cor:domHbeta}.

	\begin{proposition}
		\label{prop:BTheta}
		\mbox{}	\\
		For any $ \lambda_0 > 0 $ large enough, there is a one-to-one correspondence between the families of self-adjoint realizations $ \lf\{ H_{N}^{(\Theta)} \ri\}_{\Theta \in M_{2N,\,\mathrm{Herm}}(\mathbb{C})} $ and $ \lf\{ H_{N}^{(B)} \ri\}_{B \in M_{2N,\,\mathrm{Herm}}(\mathbb{C})} $, which is realized by
			\bml{\label{eq: TeB}
				\Theta_{mn}^{\ell_m \ell_n'} = B^{(\ell_m \ell_n')}_{m\,n} + \tfrac{\pi\,\lambda_0^{2|\ell_n + \alpha_n|}}{2 \sin(\pi \alpha_n)}\,\delta_{mn}\,\delta_{\ell_m \ell_n'} \\
					+ \braketr{e^{- i \mathbf{S}_{m}(\mathbf{x}_m) \cdot (\mathbf{x} - \mathbf{x}_m)} P_m\, G_{\lambda_0,m}^{(\ell_m)}}{ \big[1 \!+\! T^{*}_N\big(-\lambda_0^2\big)\big]^{-1}  e^{-i \mathbf{S}_{n}(\mathbf{x}_n) \cdot (\mathbf{x} - \mathbf{x}_n)} \xi_{n} G^{(\ell_n')}_{\lambda_0,n} } .
			}
	\end{proposition}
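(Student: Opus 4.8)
The plan is to prove $H_N^{(B)} = H_N^{(\Theta(B))}$ with $\Theta(B)$ as in \eqref{eq: TeB} by matching, on a common element $\psi$, the domain description \eqref{domHbe1} of \cref{cor:domHbeta} against the Kre{\u\i}n description \eqref{eq:HNTheta} of \cref{thm: extRF}, carrying the \emph{same} charge $\mathbf{q}\in\mathbb{C}^{2N}$ through both. I fix $z_0:=-\lambda_0^2$ with $\lambda_0$ so large that $1+T_N(-\lambda_0^2)$ is invertible (\cref{lemma: TN small}), and I choose this same real $z_0$ as the reference point in \eqref{eq: defLambda}; since $\mathcal{G}(z_0)=\mathcal{G}(\bar z_0)$ this forces $\Lambda(z_0)=0$, which is exactly why no $\Lambda$ term survives in \eqref{eq: TeB}. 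Writing $\psi=\phi_{\lambda_0}+\chi_{\lambda_0}\mathbf{q}$ and $\psi=\varphi_{z_0}+\mathcal{G}(z_0)\mathbf{q}$, the two boundary conditions read $(B+L)\mathbf{q}=\bm{\tau}\phi_{\lambda_0}$ and $\bm{\tau}\varphi_{z_0}=\Theta\mathbf{q}$. Since $\varphi_{z_0}=\phi_{\lambda_0}+(\chi_{\lambda_0}-\mathcal{G}(z_0))\mathbf{q}$, subtracting yields the operator identity
\[ \Theta = B + L + \bm{\tau}\big(\chi_{\lambda_0}-\mathcal{G}(z_0)\big), \]
so the whole task reduces to computing $\bm{\tau}(\chi_{\lambda_0}-\mathcal{G}(z_0))$ and checking it equals the third summand of \eqref{eq: TeB}.

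The first step is to verify that the two singular Ans\"atze carry the same charge and differ by a Friedrichs function. With $\eta:=\big(H_N^{(\mathrm{F})}+\lambda_0^2\big)\chi_{\lambda_0}\mathbf{q}=\sum_{n,\ell_n}q^{(\ell_n)}_n\,e^{-i\mathbf{S}_n(\mathbf{x}_n)\cdot(\mathbf{x}-\mathbf{x}_n)}P_nG^{(\ell_n)}_{\lambda_0,n}$, which lies in $L^2$ and is supported away from the poles because $\nabla\xi_n$ and $\check{\mathbf{S}}_n$ vanish near $\mathbf{x}_n$, I would show that $\chi_{\lambda_0}\mathbf{q}-R_N^{(\mathrm{F})}(z_0)\eta$ solves the deficiency equation $(H_N^{(\mathrm{F})})^{*}u=z_0u$, is square integrable, and carries the leading singularity $\sim r_n^{-|\ell_n+\alpha_n|}$ with coefficient $q^{(\ell_n)}_n$ fixed by \eqref{eq: G2asy0}. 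The prefactor $\pi\,2^{|\ell_n+\alpha_n|}\Gamma(|\ell_n+\alpha_n|)$ in the trace \eqref{eq: deftau} is calibrated precisely so that this same $q^{(\ell_n)}_n$ is the charge read off by $\bm{\tau}$; hence $\mathcal{G}(z_0)\mathbf{q}=\chi_{\lambda_0}\mathbf{q}-R_N^{(\mathrm{F})}(z_0)\eta$, so that $\chi_{\lambda_0}\mathbf{q}-\mathcal{G}(z_0)\mathbf{q}=R_N^{(\mathrm{F})}(z_0)\eta\in\dom\big(H_N^{(\mathrm{F})}\big)$ and $\bm{\tau}(\chi_{\lambda_0}-\mathcal{G}(z_0))=\breve{\mathcal{G}}(z_0)\eta$.

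It then remains to evaluate this trace explicitly, which is where the factor $[1+T_N^{*}]^{-1}$ is generated. Taking the adjoint of the Friedrichs resolvent representation \eqref{eq: RNFried} and using $z_0\in\mathbb{R}$ together with $T_N(z_0)^{*}=T_N^{*}(z_0)$, I obtain
\[ \mathcal{G}(z_0)\,e_{n,\ell_n} = \big[1+T_N^{*}(-\lambda_0^2)\big]^{-1}\,e^{-i\mathbf{S}_n(\mathbf{x}_n)\cdot(\mathbf{x}-\mathbf{x}_n)}\,\xi_nG^{(\ell_n)}_{\lambda_0,n}, \]
where $(e_{n,\ell_n})$ is the canonical basis of $\mathbb{C}^{2N}$ and the single-layer source on the right comes from $\bm{\tau}^{*}$ composed with the single-flux blocks $\xi_nR_n^{(\mathrm{F})}(z_0)\xi_n$ in \eqref{eq: TNdef}. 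Using the duality $\tau^{(\ell_m)}_mR_N^{(\mathrm{F})}(z_0)f=\big\langle\mathcal{G}(z_0)e_{m,\ell_m},f\big\rangle$, the entries of $\breve{\mathcal{G}}(z_0)\eta$ become $\big\langle\mathcal{G}(z_0)e_{m,\ell_m},e^{-i\mathbf{S}_n(\mathbf{x}_n)\cdot(\mathbf{x}-\mathbf{x}_n)}P_nG^{(\ell_n)}_{\lambda_0,n}\big\rangle$; substituting the displayed identity and invoking the Hermitian symmetry of $B$, $L$ and $\Theta$ (which forces the remaining matrix to equal its own conjugate transpose) identifies this with the third term of \eqref{eq: TeB}. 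Since that term and $L$ are fixed Hermitian matrices, $B\mapsto\Theta(B)$ is an affine shift on $M_{2N,\,\mathrm{Herm}}(\mathbb{C})$, hence a bijection, giving the asserted one-to-one correspondence.

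The hard part will be the explicit identification of $\mathcal{G}(z_0)$ from the adjoint of \eqref{eq: RNFried}: one must check that $\bm{\tau}^{*}$ acting through the single-flux blocks produces \emph{exactly} the source $e^{-i\mathbf{S}_n(\mathbf{x}_n)\cdot(\mathbf{x}-\mathbf{x}_n)}\xi_nG^{(\ell_n)}_{\lambda_0,n}$, with no spurious constant, and that $[1+T_N^{*}(-\lambda_0^2)]^{-1}$ appears on the correct side. This is delicate because the trace \eqref{eq: deftau} is a singular, non-$L^2$ functional defined only on $\dom(H_N^{(\mathrm{F})})$, so every passage must be justified through the asymptotics \eqref{eq: G2asy0} and the vanishing conditions \eqref{eq: limdomQF}, exactly as in the boundary integrations already carried out in the proofs of \cref{thm: Qbeta} and \cref{cor:domHbeta}. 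Tracking the normalization constants and conjugation conventions so that the Hermitian structure closes up is the principal bookkeeping burden.
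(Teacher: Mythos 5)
Your strategy is essentially the paper's own: you match the form parametrization of \cref{cor:domHbeta} against the Kre{\u\i}n parametrization of \cref{thm: extRF} through a common charge $\mathbf{q}$, and the two computations you outline are precisely the paper's \cref{lemma: Gzexp} (the identity $\mathcal{G}(-\lambda_0^2)\,\mathbf{e}_{n,\ell_n} = \big[1+T_N^{*}(-\lambda_0^2)\big]^{-1} e^{-i \mathbf{S}_{n}(\mathbf{x}_n) \cdot (\mathbf{x} - \mathbf{x}_n)}\xi_n G^{(\ell_n)}_{\lambda_0,n}$, obtained from the adjoint of \eqref{eq: RNFried} and the single-flux trace computation) and \cref{lemma: FdomH} (your $\chi_{\lambda_0}\mathbf{q}-\mathcal{G}(z_0)\mathbf{q}=R_N^{(\mathrm{F})}(z_0)\eta$ is exactly the paper's $\mathcal{F}_{\lambda_0}\mathbf{q}$, with \eqref{eq: HFFq} and \eqref{eq: tauFq}). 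The one genuine difference is a simplification: you match at the reference point $z_0=-\lambda_0^2$, where $\Lambda(z_0)=0$, so you never need the cancellation that the paper carries out between the $\lambda$-dependent blocks of $\Lambda(-\lambda^2)$ in \eqref{eq: expLambda} and the term $\bm{\tau}\mathcal{F}_{\lambda}$; the paper keeps a second free parameter $\lambda$ and checks that all $\lambda$-dependence drops out. Your shortcut is legitimate and tidier, but it is the same route, not a different one.

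Three points need repair. First, \cref{thm: extRF} licenses the representation $\psi=\varphi_z+\mathcal{G}(z)\mathbf{q}$ with $\bm{\tau}\varphi_z=[\Theta+\Lambda(z)]\mathbf{q}$ only for $z\in\mathcal{Z}$, i.e.\ where $\Theta+\Lambda(z)$ is invertible; at your matching point $\Theta+\Lambda(z_0)=\Theta$, so whenever $\Theta$ is singular (equivalently, by \cref{pro: spectrum}, whenever $-\lambda_0^2\in\sigma_{\mathrm{disc}}\big(H_N^{(\Theta)}\big)$) the step ``$\bm{\tau}\varphi_{z_0}=\Theta\mathbf{q}$'' is not justified as written. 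The fix is one line but must be said: writing $\varphi_z=\psi-\mathcal{G}(z)\mathbf{q}$, the definition \eqref{eq: defLambda} gives $\bm{\tau}\varphi_z-\bm{\tau}\varphi_w=\big(\Lambda(z)-\Lambda(w)\big)\mathbf{q}$, so the quantity $\bm{\tau}\varphi_z-\Lambda(z)\mathbf{q}$ is independent of $z\in\mathbb{C}\setminus[0,+\infty)$ and the boundary condition can be transported from some $z\in\mathcal{Z}$ to $z_0$. Second, your use of Hermitian symmetry to move $\big[1+T_N^{*}(-\lambda_0^2)\big]^{-1}$ from the $\xi_m G^{(\ell_m)}_{\lambda_0,m}$ slot (where your $\breve{\mathcal{G}}(z_0)\eta$ computation naturally places it) to the $\xi_n G^{(\ell_n')}_{\lambda_0,n}$ slot appearing in \eqref{eq: TeB} is circular as stated: you may not assume the matched $\Theta$ is Hermitian, since that is exactly what would make the shift matrix Hermitian. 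The needed symmetry must be proved directly, as the paper does, from $\big(\Lambda(z)\big)^{*}=\Lambda(z^{*})$ in \eqref{eq: Lambdaprop} together with $G^{(\ell_n)}_{\lambda,n}=\big(\tau_n^{(\ell_n)}R_n^{(\mathrm{F})}(-\lambda^2)\big)^{*}$, cf.\ \eqref{eq: tauRGn}. Third, a factual slip: $\eta$ is \emph{not} supported away from the poles, since near $\mathbf{x}_n$ one has $\nabla\xi_n=0$ but $\check{\mathbf{S}}_n\neq 0$; membership $\eta\in L^2(\mathbb{R}^2)$ holds because the linear vanishing \eqref{eq: checkSnLip} compensates the $r_n^{-|\ell_n+\alpha_n|-1}$ singularity of $(-i\nabla+\mathbf{A}_n)G^{(\ell_n)}_{\lambda_0,n}$. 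Finally, note that matching domains should be supplemented by the (easy) check that the actions coincide, e.g.\ via $\big(H_N^{(\mathrm{F})}+\lambda_0^2\big)R_N^{(\mathrm{F})}(z_0)\eta=\eta$ as in the paper's last display, or by remarking that both operators are restrictions of the adjoint of one and the same symmetric operator, so equal domains force equal operators.
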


Before discussing the proof of the above result, we need to state some technical lemmas. Making reference to \cref{lemma: Tn}, we henceforth take  $\lambda, \lambda_0 > 0$ large enough so that
	\begin{equation}\label{eq: TN}
		\big\|T_N\big(\!-\lambda^2\big)\big\| < 1\,,		\qquad	\big\|T_N\big(\!-\lambda_0^2\big)\big\| < 1
	\end{equation}
This choice of $\lambda, \lambda_0$ ensures the well-posedness of the representation \eqref{eq: RNFried} for the Friedrichs resolvent in both $ z = - \lambda^2 $ and $ z_0 : = - \lambda_0^2 $.

	\begin{lemma}
		\label{lemma: Gzexp}
		\mbox{}		\\
		For any $\lambda > 0$ fulfilling \eqref{eq: TN} and for all $\psi \in L^2(\mathbb{R}^2)$, $ \mathbf{q} \in \mathbb{C}^{2N}$ there holds:
			\begin{align}
				\breve{\mathcal{G}}(-\lambda^2) \psi &= \tx\bigoplus_{n,\ell_n}\! \braketl{\big[1 \!+\! T^{*}_N\big(\!-\lambda^2\big)\big]^{-1} e^{-i \mathbf{S}_{n}(\mathbf{x}_n) \cdot (\mathbf{x} - \mathbf{x}_n)} \xi_{n} G^{(\ell_n)}_{\lambda,n} }{ \psi } \,; \label{eq: expGhat}\\
				\mathcal{G}(-\lambda^2) \mathbf{q} &= \big[1 \!+\! T^{*}_N\big(\!-\lambda^2\big)\big]^{-1} \tx\sum_{n,\ell_n} q_n^{(\ell_n)} e^{-i \mathbf{S}_{n}(\mathbf{x}_n) \cdot (\mathbf{x} - \mathbf{x}_n)} \xi_{n} G^{(\ell_n)}_{\lambda,n} \,. \label{eq: expG}
			\end{align}
		Moreover, for any fixed $\lambda_0 > 0$ fulfilling \eqref{eq: TN}, there holds
		\bml{
			\lf[\Lambda\big(\!-\lambda^2\big)\ri]_{mn}^{\ell_m \ell_n'}\!
			= \tfrac{\pi\,\lambda^{2|\ell_n + \alpha_n|}}{2 \sin(\pi \alpha_n)}\,\delta_{mn}\delta_{\ell_m \ell_n'}
				- \tfrac{\pi\,\lambda_0^{2|\ell_n + \alpha_n|}}{ 2 \sin(\pi \alpha_n)}\,\delta_{mn}\delta_{\ell_m \ell_n'} \\
				+ \braketr{e^{- i \mathbf{S}_{m}(\mathbf{x}_m) \cdot (\mathbf{x} - \mathbf{x}_m)} P_m\, G_{\lambda,m}^{(\ell_m)}}{ \big[1 \!+\! T^{*}_N\big(\!-\lambda^2\big)\big]^{-1}  e^{-i \mathbf{S}_{n}(\mathbf{x}_n) \cdot (\mathbf{x} - \mathbf{x}_n)} \xi_{n} G^{(\ell_n')}_{\lambda,n} } \\
				- \braketr{e^{- i \mathbf{S}_{m}(\mathbf{x}_m) \cdot (\mathbf{x} - \mathbf{x}_m)} P_m\, G_{\lambda_0,m}^{(\ell_m)}}{ \big[1 \!+\! T^{*}_N\big(\!-\lambda_0^2\big)\big]^{-1}  e^{-i \mathbf{S}_{n}(\mathbf{x}_n) \cdot (\mathbf{x} - \mathbf{x}_n)} \xi_{n} G^{(\ell_n')}_{\lambda_0,n} } . \label{eq: expLambda}
		}
	\end{lemma}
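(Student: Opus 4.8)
The plan is to reduce all three formulas to the action of the trace map \eqref{eq: deftau} on the single-flux resolvent and on the defect functions, exploiting that $\tau_m^{(\ell_m)}$ is built from the operator $|\ell_m+\alpha_m|+r\,\partial_r$, which annihilates the singular mode $r^{-|\ell_m+\alpha_m|}$ and acts as multiplication by $2|\ell_m+\alpha_m|$ on the regular mode $r^{|\ell_m+\alpha_m|}$. The cornerstone is the \emph{master identity}
\begin{equation*}
	\tau_m^{(\ell_m)}\!\Big( e^{-i\mathbf{S}_m(\mathbf{x}_m)\cdot(\mathbf{x}-\mathbf{x}_m)} R^{(\mathrm{F})}_m(-\lambda^2)\,g \Big) = \big\langle G^{(\ell_m)}_{\lambda,m}\,,\, g \big\rangle\,, \qquad g \in L^2(\mathbb{R}^2)\,,
\end{equation*}
which I would establish by inserting the explicit kernel \eqref{eq: RnAT} (note $-i\sqrt{-\lambda^2}=\lambda$), projecting onto the $\ell_m$-th angular harmonic, and reading off the $r^{|\ell_m+\alpha_m|}$ coefficient of $R_m^{(\mathrm{F})}(-\lambda^2)g$ as $r\to 0^+$: only the region $r'>r$ in the radial integral contributes, through $I_{|\ell_m+\alpha_m|}(\lambda r)\sim (\lambda r/2)^{|\ell_m+\alpha_m|}/\Gamma(|\ell_m+\alpha_m|+1)$, the region $r'<r$ being of higher order. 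The smooth phase and the cutoff $\xi_m\equiv1$ near $\mathbf{x}_m$ do not affect the limit, since the linear-in-$(\mathbf{x}-\mathbf{x}_m)$ corrections they produce shift the angular index and raise the power of $r$. I would record along the way the companion computation
\begin{equation*}
	\tau_m^{(\ell_m)}\!\Big( e^{-i\mathbf{S}_m(\mathbf{x}_m)\cdot(\mathbf{x}-\mathbf{x}_m)} G^{(\ell_m')}_{\lambda,m} \Big) = -\,\delta_{\ell_m\ell_m'}\,\frac{\pi\,\lambda^{2|\ell_m+\alpha_m|}}{2\sin(\pi\alpha_m)}\,,
\end{equation*}
obtained by feeding the regular part of the expansion \eqref{eq: G2asy0} into $\tau_m^{(\ell_m)}$ and simplifying with the reflection formula $\Gamma(s)\Gamma(-s)=-\pi/(s\sin\pi s)$ together with $\sin(\pi|\ell_m+\alpha_m|)=\sin(\pi\alpha_m)$.

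For \eqref{eq: expGhat} I would insert the representation \eqref{eq: RNFried} of $R^{(\mathrm{F})}_N(-\lambda^2)$ into $\breve{\mathcal{G}}(-\lambda^2)=\bm{\tau}R^{(\mathrm{F})}_N(-\lambda^2)$ and apply $\tau_m^{(\ell_m)}$ term by term. Every summand with $n\neq m$ carries the factor $\xi_n$, which vanishes on a neighbourhood of $\mathbf{x}_m$ (for $n=0$ this uses $\sum_{k=0}^N\xi_k^2=1$ and $\xi_m\equiv1$ there, forcing $\xi_0\equiv 0$); hence their trace at $\mathbf{x}_m$ is zero. Only the $n=m$ term survives, and since its outer $\xi_m\equiv1$ near $\mathbf{x}_m$, the master identity with $g=\xi_m e^{i\mathbf{S}_m(\mathbf{x}_m)\cdot(\mathbf{x}-\mathbf{x}_m)}[1+T_N(-\lambda^2)]^{-1}\psi$ yields exactly \eqref{eq: expGhat}, after moving the bounded factors across the inner product (the multiplication by the phase being unitary and $\xi_m$ real). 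Formula \eqref{eq: expG} then follows by taking the Hilbert-space adjoint $\mathcal{G}(-\lambda^2)=(\breve{\mathcal{G}}(-\lambda^2))^*$: the adjoint of $\psi\mapsto\bigoplus_{m,\ell_m}\langle v_{m,\ell_m},\psi\rangle$ is $\mathbf{q}\mapsto\sum_{m,\ell_m}q^{(\ell_m)}_m v_{m,\ell_m}$ with $v_{m,\ell_m}=[1+T^*_N(-\lambda^2)]^{-1}e^{-i\mathbf{S}_m(\mathbf{x}_m)\cdot(\mathbf{x}-\mathbf{x}_m)}\xi_m G^{(\ell_m)}_{\lambda,m}$, and the inverse factors out exactly as stated.

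For \eqref{eq: expLambda} I would avoid computing $\tau_m^{(\ell_m)}$ of $[1+T^*_N]^{-1}$ directly and instead use the defining equation for $u:=\mathcal{G}(-\lambda^2)\mathbf{q}$, namely $(1+T^*_N(-\lambda^2))u=w$ with $w=\sum_{n,\ell_n}q^{(\ell_n)}_n e^{-i\mathbf{S}_n(\mathbf{x}_n)\cdot(\mathbf{x}-\mathbf{x}_n)}\xi_n G^{(\ell_n)}_{\lambda,n}$, so that $u=w-T^*_N(-\lambda^2)u$. Writing $T^*_N(-\lambda^2)=\sum_n e^{-i\mathbf{S}_n(\mathbf{x}_n)\cdot(\mathbf{x}-\mathbf{x}_n)}\xi_n R^{(\mathrm{F})}_n(-\lambda^2)P^*_n e^{i\mathbf{S}_n(\mathbf{x}_n)\cdot(\mathbf{x}-\mathbf{x}_n)}$ and applying $\tau_m^{(\ell_m)}$, the localization argument again kills all $n\neq m$ terms, the defect-function trace handles the $n=m$ part of $w$, and the $n=m$ part of $T^*_N u$ is treated by the master identity with $g=P^*_m e^{i\mathbf{S}_m(\mathbf{x}_m)\cdot(\mathbf{x}-\mathbf{x}_m)}u$; moving $P^*_m$ across gives $\langle e^{-i\mathbf{S}_m(\mathbf{x}_m)\cdot(\mathbf{x}-\mathbf{x}_m)}P_m G^{(\ell_m)}_{\lambda,m},u\rangle$ (a legitimate $L^2$ pairing because the $\nabla\xi_m,\Delta\xi_m$ pieces of $P_m$ are supported away from $\mathbf{x}_m$ and the $\check{\mathbf{S}}_m$ pieces vanish linearly there). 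This produces
\begin{equation*}
	\big[\bm{\tau}\mathcal{G}(-\lambda^2)\big]^{\ell_m\ell_n'}_{mn} = -\,\frac{\pi\,\lambda^{2|\ell_n+\alpha_n|}}{2\sin(\pi\alpha_n)}\,\delta_{mn}\delta_{\ell_m\ell_n'} - \big\langle e^{-i\mathbf{S}_m(\mathbf{x}_m)\cdot(\mathbf{x}-\mathbf{x}_m)}P_m G^{(\ell_m)}_{\lambda,m}\,,\,[1+T^*_N(-\lambda^2)]^{-1}e^{-i\mathbf{S}_n(\mathbf{x}_n)\cdot(\mathbf{x}-\mathbf{x}_n)}\xi_n G^{(\ell_n')}_{\lambda,n}\big\rangle\,,
\end{equation*}
and, since $z_0=-\lambda_0^2$ is real so that $\Lambda(-\lambda^2)=\bm{\tau}\big(\mathcal{G}(-\lambda_0^2)-\mathcal{G}(-\lambda^2)\big)$, subtracting the same expression evaluated at $\lambda_0$ reproduces \eqref{eq: expLambda}.

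The main obstacle is the rigorous justification of the master identity and of the interchange of $\tau_m^{(\ell_m)}$ with the operator-valued radial integral: one must control the $r\to 0^+$ limit of $|\ell_m+\alpha_m|\,\hat h_{\ell_m}(r)+r\,\hat h_{\ell_m}'(r)$ uniformly, show that the $r'<r$ contribution is of relative order $O(r^{2})$, and check that differentiating the phase under $r\,\partial_r$ only generates terms in the neighbouring harmonics $\ell_m\pm1$ at strictly higher order in $r$, so that they drop out after the angular projection. A secondary technical point is the admissibility of the integration by parts implicit in $\langle G^{(\ell_m)}_{\lambda,m},P^*_m v\rangle=\langle P_m G^{(\ell_m)}_{\lambda,m},v\rangle$, which I would settle using the very structure of $P_m$ (derivatives of $\xi_m$ supported off $\mathbf{x}_m$, and $\check{\mathbf{S}}_m$ vanishing at $\mathbf{x}_m$) to annihilate the boundary contributions, in the same spirit as the boundary-term estimates in the proofs of \cref{thm: Qbeta} and \cref{cor:domHbeta}.
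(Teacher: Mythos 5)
Your treatment of \eqref{eq: expGhat} and \eqref{eq: expG} is correct and essentially the paper's own argument: your ``master identity'' is the paper's key computation \eqref{eq: tauRGn} (proved there by inserting the kernel \eqref{eq: RnAT} and the small-argument Bessel asymptotics, with dominated convergence handling the $r'<r$ region), and the localization onto the $n=m$ term of \eqref{eq: RNFried} plus the adjoint step to get $\mathcal{G}(-\lambda^2)$ match the paper exactly.

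For \eqref{eq: expLambda}, however, there is a genuine gap. Your companion identity
$\tau_m^{(\ell_m)}\big( e^{-i\mathbf{S}_m(\mathbf{x}_m)\cdot(\mathbf{x}-\mathbf{x}_m)} G^{(\ell_m')}_{\lambda,m} \big) = -\,\delta_{\ell_m\ell_m'}\,\tfrac{\pi\,\lambda^{2|\ell_m+\alpha_m|}}{2\sin(\pi\alpha_m)}$
is \emph{false} for $\ell_m'\neq\ell_m$ whenever $\mathbf{S}_m(\mathbf{x}_m)\neq 0$, which is the generic situation for $N\geq 2$. Indeed, the linear term of the phase is a combination of $r\,e^{\pm i\theta}$; it multiplies the \emph{singular} part $\sim \Gamma(\nu')\,2^{\nu'-1}r^{-\nu'}$ of $G^{(\ell_m')}_{\lambda,m}$, with $\nu'=|\ell_m'+\alpha_m|$, and deposits it in the harmonic $\ell_m$ with radial power $r^{\,1-\nu'}=r^{\,\nu_m}$, since $\nu_m+\nu'=1$ for the two distinct values of $\ell$ in $\{0,-1\}$. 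This is exactly the power that the trace map \eqref{eq: deftau} extracts: $(\nu_m+r\partial_r)$ does not annihilate it, and one gets a finite off-diagonal contribution proportional to $|\mathbf{S}_m(\mathbf{x}_m)|$. So your blanket claim that phase-induced harmonic shifts occur ``at strictly higher order in $r$'' is true when $\tau_m^{(\ell_m)}$ acts on $R^{(\mathrm{F})}_m(-\lambda^2)g$ with $g\in L^2$ (every harmonic there behaves regularly, $\sim r^{+|\ell+\alpha_m|}$, which is why the master identity is safe), but it fails precisely on the defect functions. Consequently your intermediate formula for $\bm{\tau}\,\mathcal{G}(-\lambda^2)$ is missing these off-diagonal terms. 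The final statement \eqref{eq: expLambda} nonetheless survives, because the spurious terms stem from the $\lambda$-independent singular coefficient of $G^{(\ell_m')}_{\lambda,m}$ and hence cancel in the difference $\Lambda(-\lambda^2)=\bm{\tau}\big(\mathcal{G}(-\lambda_0^2)-\mathcal{G}(-\lambda^2)\big)$ --- but your write-up neither detects them nor invokes this cancellation, so as written the derivation is incomplete (and note that in the Kre{\u\i}n framework $\bm{\tau}\mathcal{G}(z)$ alone is not even a well-defined object; $\Lambda$ is defined only through such differences). The paper avoids the problem structurally: it applies $\tau_m^{(\ell_m)}$ directly to the combination $\sum_{\ell_m'} q_m^{(\ell_m')} e^{-i\mathbf{S}_m(\mathbf{x}_m)\cdot(\mathbf{x}-\mathbf{x}_m)}\xi_m\big(G^{(\ell_m')}_{\lambda_0,m}-G^{(\ell_m')}_{\lambda,m}\big)$, in which the singular parts cancel before the phase (expanded there through the factor $J_{|\ell_m'-\ell_m|}(|\mathbf{S}_m(\mathbf{x}_m)|\,r)$) can generate any surviving cross term. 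To repair your proof, either reorganize it this way, or carry the off-diagonal contributions explicitly and show that they drop out of the subtraction.
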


\begin{proof}
We first point out that the expressions in the second and third lines of \eqref{eq: expLambda} are well-defined since
	\begin{equation*}
		P_{n}\, G^{(\ell_n)}_{\lambda,n} \in L^2(\mathbb{R}^2)\,.
	\end{equation*}
Let us further highlight that the second identity in \eqref{eq: Lambdaprop} grants the self-adjointness of $\Lambda\big(\!-\lambda^2\big)$, meaning that
	\begin{equation*}
		\lf[\Lambda\big(\!-\lambda^2\big)\ri]_{mn}^{\ell_m \ell_n'} = \left({\lf[\Lambda\big(\!-\lambda^2\big)\ri]_{n m}^{\ell_n' \ell_m}}\right)^{\!*} .
	\end{equation*}
As a matter of fact, taking into account that $G^{(\ell_n)}_{\lambda,n} = \big(\tau^{(\ell_n)}_{n} R^{(\mathrm{F})}_{n}(-\lambda^2)\big)^{*}$, one gets that the matrix
	\begin{equation*}
		\left(\, \braketr{e^{- i \mathbf{S}_{m}(\mathbf{x}_m) \cdot (\mathbf{x} - \mathbf{x}_m)} P_m\, G_{\lambda,m}^{(\ell_m)}}{ \big[1 \!+\! T^{*}_N\big(\!-\lambda^2\big)\big]^{-1}  e^{-i \mathbf{S}_{n}(\mathbf{x}_n) \cdot (\mathbf{x} - \mathbf{x}_n)} \xi_{n} G^{(\ell_n')}_{\lambda,n} } \,\right)^{\ell_m \ell_n'}_{mn}
	\end{equation*}
is itself Hermitian.
Recalling the explicit expression \eqref{eq: RnAT} for $R^{(\mathrm{F})}_{n}(z)$,  we get for any $\psi \in L^2(\mathbb{R}^2)$
	\bml{
		\tau_{n}^{(\ell_n)} R^{(\mathrm{F})}_{n}(-\lambda^2) \psi 
		= 2^{|\ell_n + \alpha_n|-1} \,\Gamma\big(|\ell_n + \alpha_n|\big) \int_{0}^{+\infty}\hspace{-0.3cm} \diff r'\! \int_{0}^{2\pi} \!\!\! \diff\theta'\, \tfrac{e^{-i \ell_n \theta'}}{ \sqrt{2\pi}}\,\psi(r',\theta')\; \times \\
				\times\, \lim_{r \to 0^{+}} \tfrac{1}{r^{|\ell_n + \alpha_n|}} \lf[
				\one_{\{r < r'\}}\, \lf( |\ell_n + \alpha_n|\, I_{|\ell_n + \alpha_n|}(\lambda r) + r\,\partial_{r} I_{|\ell_n + \alpha_n|}(\lambda r)) \ri) \,K_{|\ell_n + \alpha_n|}(\lambda r')	\ri. \\
				\lf. + \one_{\{r > r'\}}\, I_{|\ell_n + \alpha_n|}(\lambda r') \lf( |\ell_n + \alpha_n|\,K_{|\ell_n + \alpha_n|}(\lambda r) + r\,\partial_{r} K_{|\ell_n + \alpha_n|}(\lambda r) \ri)
			\ri]  \\
		= \int_{0}^{+\infty}\hspace{-0.4cm} \diff r' \int_{0}^{2\pi} \!\!\! \diff\theta' \,\lambda^{|\ell_n + \alpha_n|} K_{|\ell_n + \alpha_n|}(\lambda r')\, \tfrac{e^{-i \ell_n \theta'}}{ \sqrt{2\pi}}\; \psi(r',\theta')
		= \int_{\mathbb{R}^2} \!\! \diff\mathbf{x}' \, \lf( G^{(\ell_n)}_{\lambda,n}(\mathbf{x}') \ri)^* \, \psi(\mathbf{x}')\,,
		\label{eq: tauRGn}
	}
where the last identity follows by comparison with the explicit expression \eqref{eq: G2exp} for the defect function. Indeed, here we used the Lebesgue dominated convergence theorem and the following asymptotic expansions for the Bessel functions $I_\nu(w),K_\nu(w)$ ($\nu > 0$ and $w> 0$) in the limit $w \to 0^{+}$ and analogous asymptotics for their derivatives \cite[\S 10.31]{OLBC10}:
	\begin{gather*}
		I_{\nu}(w) = w^{\nu} \lf[ \tfrac{1}{2^{\nu} \Gamma(1 + \nu)} + \tfrac{1}{2^{2 + \nu} \Gamma(2 + nu)}\,w^2 + \mathcal{O}(w^4) \ri] ,  \\
		K_{\nu}(w) = \tfrac{1}{w^{\nu}} \lf[ \tfrac{\Gamma(\nu)}{2^{1-\nu}} - \tfrac{\Gamma(-1 + \nu)}{ 2^{3-\nu}}\, w^2 + \mathcal{O}(w^4) \ri] + w^{\nu} \lf[ \tfrac{\Gamma(-\nu)}{2^{1+ \nu}} - \tfrac{\Gamma(-1-\nu)}{ 2^{3+\nu}}\, w^2 + \mathcal{O}(w^4) \ri] .
	\end{gather*}

	Using now the representation \eqref{eq: RNFried} for the Friedrichs resolvent, we obtain
	\bmln{
		\breve{\mathcal{G}}(-\lambda^2) \psi 
		=\tx \bigoplus_{n,\ell_n} \tau_{n}^{(\ell_n)} R^{(\mathrm{F})}_{n}(-\lambda^2)\, \xi_{n}\,e^{i \mathbf{S}_{n}(\mathbf{x}_n) \cdot (\mathbf{x} - \mathbf{x}_n)}\, \big[1 + T_N(-\lambda^2)\big]^{-1} \psi \\
		= \tx\bigoplus_{n,\ell_n}\! \braketr{e^{-i \mathbf{S}_{n}(\mathbf{x}_n) \cdot (\mathbf{x} - \mathbf{x}_n)} \xi_{n} G^{(\ell_n)}_{\lambda,n} }{\big[1 \!+\! T_N(-\lambda^2)\big]^{-1} \psi } ,
	}
which proves \eqref{eq: expGhat}	for all $\psi \in L^2(\mathbb{R}^2)$.
Taking this into account, for any $\mathbf{q} \in \mathbb{C}^{2N}$, we get
\bmln{	
	\braketl{\mathcal{G}(-\lambda^2) \mathbf{q}}{\psi} = \braketr{\mathbf{q}}{\breve{\mathcal{G}}(-\lambda^2) \psi }
		=  \tx\sum_{n,\ell_n}   \big(q_n^{(\ell_n)}\big)^* \braketr{e^{-i \mathbf{S}_{n}(\mathbf{x}_n) \cdot (\mathbf{x} - \mathbf{x}_n)} \xi_{n} G^{(\ell_n)}_{\lambda,n} }{\big[\one \!+\! T_N(-\lambda^2)\big]^{-1} \psi } \\
		= \braketl{\big[\one \!+\! T^{*}_N(-\lambda^2)\big]^{-1} \tx{\sum_{n,\ell_n}} q_n^{(\ell_n)} e^{-i \mathbf{S}_{n}(\mathbf{x}_n) \cdot (\mathbf{x} - \mathbf{x}_n)} \xi_{n} G^{(\ell_n)}_{\lambda,n} }{\psi } ,
}
which provides evidence for \eqref{eq: expG}. To say more, \eqref{eq: expG} can also be rephrased as
\bml{\label{eq: GTstar}
	\mathcal{G}(-\lambda^2) \mathbf{q}
	= \tx{\sum_{n,\ell_n}} q_n^{(\ell_n)} e^{-i \mathbf{S}_{n}(\mathbf{x}_n) \cdot (\mathbf{x} - \mathbf{x}_n)} \xi_{n} G^{(\ell_n)}_{\lambda,n} \\
		- T^{*}_N(-\lambda^2) \big[1 \!+\! T^{*}_N(-\lambda^2)\big]^{-1} \tx{\sum_{n,\ell_n}} q_n^{(\ell_n)} e^{-i \mathbf{S}_{n}(\mathbf{x}_n) \cdot (\mathbf{x} - \mathbf{x}_n)} \xi_{n} G^{(\ell_n)}_{\lambda,n}\,,
}
and the definition \eqref{eq: defLambda} yields, for any $ \mathbf{q} \in \mathbb{C}^{2N}$,
\begin{equation*}
 \tx\sum_{n,\ell_n'} \! \lf[\Lambda\big(\!-\lambda^2\big)\ri]_{mn}^{\ell_m \ell_n'} q_{n}^{(\ell_n')} = \tau_{m}^{(\ell_m)}\! \left[\mathcal{G}\big(\!-\lambda_0^2\big) \mathbf{q} - \mathcal{G}\big(\!-\lambda^2\big) \mathbf{q} \right].
\end{equation*}

On one side, by direct computations we deduce
\bmln{
	\tau_{m}^{(\ell_m)}\! \lf[ \tx\sum_{n,\ell_n'}  \lf\{ q_n^{(\ell_n')} e^{-i \mathbf{S}_{n}(\mathbf{x}_n) \cdot (\mathbf{x} - \mathbf{x}_n)} \xi_{n} G^{(\ell_n')}_{\lambda_0,n} -  q_n^{(\ell_n')} e^{-i \mathbf{S}_{n}(\mathbf{x}_n) \cdot (\mathbf{x} - \mathbf{x}_n)} \xi_{n} G^{(\ell_n')}_{\lambda,n} \ri\} \ri] \\
	= {\tx\sum_{\ell_m^{\prime}}} q_m^{(\ell_m^{\prime})}\! \lim_{r \to 0^{+}} \tfrac{i^{|\ell_m^{\prime} - \ell_m|}\,2^{|\ell_m + \alpha_m|-1} \,\Gamma\big(|\ell_m \!+\! \alpha_m|\big)}{r^{|\ell_m + \alpha_m|}}\; \times \\
		\times \big( |\ell_m + \alpha_m| + r\,\partial_{r}\big)\! \lf[ \lf(\lambda_0^{|\ell_m^{\prime} + \alpha_m|} K_{|\ell_m^{\prime} + \alpha_m|}(\lambda_0\,r) - \lambda^{|\ell_m^{\prime} + \alpha_m|} K_{|\ell_m^{\prime} + \alpha_m|}(\lambda\,r)\ri) J_{|\ell_m^{\prime} - \ell_m|}\big(\,|\mathbf{S}_{m}(\mathbf{x}_m)|\,r\big) \ri] \\
	= \tfrac{\pi}{2\,\sin(\pi \alpha_m)}\,\lf(\lambda^{2|\ell_m + \alpha_m|} - \lambda_0^{2|\ell_m + \alpha_m|} \ri) q_m^{(\ell_m)}\,.
}
On the other side, by \eqref{eq: TNdef} we get
\begin{equation}
	\tau_{m}^{(\ell_m)} T^{*}_N(-\lambda^2) \phi_{\lambda}
	= \braketr{G_{\lambda,m}^{(\ell_m)}}{P_{m}^{*}\, e^{i \mathbf{S}_{m}(\mathbf{x}_m) \cdot (\mathbf{x} - \mathbf{x}_m)} \phi_{\lambda}}
	= \braketl{e^{- i \mathbf{S}_{m}(\mathbf{x}_m) \cdot (\mathbf{x} - \mathbf{x}_m)} P_m\, G_{\lambda,m}^{(\ell_m)}}{ \phi_{\lambda} } . \label{eq: tauTstar}
\end{equation}
Summing up, the above arguments and few additional manipulations suffice to prove \eqref{eq: expLambda}.
\end{proof}

Considering the identity \eqref{eq: GTstar} derived in the previous proof, we introduce the bounded operator
\begin{equation}\label{eq: defFlam}
	\mathcal{F}_{\lambda} : \mathbb{C}^{2N} \to L^2(\mathbb{R}^2)\,, \qquad
	\mathcal{F}_{\lambda} \mathbf{q} := T^{*}_N(-\lambda^2) \big[1 \!+\! T^{*}_N(-\lambda^2)\big]^{-1} \tx\sum_{n,\ell_n} q_n^{(\ell_n)} e^{-i \mathbf{S}_{n}(\mathbf{x}_n) \cdot (\mathbf{x} - \mathbf{x}_n)} \xi_{n} G^{(\ell_n)}_{\lambda,n}\,.
\end{equation}
In particular, \eqref{eq: GTstar} reduces to
\begin{equation*}
	\mathcal{G}(-\lambda^2) \mathbf{q}
	= \tx\sum_{n,\ell_n} q_n^{(\ell_n)} e^{-i \mathbf{S}_{n}(\mathbf{x}_n) \cdot (\mathbf{x} - \mathbf{x}_n)} \xi_{n} G^{(\ell_n)}_{\lambda,n} - \mathcal{F}_{\lambda} \mathbf{q}\,.
\end{equation*}
	
	\begin{lemma}
		\label{lemma: FdomH}
		\mbox{}		\\
		For any $\lambda > 0$ fulfilling \eqref{eq: TN} and for all $ \mathbf{q}  \in \mathbb{C}^{2N}$, $	\mathcal{F}_{\lambda} \mathbf{q} \in \dom (H_{N}^{(\mathrm{F})} ) $ and
		\begin{equation}
				\big( H_{N}^{(\mathrm{F})} + \lambda^2\big) \mathcal{F}_{\lambda} \mathbf{q}
				= \tx{\sum_{n,\ell_n}} q_n^{(\ell_n)}\, e^{-i \mathbf{S}_{n}(\mathbf{x}_n) \cdot (\mathbf{x} - \mathbf{x}_n)} P_{n} G_{n}^{(\ell_n)}; \label{eq: HFFq}
			\end{equation}
			\begin{equation}\label{eq: tauFq}
				\lf[\bm{\tau} \mathcal{F}_{\lambda} \ri]_{m n}^{\ell_m \ell_n'}
				= \braketr{e^{- i \mathbf{S}_{m}(\mathbf{x}_m) \cdot (\mathbf{x} - \mathbf{x}_m)} P_m\, G_{\lambda,m}^{(\ell_m)}}{ \big[1 \!+\! T^{*}_N(-\lambda^2)\big]^{-1} e^{-i \mathbf{S}_{n}(\mathbf{x}_n) \cdot (\mathbf{x} - \mathbf{x}_n)} \xi_{n} G^{(\ell_n')}_{\lambda,n} } .
			\end{equation}
	\end{lemma}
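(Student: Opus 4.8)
The plan is to reduce everything to the adjoint representation of the operator $T_N(-\lambda^2)$ of \eqref{eq: TNdef}. Since $-\lambda^2\in\mathbb{R}$ and $R^{(\mathrm{F})}_n(-\lambda^2)$ is self-adjoint, taking adjoints in \eqref{eq: TNdef} gives
\begin{equation*}
T_N^*(-\lambda^2) = \sum_{n} e^{-i \mathbf{S}_{n}(\mathbf{x}_n) \cdot (\mathbf{x} - \mathbf{x}_n)}\,\xi_{n}\, R^{(\mathrm{F})}_{n}(-\lambda^2)\, P_n^{*}\, e^{i \mathbf{S}_{n}(\mathbf{x}_n) \cdot (\mathbf{x} - \mathbf{x}_n)},
\end{equation*}
with $P_n^{*} = (-i\nabla+\mathbf{A}_n)\cdot\overline{\mathbf{b}_n} + \overline{c_n}$, where $\mathbf{b}_n := 2(\check{\mathbf{S}}_n\xi_n - i\nabla\xi_n)$ and $c_n := \check{\mathbf{S}}_n^2\xi_n + 2\check{\mathbf{S}}_n\cdot(-i\nabla\xi_n) - \Delta\xi_n$ are bounded and supported in $\supp\xi_n$. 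Writing $u := \sum_{n,\ell_n} q_n^{(\ell_n)} e^{-i \mathbf{S}_{n}(\mathbf{x}_n) \cdot (\mathbf{x} - \mathbf{x}_n)}\xi_n G_{\lambda,n}^{(\ell_n)}$, the relations \eqref{eq: expG}, \eqref{eq: GTstar} and \eqref{eq: defFlam} give the two facts I use throughout: $\mathcal{F}_\lambda\mathbf{q} = T_N^*(-\lambda^2)\,\mathcal{G}(-\lambda^2)\mathbf{q}$ and $\mathcal{G}(-\lambda^2)\mathbf{q} = u - \mathcal{F}_\lambda\mathbf{q}$.

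First I would show that $T_N^*(-\lambda^2)$ maps all of $L^2(\mathbb{R}^2)$ into $\dom[Q_N^{(\mathrm{F})}]$. Splitting $P_n^*$ as above, the zeroth-order contribution produces a term $R^{(\mathrm{F})}_n(-\lambda^2)g$ with $g\in L^2$, hence in $\dom(H_n^{(\mathrm{F})})\subset\dom[Q_n^{(\mathrm{F})}]$; the first-order contribution produces $R^{(\mathrm{F})}_n(-\lambda^2)(-i\nabla+\mathbf{A}_n)\cdot\mathbf{f}$ with $\mathbf{f}\in L^2$, which lies in $\dom[Q_n^{(\mathrm{F})}] = \dom\big((H_n^{(\mathrm{F})})^{1/2}\big)$ because $(-i\nabla+\mathbf{A}_n)(H_n^{(\mathrm{F})}+\lambda^2)^{-1/2}$ is bounded (its operator square is dominated by $H_n^{(\mathrm{F})}+\lambda^2$) and therefore so is its adjoint $(H_n^{(\mathrm{F})}+\lambda^2)^{-1/2}(-i\nabla+\mathbf{A}_n)$, one residual factor $(H_n^{(\mathrm{F})}+\lambda^2)^{-1/2}$ landing in the form domain. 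Localizing by $\xi_n$ and conjugating by the bounded smooth phase preserves membership in $\dom[Q_N^{(\mathrm{F})}]$ (here one uses that $\mathbf{A}_m$ is bounded on $\supp\xi_n$ for $m\neq n$). Applied to $v=\mathcal{G}(-\lambda^2)\mathbf{q}\in L^2$ this already yields $\mathcal{F}_\lambda\mathbf{q}=T_N^*(-\lambda^2)\mathcal{G}(-\lambda^2)\mathbf{q}\in\dom[Q_N^{(\mathrm{F})}]$.

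Next I would compute $H_N\mathcal{F}_\lambda\mathbf{q}$ distributionally on $\mathbb{R}^2\setminus\{\mathbf{x}_1,\dots,\mathbf{x}_N\}$, where $u$, $\mathcal{G}(-\lambda^2)\mathbf{q}$ and their difference $\mathcal{F}_\lambda\mathbf{q}$ are smooth $L^2$ functions. Using the deficiency equation \eqref{eq: greeneq} exactly as in the heuristic computation leading to \eqref{eq: Qbeta}, one gets $(H_N+\lambda^2)u = \sum_{n,\ell_n} q_n^{(\ell_n)} e^{-i\mathbf{S}_n(\mathbf{x}_n)\cdot(\mathbf{x}-\mathbf{x}_n)} P_n G_{\lambda,n}^{(\ell_n)}$ off the fluxes; the right-hand side is genuinely in $L^2$, since near $\mathbf{x}_n$ the factor $\check{\mathbf{S}}_n$ vanishes linearly by \eqref{eq: checkSnLip} and thereby compensates the $r_n^{-|\ell_n+\alpha_n|-1}$ singularity of $(-i\nabla+\mathbf{A}_n)G_{\lambda,n}^{(\ell_n)}$ read off from \eqref{eq: G2asy0}. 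On the other hand $\mathcal{G}(-\lambda^2)\mathbf{q}$ is a defect element of the Kre{\u\i}n construction, so $(H_N+\lambda^2)\mathcal{G}(-\lambda^2)\mathbf{q} = 0$ on $\mathbb{R}^2\setminus\{\mathbf{x}_n\}$ (equivalently $(H_N^{(\mathrm{F})}+\lambda^2)\mathcal{G}(-\lambda^2) = \bm{\tau}^*$ has range in distributions supported at the fluxes). Subtracting, $(H_N+\lambda^2)\mathcal{F}_\lambda\mathbf{q}$ coincides off the fluxes with the $L^2$ function on the right-hand side of \eqref{eq: HFFq}. Since $\mathcal{F}_\lambda\mathbf{q}\in\dom[Q_N^{(\mathrm{F})}]$ and its distributional image under $H_N$ is in $L^2$, the characterization \eqref{eq: domHF} of \cref{prop:QF} gives $\mathcal{F}_\lambda\mathbf{q}\in\dom(H_N^{(\mathrm{F})})$ together with \eqref{eq: HFFq}.

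Finally, \eqref{eq: tauFq} becomes immediate: once $\mathcal{F}_\lambda\mathbf{q}\in\dom(H_N^{(\mathrm{F})})$ the trace $\bm{\tau}\mathcal{F}_\lambda\mathbf{q}$ is well-defined, and applying $\tau_m^{(\ell_m)}$ to $\mathcal{F}_\lambda\mathbf{q} = T_N^*(-\lambda^2)\big[1+T_N^*(-\lambda^2)\big]^{-1}u$ and invoking \eqref{eq: tauTstar} with $\phi_\lambda = \big[1+T_N^*(-\lambda^2)\big]^{-1}e^{-i\mathbf{S}_n(\mathbf{x}_n)\cdot(\mathbf{x}-\mathbf{x}_n)}\xi_n G_{\lambda,n}^{(\ell_n')}$ reproduces the claimed entries. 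I expect the genuine obstacle to be the membership $\mathcal{F}_\lambda\mathbf{q}\in\dom(H_N^{(\mathrm{F})})$: this function is assembled from the defect profile $\mathcal{G}(-\lambda^2)\mathbf{q}$ and from $u$, neither of which lies in $\dom[Q_N^{(\mathrm{F})}]$, their $r_n^{-|\ell_n+\alpha_n|}$ singularities cancelling only in the combination. The two points requiring care are precisely the mapping $T_N^*(-\lambda^2)\colon L^2\to\dom[Q_N^{(\mathrm{F})}]$ via the square-root bound, and the distributional $L^2$ identity obtained from the deficiency equation together with the vanishing of $(H_N+\lambda^2)\mathcal{G}(-\lambda^2)\mathbf{q}$ away from the fluxes.
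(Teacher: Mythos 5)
Your proposal is correct, but it takes a genuinely different route from the paper's own proof. The paper argues purely operator-algebraically: using $G_{\lambda,n}^{(\ell_n)} = \big(\tau_n^{(\ell_n)}R_n^{(\mathrm{F})}(-\lambda^2)\big)^{*}$ (i.e.\ \eqref{eq: tauRGn}) together with the representation \eqref{eq: RNFried}, it rewrites $\mathcal{F}_\lambda\mathbf{q} = T_N^{*}(-\lambda^2)\,R_N^{(\mathrm{F})}(-\lambda^2)\sum_{n,\ell_n}\big(\tau_n^{(\ell_n)}\big)^{*}q_n^{(\ell_n)}$, invokes the intertwining relation $T_N^{*}(-\lambda^2)R_N^{(\mathrm{F})}(-\lambda^2) = R_N^{(\mathrm{F})}(-\lambda^2)T_N(-\lambda^2)$, and then uses duality and \eqref{eq: tauTstar} to exhibit $\mathcal{F}_\lambda\mathbf{q}$ directly as $R_N^{(\mathrm{F})}(-\lambda^2)$ applied to the explicit $L^2$ function $\sum_{n,\ell_n}q_n^{(\ell_n)}e^{-i\mathbf{S}_n(\mathbf{x}_n)\cdot(\mathbf{x}-\mathbf{x}_n)}P_nG_{\lambda,n}^{(\ell_n)}$, so that membership in $\dom\big(H_N^{(\mathrm{F})}\big)$ and \eqref{eq: HFFq} drop out simultaneously, by construction. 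You instead decouple regularity from action: (a) the mapping property $T_N^{*}(-\lambda^2)\colon L^2 \to \dom\big[Q_N^{(\mathrm{F})}\big]$, proved via the boundedness of $(H_n^{(\mathrm{F})}+\lambda^2)^{-1/2}(-i\nabla+\mathbf{A}_n)$ as the adjoint of a bounded operator, which gives $\mathcal{F}_\lambda\mathbf{q} = T_N^{*}(-\lambda^2)\mathcal{G}(-\lambda^2)\mathbf{q}\in\dom\big[Q_N^{(\mathrm{F})}\big]$; (b) the distributional identity away from the fluxes, combining the deficiency-equation computation for $u$ with $(H_N+\lambda^2)\mathcal{G}(-\lambda^2)\mathbf{q}=0$ off the fluxes (which indeed follows from $\langle \mathcal{G}(-\lambda^2)\mathbf{q},(H_N+\lambda^2)\phi\rangle = \langle\mathbf{q},\bm{\tau}\phi\rangle = 0$ for $\phi\in C^{\infty}_{\mathrm{c}}(\mathbb{R}^2\setminus\{\mathbf{x}_1,\dots,\mathbf{x}_N\})$); (c) the variational characterization \eqref{eq: domHF} to upgrade to operator-domain membership. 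Both routes handle \eqref{eq: tauFq} identically, via \eqref{eq: tauTstar}. As for what each buys: the paper's argument is shorter and yields the closed resolvent formula $\mathcal{F}_\lambda\mathbf{q} = R_N^{(\mathrm{F})}(-\lambda^2)\sum_{n,\ell_n}q_n^{(\ell_n)}e^{-i\mathbf{S}_n(\mathbf{x}_n)\cdot(\mathbf{x}-\mathbf{x}_n)}P_nG_{\lambda,n}^{(\ell_n)}$ in one stroke; your argument avoids the intertwining relation altogether and makes transparent the cancellation mechanism — neither $u$ nor $\mathcal{G}(-\lambda^2)\mathbf{q}$ lies in the form domain, their $r_n^{-|\ell_n+\alpha_n|}$ singularities cancel only in the difference — at the price of the two extra technical ingredients in (a) and (b), which you correctly identify as the real content of the lemma.
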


\begin{proof}
In view of \eqref{eq: RNFried} and \eqref{eq: tauRGn}, we deduce
\bmln{
	\mathcal{F}_{\lambda} \mathbf{q} 
	= T^{*}_N(-\lambda^2) \big[1 \!+\! T^{*}_N(-\lambda^2)\big]^{-1} \tx{\sum_{n,\ell_n}} e^{-i \mathbf{S}_{n}(\mathbf{x}_n) \cdot (\mathbf{x} - \mathbf{x}_n)} \xi_{n} R^{(\mathrm{F})}_{n}(-\lambda^2) \xi_{n} e^{i \mathbf{S}_{n}(\mathbf{x}_n) \cdot (\mathbf{x} - \mathbf{x}_n)} \big(\tau^{(\ell_n)}_{n}\big)^{*} q_{n}^{(\ell_n)} \\
	= T^{*}_N(-\lambda^2)\, R^{(\mathrm{F})}_{N}(-\lambda^2) \tx{\sum_{n,\ell_n}} \big(\tau^{(\ell_n)}_{n}\big)^{*} q_{n}^{(\ell_n)}.
}
Notice that a direct computation gives $(H^{(\mathrm{F})}_{N} + \lambda^2)\, T_{N}^{*}(-\lambda^2) = T_{N}(-\lambda^2)\,(H^{(\mathrm{F})}_{N} + \lambda^2)$, which in turn implies
	\begin{equation*}
		T_{N}^{*}(-\lambda^2)\,R^{(\mathrm{F})}_{N}(-\lambda^2) = R^{(\mathrm{F})}_{N}(-\lambda^2)\, T_{N}(-\lambda^2)	\,.
	\end{equation*}
Taking this into account and using \eqref{eq: tauTstar}, for any $\phi \in L^2(\mathbb{R}^2)$ we obtain
\bmln{
	\braketl{\mathcal{F}_{\lambda} \mathbf{q}}{ \phi} 
	= \tx{\sum_{n,\ell_n}}\! \braketl{R^{(\mathrm{F})}_{N}(-\lambda^2) T_N(-\lambda^2) \big(\tau^{(\ell_n)}_{n}\big)^{*} q_{n}^{(\ell_n)}}{ \phi}
	= \tx{\sum_{n,\ell_n}} \big({q_{n}^{(\ell_n)}}\big)^* \, \tau^{(\ell_n)}_{n} T^{*}_N(-\lambda^2) R^{(\mathrm{F})}_{N}(-\lambda^2) \phi \\
	= \tx{\sum_{n,\ell_n}} \big({q_{n}^{(\ell_n)}}\big)^*\! \braketl{e^{- i \mathbf{S}_{n}(\mathbf{x}_n) \cdot (\mathbf{x} - \mathbf{x}_n)} P_n\, G_{\lambda,n}^{(\ell_n)}}{ R^{(\mathrm{F})}_{N}(-\lambda^2) \phi } ,
}
which entails
\begin{equation*}
\mathcal{F}_{\lambda} \mathbf{q} = R^{(\mathrm{F})}_{N}(-\lambda^2)\; \tx{ \sum_{n,\ell_n}} e^{- i \mathbf{S}_{n}(\mathbf{x}_n) \cdot (\mathbf{x} - \mathbf{x}_n)} P_n\, G_{\lambda,n}^{(\ell_n)} q_{n}^{(\ell_n)} .
\end{equation*}
This proves \eqref{eq: HFFq} and the fact that $ \mathcal{F}_{\lambda} \mathbf{q} $ belongs to the domain of the Friedrichs realization. The identity \eqref{eq: tauFq} follows straightforwardly from \eqref{eq: tauTstar} and \eqref{eq: defFlam}.
\end{proof}

In view of \cref{lemma: FdomH}, it is natural to wonder how the Hamiltonian operators $H_{N}^{(B)}$ and $H_{N}^{(\Theta)}$ are related.

\begin{proof}[Proof of \cref{prop:BTheta}]
We prove the thesis showing that $\dom \big(H_{N}^{(\Theta(B))}\big) = \dom \big(H_{N}^{(B)}\big)$, where $ \Theta(B) $ is the map \eqref{eq: TeB}, and $H_{N}^{(\Theta(B))} \psi = H_{N}^{(B)} \psi$, for any $\psi \in \dom \big(H_{N}^{(\Theta(B))}\big)$. 
On one hand, by  \cref{cor:domHbeta}, for any $\psi \in \dom \big(H_{N}^{(B)}\big)$ there exist $\phi_{\lambda} \in \dom\big(H_{N}^{(\mathrm{F})}\big)$ and $\mathbf{q} \in \mathbb{C}^{2N}$ such that
	\begin{gather*}
		\psi = \phi_{\lambda} + \mbox{$\sum_{n,\ell_n}$}\, q^{(\ell_n)}_{n}\, e^{-i \mathbf{S}_{n}(\mathbf{x}_n) \cdot (\mathbf{x} - \mathbf{x}_n)}\, \xi_{n}\, G_{\lambda,n}^{(\ell_n)}\,; \\
		\tau_{m}^{(\ell_m)}\phi_{\lambda} = \sum_{n,\ell_n} \lf[B^{(\ell_m \ell_n)}_{m\,n} + \tfrac{\pi\,\lambda^{2|\ell_n + \alpha_n|}}{ 2\sin(\pi\alpha_n)}\,\delta_{mn}\,\delta_{\ell_n \ell_n^{\prime}} \ri] q^{(\ell_n)}_{n}\,; \\
		\big( H_{N}^{(B)} + \lambda^2\big) \psi  
		= \big( H_{N}^{(\mathrm{F})} + \lambda^2\big) \phi_{\lambda}
			+ \tx\sum_{n,\ell_n}\! q^{(\ell_n)}_{n}\, e^{-i \mathbf{S}_{n}(\mathbf{x}_n) \cdot (\mathbf{x} - \mathbf{x}_n)}\, P_{n}\,G^{(\ell_n)}_{\lambda,n}\,.
	\end{gather*}	
On the other hand, for any $\psi \in \dom \big(H_{N}^{(\Theta)}\big)$, \cref{thm: extRF} yields that $ \psi = \varphi_{\lambda} + \mathcal{G}(-\lambda^2) \mathbf{q} $, for some suitable $\varphi_{\lambda} \in \dom\big(H_{N}^{(\mathrm{F})}\big)$ and $\mathbf{q} \in \mathbb{C}^{2N}$ with
	\begin{gather*}
		\tau_{m}^{(\ell_m)} \varphi_{\lambda} = \tx\sum_{n,\ell_n} \big[\Theta_{mn}^{\ell_m \ell_n} + \Lambda_{mn}^{\ell_m \ell_n}(-\lambda^2)\big] q_{n}^{(\ell_n)} \,;  \\
		\big(H_{N}^{(\Theta)} + \lambda^2 \big) \psi = \big(H_{N}^{(\mathrm{F})} + \lambda^2\big) \varphi_\lambda\,.
	\end{gather*}
In view of \cref{lemma: FdomH}, we may set
	\begin{equation}\label{eq: phiF}
		\phi_{\lambda} = \varphi_{\lambda} - \mathcal{F}_{\lambda} \mathbf{q}\,.
	\end{equation}
Then, using the boundary trace operator \eqref{eq: deftau}, together with the identities \eqref{eq: expLambda} and \eqref{eq: tauFq}, we deduce
	\bmln{
		0 = \tau_{m}^{(\ell_m)}\phi_{\lambda} - \tau_{m}^{(\ell_m)}\varphi_{\lambda} + \tau_{m}^{(\ell_m)} \mathcal{F}_{\lambda} \mathbf{q} \\
		= \tx\sum_{n,\ell_n'} \lf[B^{(\ell_m \ell_n')}_{m\,n} + \tfrac{\pi\,\lambda^{2|\ell_n + \alpha_n|}}{ 2\sin(\pi\alpha_n)}\,\delta_{mn}\,\delta_{\ell_m \ell_n^{\prime}} - \Theta_{mn}^{\ell_m \ell_n'} - \Lambda_{mn}^{\ell_m \ell_n'}(-\lambda^2) \ri. \\
			\lf. + \braketr{e^{- i \mathbf{S}_{m}(\mathbf{x}_m) \cdot (\mathbf{x} - \mathbf{x}_m)} P_m\, G_{\lambda,m}^{(\ell_m)}}{ \big[1 \!+\! T^{*}_N(-\lambda^2)\big]^{-1} e^{-i \mathbf{S}_{n}(\mathbf{x}_n) \cdot (\mathbf{x} - \mathbf{x}_n)} \xi_{n} G^{(\ell_n^{\prime})}_{\lambda,n} } \ri] q^{(\ell_n^{\prime})}_{n} \\
		= \tx\sum_{n,\ell_n^{\prime}} \lf[
				B^{(\ell_m \ell_n')}_{m\,n} - \Theta_{mn}^{\ell_m \ell_n'} + \tfrac{\pi\,\lambda_0^{2|\ell_n + \alpha_n|}}{ 2 \sin(\pi \alpha_n)}\,\delta_{mn}\delta_{\ell_m \ell_n^{\prime}} \ri. \\
				\lf. + \braketr{e^{- i \mathbf{S}_{m}(\mathbf{x}_m) \cdot (\mathbf{x} - \mathbf{x}_m)} P_m\, G_{\lambda_0,m}^{(\ell_m)}}{ \big[1 \!+\! T^{*}_N\big(-\lambda_0^2\big)\big]^{-1}  e^{-i \mathbf{S}_{n}(\mathbf{x}_n) \cdot (\mathbf{x} - \mathbf{x}_n)} \xi_{n} G^{(\ell_n^{\prime})}_{\lambda_0,n} }
				\ri] q^{(\ell_n^{\prime})}_{n} \,.
	}
In view of the arbitrariness of the charge $ \mathbf{q} \in \mathbb{C}^{2N}$, the above condition can be fulfilled only by fixing the Hermitian matrix $\Theta$ as in \eqref{eq: TeB}.
To say more, with the position \eqref{eq: phiF}, by means of \eqref{eq: HFFq} we obtain
	\bmln{
		\big( H_{N}^{(B)} + \lambda^2\big) \psi  
		= \big( H_{N}^{(\mathrm{F})} + \lambda^2\big) \varphi_{\lambda} 
			- \big( H_{N}^{(\mathrm{F})} + \lambda^2\big)\mathcal{F}_{\lambda} q
			+ \tx\sum_{n,\ell_n}\! q^{(\ell_n)}_{n}\, e^{-i \mathbf{S}_{n}(\mathbf{x}_n) \cdot (\mathbf{x} - \mathbf{x}_n)}\, P_{n}\,G^{(\ell_n)}_{\lambda,n} \\
		= \big( H_{N}^{(\mathrm{F})} + \lambda^2\big) \varphi_{\lambda} 
		= \big(H_{N}^{(\Theta(B))} + \lambda^2 \big) \psi\,,
	}
which concludes the proof.
\end{proof}

\subsection{Completion of the proofs}

We are finally in position to prove the main results about the singular realizations of the Schr\"odinger operator $ H_N $. Let us first complete the proof of \cref{cor:domHbeta}.
	
\begin{proof}[Proof of \cref{cor:domHbeta} - Part II]
It just remains to show that the family of operators $ H_N^{(B)} $, with $B \in M_{2N,\,\mathrm{Herm}}(\mathbb{C}) \cup \lf\{ \infty \ri\}$, exhausts all admissible self-adjoint realizations of the Schr\"odinger operator $H_{N}$. This is however a straightforward consequence of \cref{thm: extRF}, \cref{prop:BTheta} and \cite[Theorem 3.1 and Corollary 3.2]{Po08}. (Arguably, the same conclusion could also be derived adapting the arguments presented in \cite[Theorem 2.5]{BG85}.)
\end{proof}	

\begin{proof}[Proof of \cref{pro: spectrum}]
	\cref{thm: extRF} and \cref{prop:BTheta} ensure that the perturbed resolvent $R^{(B)}_{N}(z)$ is a finite range perturbation of the Friedrichs analogue $R^{(\mathrm{F})}_{N}(z)$. Then, $R^{(B)}_{N}(z) - R^{(\mathrm{F})}_{N}(z)$ is trace class and the thesis follows by Kuroda-Birman theorem \cite[Thm. XI.9]{RS81} (see also \cite{Ka57}).
	
	To say more, from \cite[Theorem 3.4]{Po04} we infer that the map $ \mathbf{q} \in \mathbb{C}^{2N} \mapsto \mathcal{G}(-\lambda^2)\,\mathbf{q} \in L^2(\mathbb{R}^{2})$ is a bijection from $ \ker \big[\Theta(B) + \Lambda(-\lambda^2) \big]$ onto $\ker \big(H_{N}^{(B)} + \lambda^2 \big)$. In other words, for any given negative eigenvalue $-\lambda^{2} \in \sigma_{\mathrm{disc}}\big(H_{N}^{(B)}\big)$, $\mathcal{G}(-\lambda^2)\,\mathbf{q} $ is an associated eigenvector for any $ \mathbf{q} \in \ker \big[\Theta(B) + \Lambda(-\lambda^2) \big]$.	
\end{proof}

\begin{proof}[Proof of \cref{cor:ScatteringHbeta}]
	The result follows from \cref{thm: OmHNH0} and the already mentioned Kuroda-Birman theorem, exploiting once more that $R^{(B)}_{N}(z)$ is a finite range perturbation of  $R^{(\mathrm{F})}_{N}(z)$.
\end{proof}

\end{document}